%% file: joins.tex
\documentclass[sigconf]{acmart}

\acmISBN{}

\usepackage{listings}
\usepackage{xspace}
\usepackage[dvipsnames]{xcolor}
\usepackage[skip=0pt]{subcaption}
\usepackage{graphicx}
\usepackage{caption}
\usepackage{array, makecell}
\usepackage{pgfplots}
\usepackage[ruled,vlined,linesnumbered]{algorithm2e} 
\usepackage{enumitem}
\usepackage{amsthm}
\usepackage{microtype}
\usepackage{booktabs}
\usepackage{hyperref}
\usepackage{tcolorbox}

\hypersetup{
    colorlinks,
    linkcolor={red!50!black},
    citecolor={blue!50!black},
    urlcolor={blue!80!black}
}
\def\BibTeX{{\rm B\kern-.05em{\sc i\kern-.025em b}\kern-.08em
    T\kern-.1667em\lower.7ex\hbox{E}\kern-.125emX}}
  
\begin{document}

\title{Accelerating Approximate Analytical Join Queries over Unstructured Data 
with Statistical Guarantees}

\author{Yuxuan Zhu}
\affiliation{%
  \institution{University of Illinois Urbana-Champaign}
  \city{Urbana}
  \country{USA}
}
\email{yxx404@illinois.edu}

\author{Tengjun Jin}
\affiliation{%
  \institution{University of Illinois Urbana-Champaign}
  \city{Urbana}
  \country{USA}
}
\email{tengjun2@illinois.edu}

\author{Chenghao Mo}
\affiliation{%
\institution{University of Illinois Urbana-Champaign}
\city{Urbana}
\country{USA}
}
\email{cmo8@illinois.edu}

\author{Daniel Kang}
\affiliation{%
\institution{University of Illinois Urbana-Champaign}
\city{Urbana}
\country{USA}
}
\email{ddkang@illinois.edu}

\input{macros.tex}

\input{tex/abstract.tex}

\maketitle

\input{tex/intro}
\input{tex/query.tex}
\input{tex/use_case.tex}
\input{tex/background.tex}
\input{tex/algorithms.tex}
\input{tex/theories.tex}
\input{tex/eval}
\input{tex/related_work}
\input{tex/conclusion}

\bibliographystyle{ACM-Reference-Format}
\bibliography{joins}

\cleardoublepage
\appendix
\input{tex/appendix/cost_analysis.tex}

\section{Theoretical Justification}

In this section, we prove that \bas converges to the optimal allocation at the 
rate of $\mathcal{O}(1/\sqrt{b_1})$ and asymptotically outperforms or matches 
the standalone sampling algorithm. In addition, we prove that \bas be extended to accelerate approximate selection join queries with recall guarantees.

\input{tex/appendix/notations.tex}
\input{tex/appendix/setup.tex}
\input{tex/appendix/optimal.tex}

\input{tex/appendix/comparison.tex}

\input{tex/appendix/selection.tex}
\input{tex/appendix/bootstrap.tex}
\input{tex/appendix/query_semantics.tex}

\end{document}

%% file: macros.tex
\newcommand*\LSTfont{\small\ttfamily\SetTracking{encoding=*}{-60}\lsstyle}
\definecolor{defaultcolor}{HTML}{1f77b4}
\newcommand{\syntaxhighlight}[1]{\textbf{\textcolor{defaultcolor}{#1}}}
\lstset{
  frame=tb,
  language=SQL,
  basicstyle=\footnotesize\ttfamily,
  keywordstyle=\syntaxhighlight,
  commentstyle=\textcolor{gray},
  breaklines=true,
  sensitive=true
}

\newcommand\join{\texttt{JOIN}\xspace}
\newcommand{\todo}[1]{\textcolor{red}{#1}}
\newcommand{\mathbbm}[1]{\text{\usefont{U}{bbm}{m}{n}#1}} 
\newcommand{\revise}[1]{\textcolor{blue}{#1}}
\newcommand\ie{i.e.}
\newcommand\eg{e.g.}
\newcommand\bas{\textsc{BaS}\xspace}
\newcommand\name{\textsc{JoinML}\xspace}
\newcommand\wwj{\textsc{WWJ}\xspace}
\newcommand\block{\textsc{Blocking}\xspace}
\newcommand\abae{\textsc{Abae}\xspace}
\newcommand\blazeit{\textsc{BlazeIt}\xspace}
\newcommand\uniform{\textsc{Uniform}\xspace}
\newcommand\joinmlbas{\textsc{JoinML-BaS}\xspace}
\newcommand\joinmlwwj{\textsc{JoinML-WWJ}\xspace}
\newcommand\joinmlblock{\textsc{JoinML-block}\xspace}
\newcommand\namenoci{\textsc{Block-NoCI}\xspace}
\newcommand\namefixci{\textsc{JoinML-Fixed}\xspace}
\newcommand\nameoptci{\textsc{JoinML-Adapt}\xspace}
\newcommand\nameis{\textsc{JoinML-IS}\xspace}
\newcommand\aavg{\texttt{AVG}\xspace}
\newcommand\acount{\texttt{COUNT}\xspace}
\newcommand\asum{\texttt{SUM}\xspace}
\newcommand\amin{\texttt{MIN}\xspace}
\newcommand\amax{\texttt{MAX}\xspace}
\newcommand\amedian{\texttt{MEDIAN}\xspace}
\newcommand\atopk{\texttt{TopK}\xspace}
\newcommand\agroupby{\texttt{GroupBy}\xspace}
\newcommand\dquora{Quora\xspace}
\newcommand\dcompany{Company\xspace}
\newcommand\dwebmasters{Webmasters\xspace}
\newcommand\dcityhuman{CityHuman\xspace}
\newcommand\droxford{Roxford\xspace}
\newcommand\dflickr{Flickr30K\xspace}
\newcommand\dveri{VeRi\xspace}
\newcommand{\review}[1]{}

\newcommand{\minihead}[1]{{\vspace{.2em}\noindent\textbf{#1.} }}
\newenvironment{denseitemize}{
\begin{itemize}[topsep=2pt, partopsep=0pt, leftmargin=1.5em]
  \setlength{\itemsep}{2pt}
  \setlength{\parskip}{0pt}
  \setlength{\parsep}{0pt}
}{\end{itemize}}

%% file: tex/abstract.tex
\begin{abstract}
Analytical join queries over unstructured data are increasingly prevalent in 
data analytics. Applying machine learning (ML) models to label every pair 
in the cross product of tables can achieve state-of-the-art accuracy, but 
the cost of pairwise execution of ML models is prohibitive. Existing algorithms, 
such as embedding-based blocking and sampling, aim to reduce this cost. 
However, they either fail to provide statistical guarantees (leading to errors 
up to 79\% higher than expected) or become as inefficient as uniform sampling.

We propose blocking-augmented sampling (\bas), which simultaneously achieves 
statistical guarantees and high efficiency. \bas optimally orchestrates 
embedding-based blocking and sampling to mitigate their respective 
limitations. Specifically, \bas allocates data tuples in the cross product into two 
regimes based on the failure modes of embeddings. In the regime of 
false negatives, \bas uses sampling to estimate the result. In the regime of 
false positives, \bas applies embedding-based blocking to improve 
efficiency. To minimize the estimation error given a budget for ML executions, 
we design a novel two-stage algorithm that adaptively allocates the 
budget between blocking and sampling. Theoretically, we prove that \bas 
asymptotically outperforms or matches standalone sampling. On real-world 
datasets across different modalities, we show that \bas provides valid 
confidence intervals and reduces estimation errors by up to 19$\times$, compared
to state-of-the-art baselines.
\end{abstract}

%% file: tex/intro.tex
\section{Introduction}\label{sec:intro}
The increasing capability of machine learning (ML) and large language models 
(LLMs) allows data analysts to analyze semantically related objects across 
multiple unstructured datasets automatically \cite{crossencoder-1,
crossencoder-2,crossencoder-3,veri1,veri2,model-osnet,model-transreid,
model-blip,clip}. For example, a business investor can analyze differences in 
stock prices among related companies \cite{usecase-company} by computing 
the \aavg over a join of two sets of company records with a join condition based 
on the company description.
\begin{lstlisting}[keywords={SELECT,AVG,FROM,JOIN,ON,NL}]
SELECT AVG(c1.price - c2.price) FROM companies AS c1 
JOIN companies AS c2 ON NL(`Company: {c1.description} and Company: {c2.description} build similar products')
\end{lstlisting}

To accurately process queries whose join conditions involve arbitrary 
semantics, an analyst can manually label or apply an ML model to each pair of 
records. This pairwise evaluation method has demonstrated high accuracy across 
various applications, including small language models for entity 
matching \cite{dataset-company-work2,paganelli2022analyzing,teong2020schema} 
and text retrieval \cite{crossencoder-3,crossencoder-1,crossencoder-2,
zhuang2023rankt5}, vision-language models for image-text retrieval 
\cite{model-blip,vendrow2025inquire,li2021align}, and large language models for 
challenging data wrangling tasks \cite{em-llm1,peeters2023entity,wang2024match,
patel2024lotus,liu2024optimizing}. We refer to this pairwise 
evaluation method as the \textit{Oracle}. Unfortunately, the Oracle 
method can be prohibitively expensive since the number of ML model invocations 
grows at least quadratically with the number of data records and 
exponentially with the number of tables to be joined.\footnote[1]{For 
example, running the Oracle on an entity matching dataset, the \dcompany 
dataset from the Magellan benchmark \cite{usecase-em}, costs \$709K with GPT-4o 
or \$43K with GPT-4o mini. The pairwise evaluation results in 567 million 
tokens. We estimated the monetary cost according to 
the state-of-the-art prompt \cite{em-llm1} and the pricing from 
OpenAI \cite{openai-pricing}.}

To mitigate these high costs, researchers have proposed two approximation 
methods that reduce the number of ML model invocations: 
\textit{embedding-based blocking} and \textit{sampling}. However, when 
used to process analytical join queries over unstructured data, 
embedding-based blocking can lead to arbitrary errors, while sampling 
is inefficient.

\minihead{Embedding-based Blocking} In efficient entity matching systems, 
embedding-based blocking first uses embedding similarities to filter 
out data pairs and then applies ML models for pairwise evaluation 
\cite{blocking2,dataset-company-work1,dataset-company-work2,deep-em}. In 
contrast to the Oracle method that applies ML models to each data 
pair, embedding-based blocking applies embedding models on each data record and
only uses the Oracle on a limited number of data pairs, which effectively 
reduces the cost of executing expensive ML models.

However, embedding models can be unreliable and are not guaranteed to 
precisely capture the semantics of interest 
\cite{muennighoff2023mteb,thakurbeir,huang2023survey,li2023one}. 
Consequently, two data records that satisfy the join condition may have 
a low embedding similarity (\ie, false negatives), causing embedding-based 
blocking to filter out such a pair of records. In this case, when used to compute 
data aggregations, embedding-based blocking skews aggregation results toward 
data with high embedding similarity. Thus, 
embedding-based blocking does not provide statistical guarantees. Empirically, 
we find that blocking can result in errors up to 79\% higher than expected even 
when the threshold of filtering is calibrated on a validation dataset (\S 
\ref{subsec:stats-analysis}).

\minihead{Sampling} To achieve statistical guarantees, researchers have used 
sampling techniques to process analytical join queries with approximate 
semantics \cite{structured-join-synopsis,correlated-sampling,ripplejoin}. With 
independently and identically distributed samples, we can calculate confidence 
intervals (CIs) to provide statistical guarantees. However, due to the sparsity of 
positive pairs (Figure \ref{fig:datasets}, \S \ref{subsec:setup}), uniform 
sampling can be inefficient, requiring a large sample size to achieve a
small error. In Section \ref{subsec:e2e}, we show that achieving an average error
of 5\% for the \dcompany dataset requires uniformly sampling $1.4\times10^7$ 
data pairs and processing them with the Oracle
(Figure~\ref{fig:rrmse-company}).\footnote[2]{On average, $1.4\times10^7$ pairs 
contains 20B tokens. Processing them would cost \$26K with GPT-4o or \$1.6K with 
GPT-4o-mini.} 

We propose \textit{Weighted Wander Join} (\wwj) to improve the efficiency of 
sampling using embeddings. \wwj extends Wander Join 
\cite{structured-join-index}, an efficient join algorithm for structured data, 
with an approximate index computed from embeddings. Intuitively, \wwj 
follows the idea of importance sampling to reduce sampling variance by 
upweighting the probability of sampling data pairs with high embedding
similarities \cite{kloek1978bayesian}. Nevertheless, the 
effectiveness of \wwj depends heavily on the precision of the embeddings. We 
empirically show that \wwj becomes as inefficient as uniform sampling, if the 
embedding similarities contain many false positives (Figure \ref{fig:rrmse}, 
\S \ref{subsec:e2e}).

\minihead{Blocking-augmented Sampling (\bas)}
To address the limitations of standard statistical methods over 
imperfect embeddings, we propose \textit{Blocking-augmented Sampling}. 
Our core novelty lies in the dynamic orchestration of two techniques 
based on the specific failure modes of embedding models: false negatives 
(which break blocking) and false positives (which break importance sampling).

\bas addresses this by treating the blocking threshold not as a static 
filter, but as a dynamic decision boundary that splits the search space into two 
regimes: (1) a blocking regime (high similarity, high false positive risk), where we 
deterministic execute the Oracle to eliminate variance caused by frequent 
non-matches; (2) a sampling regime (low similarity, high false negative risk), 
where we apply importance sampling to correct the bias introduced by blocking. 
Unlike standard stratified sampling where strata are fixed, \bas solves a 
complex optimization problem to determine the optimal boundary between these 
regimes.


Given a fixed budget for Oracle executions, it is challenging to 
optimally allocate the data pairs into two regimes and assign the budget 
between blocking and \wwj. Empirically, the optimal allocation that minimizes 
the estimation error varies significantly across different datasets. Using the 
optimal allocation can result in estimation errors that are up to 99\% smaller 
than the worst allocation (Figure \ref{fig:optimal}, \S \ref{subsec:optimality}). 
Theoretically, we show that the optimal budget allocation depends on the 
sampling variance of \wwj on different parts of the dataset. Unfortunately, the 
sampling variance is unknown unless the Oracle is executed on the entire 
dataset.

To automatically determine the optimal allocation, we propose an adaptive 
allocation algorithm via two-stage stratified sampling. Initially, \bas 
stratifies data tuples based on the embedding similarity. In the first stage, 
\bas applies pilot sampling \cite{pilot-sampling} to estimate the sampling variance of each stratum. 
Based on these estimates, \bas determines the optimal allocation that minimizes 
the overall estimated sampling variance. In the second stage, \bas executes \wwj 
and blocking according to the allocated regime and budget. We prove that \bas 
not only converges to the optimal budget allocation but also outperforms or 
matches standalone \wwj asymptotically. Finally, to obtain valid CIs as 
statistical guarantees, \bas adopts bootstrapping that estimates statistical 
distributions via resampling \cite{boostrap-comparison,pol2005relational}. We 
theoretically justify the validity and empirically 
demonstrate the coverage of the derived CIs.

We implement \wwj and \bas in a prototype system called \name to accelerate 
approximate analytical join queries over unstructured data. We evaluate \name 
using six real-world datasets and two synthetic datasets, including 
text, image, and bimodal data. Our results reveal that \bas reduces the root mean squared error by up to 19$\times$ 
compared to the uniform sampling, blocking and prior systems for semantic 
operators. We also show that the CIs provided by \bas achieve the expected 
coverage. In additional to analytical queries, we show that \bas can improve the 
precision of selection queries over unstructured data by up to 69\%, compared to 
baselines. In our ablation study, we show that the allocation algorithm closely 
approaches optimal performance. Our contributions are as follows:
\begin{enumerate}[leftmargin=*]
    \item We develop \name algorithms, \wwj and \bas, that can accurately and 
    efficiently answer approximate analytical join queries with statistical 
    guarantees.
    \item We prove the convergence rate of \bas to the optimal allocation and 
    that \bas outperforms or matches standalone sampling asymptotically.
    \item We implement and evaluate \name in real-world text, images, and 
    bimodal datasets, reducing root mean squared errors by up to 19$\times$ compared to state-of-the-art 
    baselines.
\end{enumerate}

%% file: tex/query.tex
\section{Query Syntax and Semantics}
\begin{figure}[t]
\centering
\lstset{language=SQL}
\begin{lstlisting}[keywords={SELECT,AVG,FROM,JOIN,ON,NL,SUM,COUNT,AND,ORACLE,BUDGET,WITH,PROBABILITY}]
SELECT {AVG | SUM | COUNT} {field | EXPR(field)}
FROM table_name1 JOIN table_name2 JOIN ...
ON NL(join_condition) AND ...
ORACLE BUDGET b WITH PROBABILITY p
\end{lstlisting}
\caption{Query syntax of \name.}
\label{fig:syntax}
\end{figure}

\name accelerates analytical queries involving joins and similarity-based join 
conditions. Currently, we support linear aggregates, including \acount, \asum, 
and \aavg. We illustrate the query syntax in Figure \ref{fig:syntax}. Besides a 
SQL query, users can specify an execution budget of Oracle and the coverage probability 
for the CI. To understand the semantics of a \name query, we introduce the 
concepts of Oracle and embedding similarity.

\minihead{Oracle} We define Oracle as a method that operates on a pair of 
records to determine if they satisfy the join condition. We assume that the 
Oracle is the state-of-the-art method that outputs the ground-truth label. For 
example, the Oracle method can be a foundation model that decides if two 
sentences contain the same entity \cite{em-llm1}, a fine-tuned model that 
accesses whether two images show the same car \cite{veri1,veri2}, or a human 
labeler for particularly challenging tasks 
\cite{ziems2024can,demszky2020learning,pilehvar2018wic}. 

\minihead{Embedding Similarity} We define the similarity between two 
records $t_1$ and $t_2$ as the cosine similarity of their embedding vectors:
\begin{equation*}
    \mathrm{sim}\left(E(t_1), E(t_2)\right) 
    = \frac{E(t_1)^\top  \cdot E(t_2)}{|E(t_1)|\cdot|E(t_2)|}
\end{equation*}
The embedding vectors $E(\cdot)$ can be obtained by applying a pre-trained 
embedding model on each data record individually. Therefore, compared to Oracle 
methods that are executed on each pair of records, using embedding similarity 
incurs significantly lower computational costs. However, since the embedding 
model does not consider the joint semantic across data records, the embedding 
similarity can be less accurate than the Oracle.

\minihead{Query Semantics} A \name query will output an estimate $\hat\mu$ of the 
answer and a CI $[l, u]$. Suppose $\mu$ is the result of processing the query 
exhaustively via the Oracle. With a budget $b$ and a probability $p$, we provide 
the following guarantees: (1) The Oracle will not be executed on more than $b$ 
pairs; (2) $\mathbb{P}\left[l \le \mu \le u\right] \ge p$. Our query semantics 
and guarantees are similar to prior AQP over unstructured data without joins 
\cite{blazeit,supg,abae}.

%% file: tex/use_case.tex
\section{Use Cases} \label{sec:use-cases}
Analytical queries with joins are increasingly prevalent in analytics over 
unstructured data, such as plagiarism analysis \cite{usecase-plagiarism,
usecase-plagiarism1,usecase-plagiarism2}, business analysis \cite{usecase-company,
usecase-company1}, and traffic condition monitoring 
\cite{application-vehicle-reid,application-vehicle-reid1}. In addition to the
business analysis query introduced in Section \ref{sec:intro}, we 
demonstrate two more typical use cases. In the example queries, we can use an 
oracle budget of 1,000,000 and a confidence of 0.95.

\minihead{Plagiarism Analysis} 
Sentence-level analysis is a common approach in plagiarism detection that 
involves comparing each sentence of an input article with every sentence in a 
reference collection \cite{usecase-plagiarism,white2004sentence}. The plagiarism 
score of the article can be calculated as the percentage of sentences that are 
paraphrased from the reference collection. To compute the percentage, we need to 
join the sentences of the article with those in the reference collection and 
count the number of sentence pairs that are considered paraphrased:
\begin{lstlisting}[morekeywords={SELECT,AVG,FROM,JOIN,ON,NL},sensitive=true]
SELECT COUNT(*) FROM article JOIN db ON NL(`{article.sentence} is paraphrased from {db.sentence}.')
\end{lstlisting}


\minihead{Traffic Analysis} 
An urban planner may be interested in analyzing the average traffic time across 
a road segment by calculating the time for the same vehicle to travel from one 
end to the other \cite{application-vehicle-reid}. To find the time difference, 
the urban planner needs to join two videos and identify common vehicles:
\begin{lstlisting}[keywords={SELECT,AVG,FROM,JOIN,ON,NL}]
SELECT AVG(video1.ts - video2.ts) FROM video1 JOIN video2 ON NL(`Frame {video1.frame} and Frame {video2.frame} contains the same car.')
\end{lstlisting}

\minihead{Join Order Optimization} 
Optimizing multi-way semantic joins requires accurate cardinality estimates to 
determine efficient execution orders. However, for unstructured predicates 
requiring ML evaluation, deriving these statistics is prohibitively expensive. 
\bas resolves this by efficiently estimating \acount with tight confidence 
intervals using a minimal budget. These estimates allow the query optimizer to 
reliably select the optimal join order, preventing performance regressions 
caused by poor planning without requiring full pairwise evaluation (Section \ref{sec:eval-app}).

%% file: tex/background.tex
\section{Background} \label{sec:background}
Existing algorithms in approximate query processing (AQP) and entity matching 
(EM) can be applied to accelerate analytical join queries over unstructured data. 
In this section, we introduce two typical algorithms: sampling and blocking.

\begin{figure}[t!]
    \centering
    \includegraphics[width=0.8\linewidth]{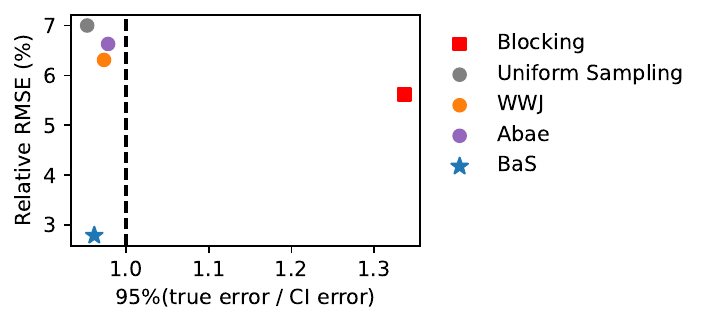}
    \caption{Given a \acount query with an Oracle budget of 5,000,000 and a 
    probability of 95\%, sampling algorithms leads to high estimation error while 
    blocking fails to achieve statistical guarantees. Achieving statistical
     guarantees means that the 95th percentile of the true error is less than the 
    error bounded by a 95\% CI (to the left of the dashed line).}
    \label{fig:background}
\end{figure}

\subsection{Sampling} \label{subsec:bg-sampling}
Sampling techniques are widely used in AQP to reduce 
query costs while providing CIs as statistical guarantees. Prior work has developed 
efficient sampling algorithms for analytical join queries with join keys, 
such as correlated sampling \cite{correlated-sampling}, universe sampling 
\cite{structured-join-sampling}, and Wander Join \cite{structured-join-index}. 
We can apply these algorithms to accelerate joins over unstructured data.

\minihead{Limitations}
Existing efficient sampling algorithms for joins over structured data requires 
exact indices or hash functions on the join keys. However, there is no join key 
for semantic join queries over unstructured data. To obtain semantic indices or 
hash functions, one way is applying the Oracle method across all
data tuples, which is prohibitively expensive. Without indices or hash 
functions, we can only use uniform sampling, which is inefficient because 
positive pairs are often sparse in the cross product of tables. Empirically, 
given a fixed error budget, uniform sampling can lead to 2.6$\times$ higher 
error than our proposed method (Figure \ref{fig:background}).

\begin{algorithm}[t]

    \DontPrintSemicolon
    \SetKwInOut{Input}{Input}\SetKwInOut{Output}{Output}
    \SetKwComment{SideComment}{\textcolor{blue}{//~}}{}
    \SetKwComment{CommentStyle}{\textit{// }}{}
    \SetKwComment{LineComment}{\textcolor{blue}{\textit{/* }}}{\textcolor{blue}{  */}}
    \SetFuncSty{text}
    \SetKwFunction{FCI}{StandardCI}
    \SetKwFunction{FSample}{Sample}
    \SetKwProg{Fn}{Function}{:}{}

    \Input{input tables $T_1, T_2$, Oracle $O(\cdot)$,
    embedding $E(\cdot)$, attribute $g(\cdot)$, embedding similarity 
    $\mathrm{sim}(\cdot, \cdot)$, similarity score threshold $\tau$}

    $S \leftarrow \emptyset$\;
    \For{each data record $t_1$\ \text{in}\  $T_1$}{
        \For{each data record $t_2$\ \text{in}\ $T_2$}{
            $\tau_i = \mathrm{sim}\left(E(t_1), E(t_2)\right)$\;
        }
        \If{$\tau_i \ge \tau$}{
            $S \leftarrow S \cup \{(t_1, t_2)\}$ \tcp{\textcolor{blue}{Forming data blocks}}
        }
    }
    $X \leftarrow \{O(s) \cdot g(s): s \in S\}$\;
    \Return{$\sum_{i=1}^{|X|}X_i$}\;

    
    \caption{Emedding-based blocking for a \asum aggregation over a join of two tables.}
    \label{alg:blocking-naive}
    
\end{algorithm}

\subsection{Embedding-based Blocking} \label{subsec:bg-blocking}
To reduce the number of pairwise comparisons, traditional blocking 
algorithms typically uses a blocking phase to prune the search space 
\cite{blocking1,li2020survey,christophides2015entity,
fellegi1969theory,em-expert2,em-expert5,bilenko2006adaptive,
michelson2006learning,papadakis2012blocking}. While traditional blocking 
algorithms cluster records into disjoint buckets (blocks), modern deep learning 
approaches use embedding-based blocking \cite{blocking1,blocking2,
dataset-company-work1,dataset-company-work2}. Throughout this paper, blocking 
refers to this embedding-based filtering with thresholds. It implicitly defines 
a block of candidate pairs as the set of tuples whose embedding similarities 
exceed a threshold $\tau$. By filtering out pairs below $\tau$ (Line 5 in 
Alg.~\ref{alg:blocking-naive} and \ref{alg:blocking-noci}), the algorithm 
effectively blocks unrelated data from expensive pairwise comparison.
We show the procedure for estimating \asum in Alg.~\ref{alg:blocking-naive}, 
where the Oracle is executed only on the candidate block.

In practice, we often have a budget for Oracle invocations and may need
to join more than two tables. To address these, we present a direct extension of 
the original embedding-based blocking as a baseline in Algorithm 
\ref{alg:blocking-noci}. As shown, to satisfy the predefined Oracle budget, we 
perform random sampling over the resulting set of data tuples after filtering 
and then estimate the target statistic using the sample. We obtain a point 
estimate and a CI via a standard approach. Furthermore, to support $k$-table 
joins ($k \ge 2$) in general, we calculate the similarity of a $k$-tuple as the 
joint probability that the $k$ tuples match, assuming embedding similarity 
approximates the probability that two data records match.

\minihead{Limitations} Unfortunately, the predefined threshold $\tau$ does not 
guarantee zero false negatives on the evaluation dataset, even when we calibrate
it on a validation dataset. Therefore, there is no guarantee on the error of 
the point estimate or the validity of the CI. Empirically, we show that 
blocking can result in true errors higher than the errors bounded by the 
provided CIs (Figure \ref{fig:background}).

\input{tex/figures/alg_blocking.tex}

%% file: tex/figures/alg_blocking.tex
\begin{algorithm}[t]
    \DontPrintSemicolon
    \SetKwInOut{Input}{Input}\SetKwInOut{Output}{Output}
    \SetKwComment{SideComment}{\textcolor{blue}{//~}}{}
    \SetKwComment{CommentStyle}{\textit{// }}{}
    \SetKwComment{LineComment}{\textcolor{blue}{\textit{/* }}}{\textcolor{blue}{  */}}
    \SetFuncSty{text}
    \SetKwFunction{FCI}{StandardCI}
    \SetKwFunction{FSample}{Sample}
    \SetKwProg{Fn}{Function}{:}{}

    \Input{input tables $T_1, \ldots, T_k$, Oracle $O(\cdot)$,
    embedding $E(\cdot)$, attribute $g(\cdot)$, embedding similarity 
    $\mathrm{sim}(\cdot, \cdot)$, similarity score threshold $\tau$, oracle budget $b$, probability $p$}

    $D \leftarrow \mathrm{CrossProduct}(T_1, \ldots, T_k)$\;
    $S \leftarrow \emptyset$\;
    \For{each $k$-tuple $\left(t_1, \ldots, t_k\right)$\ \text{in}\  $D$}{
        $\tau_i \leftarrow \prod_{j=1}^{k-1} \mathrm{sim}\left(E(t_{j}), E(t_{j+1})\right)$\;
        \If{$\tau_i \ge \tau$}{
            $S \leftarrow S \cup \{\left(t_1, \ldots, t_k\right)\}$ \tcp{\textcolor{blue}{Forming data blocks}}
        }
    }
    \If{$|S| \le b$}{
        $X \leftarrow \{O(s) \cdot g(s): s \in S\}$\;
        \Return{$\sum_{i=1}^{|X|}X_i$}\;
    }
    \Else{
        $\tilde{S} \leftarrow \FSample(S, b)$\;
        $X \leftarrow \{O(s)\cdot g(s): s \in \tilde{S}\}$\;
        \Return{\FCI{$X, p, |S|$}}
    }
    \Fn{\FCI{$X, p, N$}}{
        $\hat\mu \leftarrow \frac{1}{|X|}\sum_{i=1}^{|X|}X_i$\; 
        $\hat\sigma \leftarrow \frac{1}{|X|-1}\sum_{i=1}^{|X|}\left(X_i - \hat\mu\right)^2$\;
        $l, u \leftarrow \hat\mu \mp z_{(1+p)/2}\cdot \frac{\hat\sigma}{\sqrt{|X|}}$\;
        \Return{$N\hat\mu, Nl, Nu$}\;
    }
    
    \caption{Embedding-based blocking for \asum with a predefined Oracle budget.}
    \label{alg:blocking-noci}
\end{algorithm}

%% file: tex/algorithms.tex
\section{Efficient \name Algorithms with Guarantees} \label{sec:alg}

In this section, we present our novel \name algorithms for analytical join 
queries over unstructured data which overcomes the limitations of baselines, 
simutaneously achieving statistical guarantees and high efficiency. Our key 
insights are:
\begin{enumerate}[leftmargin=*]
\item The sparsity of positive tuples in the cross product of tables makes 
      uniform sampling inefficient. To address it, we design a 
      non-uniform sampling algorithm (\ie, Weighted Wander Join) with the 
      embedding as the approximate index (\S \ref{subsec:wwj}).
\item False positives of the embedding similarity compromise the efficiency of 
      \wwj. Therefore, we augment \wwj with embedding-based blocking (\S 
      \ref{subsec:bas}) and adaptive data allocation 
      (\S \ref{subsec:allocation}).
\end{enumerate}
Finally, we extend our algorithms to handle selection join queries with joins 
with recall guarantees (\S \ref{subsec:selection}).

\subsection{Weighted Wander Join} \label{subsec:wwj}

\input{tex/figures/fig_wwj_demo.tex}

As we discussed in Section \ref{subsec:bg-sampling}, existing sampling 
algorithms for join queries often fall back to inefficient uniform sampling 
due to the difficulty and prohibitive costs of obtaining exact semantic indices 
or hash functions for unstructured data. For example, Wander Join, one of the 
state-of-the-art join algorithm for structured data, formulates join operations 
into random walks over the records of different join tables 
\cite{structured-join-index}. For each iteration of the Wander Join, we randomly 
walk to a record that satisfies the join conditions (Figure 
\ref{fig:wwj}a). Knowing which records can satisfy the 
join condition requires indices over the join columns. Without such indices, 
Wander Join would simply randomly choose any record in each iteration, which is 
equivalent to uniform sampling (Figure \ref{fig:wwj}b).

While exact semantic indices is expensive to compute for unstructured data,
we can use embeddings to construct approximate indices and extend Wander Join 
accordingly. Specifically, in each iteration, instead of randomly
sampling a record satisfying join conditions, we sample a record with a 
probability proportional to the embedding similarity (Figure 
\ref{fig:wwj}c). We refer to this algorithm as the 
\textit{Weighted Wander Join} (\wwj) and illustrate the procedure for a \asum 
aggregate in Algorithm \ref{alg:wj}. 
\input{tex/figures/alg_wwj.tex}

\minihead{Connection to Importance Sampling} 
We find that \wwj is an instantiation of importance sampling \cite{kloek1978bayesian}. To understand why, we consider a 
general case of $k$ join tables: $T_1, \ldots, T_k$. \wwj is equivalent to 
importance sampling with the sampling space being the set of all tuples in the 
cross product of $T_1, \ldots, T_k$. Each tuple in the sampling space 
corresponds to a complete walk from $T_1$ to $T_k$. Given a join order, the 
sampling weight of each tuple is the multiplication of embedding similarity 
of pairs:
\begin{equation*}
    W(t_1, \ldots, t_k) = \frac{1}{|T_1|}\cdot \prod_{i=2}^{k}\mathrm{sim}\left(E\left(t_{i-1}\right), 
    E\left(t_i\right)\right)
\end{equation*}
where we normalize the weight by the size of $T_1$ since $T_1$ is 
sampled uniformly. Based on the statistical equivalence between \wwj and 
importance sampling, we can obtain unbiased estimates and valid CIs in a 
standard way \cite{kloek1978bayesian}, as shown in Algorithm \ref{alg:wj}, 
Lines 9-10.

While \wwj requires computing embedding similarities at each step without
pre-built index exists, it is more efficient than importance sampling. Naive 
sampling requires enumerating the cross-product, resulting in exponential 
complexity $\mathcal{O}(\prod_{i=2}^k |T_i|)$. In contrast, \wwj decomposes the 
problem into sequential steps. In each iteration, we only compute similarities 
between the current tuple and the $N$ tuples of the next table. Consequently, 
for a budget $b$, the complexity scales as $\mathcal{O}(b \cdot \sum_{i=1}^k |T_i|)$. 
This cost is linear (or quadratic depending on $b$) with respect to table sizes, 
but crucially avoids the exponential explosion of the cross-product.

\minihead{Limitations} The statistical efficiency of \wwj depends on the 
precision of the embedding similarity. When the tuples with high similarity 
scores all satisfy the join conditions, \wwj converges to Wander Join with exact 
indices, achieving the best efficiency. However, when there are many false 
positives (\ie, tuples with high similarity scores violets join conditions), 
\wwj can repetitively sample negative pairs, leading to an efficiency comparable 
or even worse than uniform sampling. We empirically show such limitation of \wwj 
in Section \ref{sec:eval}.

We find that false positives are often unavoidable, particularly when the 
pre-trained embedding fails to accurately capture the semantics of the join 
condition. For example, a pre-trained embedding for text data can capture the
semantic characteristic. However, two records that appear semantically similar 
might not satisfy the join condition that requires two companies to build the 
same product. In Section \ref{subsec:sensitivity}, we show that such 
issues persist even in state-of-the-art embeddings 
\cite{nv-embed,stella-embedding,qwen-embedding,bge_embedding,SFR-embedding-2}.

\subsection{Blocking-augmented Sampling} \label{subsec:bas}
\label{subsec:overview}
To improve the efficiency of \wwj when there are many false positives, we propose
\textit{Blocking-augmented Sampling} (\bas), which augments \wwj with the 
blocking algorithm. Specifically, we separate the sampling space into two regimes: 
\begin{enumerate}[leftmargin=*]
    \item The \textit{sampling regime} ($D_s$), characterized by many false negatives,
    where we apply \wwj.
    \item The \textit{blocking regime} ($D_b$), characterized by many false positives,
    where we directly execute the Oracle.
\end{enumerate}

\minihead{Intuition} 
\bas mitigates the limitation of both \wwj and blocking. First, \bas directly 
applies the Oracle on the blocking regime to mitigate the high sampling variance 
due to repetitively sampling false positives. Second, \bas can provide 
statistical guarantees since \wwj does not ignore false negatives. To execute 
\wwj only on the sampling regime, we can simply set the sampling probability of 
tuples in the blocking regime to zero. Finally, we combine the results of both 
regimes to obtain a final estimate of the aggregate. 

\minihead{Formal Description} We describe the estimation procedure formally.
Suppose we have an Oracle budget $b$ and a separation of the entire sampling 
space $D=D_s \cup D_b$. Then, we execute \wwj on $D_b$ with a sample size of 
$b-|D_b|$ and execute the Oracle on $D_s$. Finally, we can estimate 
the aggregate over $D$ by combining the estimate of the aggregate over 
$D_s$ and the true value of the aggregate over $D_b$:
\begin{align}
    & \widehat\acount = \widehat\acount_s + \acount_b,\quad \widehat\asum = \widehat\asum_s + \asum_b \label{eq:sum-count}\\
    & \widehat\aavg = \left(\widehat\asum_s + \asum_b\right)\Big/\left(\widehat\acount_s + \acount_b\right) \label{eq:avg}
\end{align}
We observe that the combined estimators for \acount and \asum are unbiased since
the estimators over $D_s$ is linear. However, the combined 
estimator for \aavg is a ratio estimator, which is 
asymptotically unbiased at the order of $\mathcal{O}(1/b)$ as proved in the 
Section 6.8 of \cite{sampling-tech}. To reduce the bias to the order of 
$\mathcal{O}(1/b^2)$, we apply the bias correction based on the Taylor 
expansion \cite{sampling-tech} to our \aavg estimator:
\begin{equation}
    \widehat\aavg_2 = \widehat\aavg \cdot \left(1-\frac{1}{b-|D_b|}\frac{Var\left[\widehat\acount\right]}{\widehat\acount^2}\right)
    \label{eq:avg-correct}
\end{equation}

\minihead{Challenges}
To realize the advantages of \bas over \wwj and blocking, we need 
to tackle two challenges. First, we need to find an optimal separation such that
the estimation error is minimized given an Oracle budget. Second, we need to 
obtain valid CIs for our combined estimators, achieving the specified 
coverage probability $p$. We introduce our adaptive allocation algorithms to 
address them.

\subsection{\bas with Adaptive Allocation} \label{subsec:allocation}

Given an Oracle budget $b$, an optimal allocation should minimize the mean 
squared error (MSE) of the aggregate estimate. In this allocation optimization
problem, we need to decide whether a tuple in the sampling space $D$ should be
allocated to the sampling regime or the blocking regime. Unfortunately, 
tuple-level allocation can be prohibitively expensive due to the size of the 
sampling space. To address it, we formulate the optimization problem into a 
stratum-level allocation problem, where we first stratify $D$ into a set 
of strata based on the similarity scores of tuples. In this case, our allocation 
algorithm determines which strata should be allocated to the blocking regime to 
minimize the overall MSE in four steps: stratification, adaptive allocation, 
sampling or blocking execution, and resampling. We explain the entire procedure 
using a \asum aggregate.

\minihead{Stratification}
Initially, we divide $D$ into a \emph{maximum blocking regime} and a 
\emph{minimum sampling regime} ($D_0$). The maximum blocking regime contains 
tuples with the top $b_2=\alpha \cdot b$ similarity scores, where $\alpha$ is a 
parameter between 0 and 1 to control the size of the maximum blocking regime. 
Based on $\alpha$, we reserve $b_2$ for potential blocking and the 
remaining budget $b_1=b-b_2$ for adaptive allocation. Next, we stratify 
the maximum blocking regime into $K$ strata ($D_1, \ldots, D_K$) with equal 
sizes (Alg.~\ref{alg:alloc}, Lines 1-5). Following prior work about stratified 
sampling for approximate query processing \cite{abae}, \bas automatically 
determines the number of strata $K$ to ensure that each stratum has an Oracle 
budget of at least 1,000.

With the stratification, we formulate the allocation optimization problem. Given 
the stratified sampling space $D_0, \ldots, D_K$, an aggregate \texttt{AGG}, 
similarity scores $W$, and an Oracle budget $b$, our target is to find an 
optimal allocation $\beta$ that minimizes the MSE of the estimated \texttt{AGG}. 
The allocation $\beta$ specifies a subset of the maximum blocking regime 
($D_1\ldots, D_K$) that will directly execute the Oracle while the rest of 
strata will execute \wwj. We formulate this optimization problem as follows:
\begin{equation}
    \beta^* = \mathop{\arg\min}_{\beta \subset \{1, \ldots, K\}} 
    MSE_{\texttt{AGG}}(D, \beta, W, b) \label{eq:opt}
\end{equation}

\input{tex/figures/alg_allocation.tex}
\minihead{Adaptive Allocation via Pilot Sampling}
To address the optimization problem, we first calculate the MSE given an 
allocation $\beta$. We observe that our estimator of the \asum aggregate
(Equation \ref{eq:sum-count}) is unbiased. Therefore, we can 
calculate overall MSE as the summation of sampling variance of each 
stratum in the sampling regime:
\begin{equation}
    MSE_{\asum} = \sum_{\substack{0 \le i\le k, i\notin \beta}} Var\left[\frac{|D_i|}{n_i}\sum_{s \in S_i}\frac{O(s)g(s)}{W(s)}\right]
    \label{eq:sum-var}
\end{equation}
where $S_i$ is the sampled tuple of $D_i$ following the procedure of \wwj and 
$n_i$ is the (hypothetical) Oracle budget assigned to stratum $D_i$. 
For a stratum in the blocking regime, $n_i = |D_i|$. For a stratum in the 
sampling regime, we follow the scheme of importance sampling and assign the 
budget proportional to the sum of similarity scores:
\begin{equation*}
    n_i =  \frac{(b - \sum_{j \in \beta} |D_i|) \cdot \sum_{s \in D_i} W(s)}{\sum_{1\le j\le K, j\notin \beta}\sum_{s \in D_j} W(s)}
\end{equation*}

Unfortunately, the sampling variance in Equation \eqref{eq:sum-var} is unknown 
unless we apply the Oracle to the entire sampling regime. To address this, 
we apply pilot sampling, a sampling procedure that is used to obtain statistics 
to guide the main sampling later. In our algorithm, we use pilot sampling to 
estimate the sampling variance with an Oracle budget of $b_1$ (Algorithm 
\ref{alg:alloc}, Line 8). Then, we can estimate the optimal allocation using the 
estimated variance (Algorithm \ref{alg:alloc}, Line 11). In Section 
\ref{sec:theory}, we show that our estimated optimal allocation converges to 
the optimal allocation at the rate of $\mathcal{O}(1/\sqrt{b_1})$. To
avoid applying Oracle on the same data tuples twice, we cache the Oracle 
results obtained in the pilot sampling stage for next stages.

\minihead{Sampling+Blocking}
Given the optimal allocation $\beta^*$, we use the remaining Oracle budget $b_2$ 
to execute \wwj on strata that are allocated to the sampling regime and execute 
the Oracle on the strata that are allocated to the blocking regime. We merge the 
results of using the budget $b_1$ (for pilot sampling) and the budget $b_2$ to 
estimate the aggregate (Alg.~\ref{alg:alloc}, Lines 12-18.)

\minihead{CI via Resampling} 
As we merge the sample from two dependent stages, the resulting sample is 
not independently and identically distributed (i.i.d.). This means we 
cannot calculate CIs using the Central Limit Theorem (CLT) that assumes 
i.i.d. data. Applying standard CLT on non-i.i.d. data naively can lead to CIs 
without valid coverage; for example, a 95\% CI might cover the true value 
with a probability lower than 95\%. To guarantee valid coverages, we use 
resampling for CIs. 

We apply the bootstrap-t resampling scheme to calculate CIs with 
coverage guarantees \cite{boostrap-comparison}. 
Given an existing sample $X$, a confidence $p$, bootstrap-t resampling 
works as follows:
\begin{enumerate}[leftmargin=*]
    \item Calculate the estimated aggregate $\hat\mu$ and its variance 
    $\hat\sigma^2$.
    \item Sample $|X|$ data from $X$ with replacement, resulting in a resample $R_i$
    \item Calculate the estimation $\hat\mu_i$ and the variance 
    $\hat\sigma_i^2$ using $R_i$.
    \item Calculate the t-statistic: $t_i=\frac{\hat\mu_i - \hat\mu}{\hat\sigma_i}$
    \item Repeat (2)-(4) for a sufficient number of times. Following prior work
    \cite{abae}, we use 1,000 resamples.
    \item Calculate the $\frac{1-p}2$ and $1-\frac{1-p}2$ percentile of t-statistics,
    resulting in $t_l$ and $t_r$, respectively.
    \item Return the CI as: $[\hat\mu-t_l\hat\sigma, \hat\mu-t_r\hat\sigma]$
\end{enumerate}
Intuitively, bootstrap-t resampling estimates the CI by measuring the
empirical CI when sampling from the empirical distribution of observed data. It 
further uses t-statistics to adjust the skewness and thus achieves the 
coverage guarantee at the order of $\mathcal{O}(1/b^2)$ \cite{boostrap-comparison}. 

We justify this validity by leveraging Theorem 23.9 of \cite{asymptotic}, 
which states that bootstrap CIs are valid if the statistical functional is 
\textit{Hadamard differentiable} with respect to the CDF. Since our estimators 
for \asum and \acount are linear combinations (and \aavg is a ratio of 
differentiable functionals), they satisfy this condition. We defer the full 
proof to Appendix \ref{sec:app-bootstrap}.

As shown in Algorithm \ref{alg:alloc}, our resampling step for CI 
computation operates independent of the adaptive allocation. The allocation 
objective is to minimize MSE, which inherently optimizes accuracy regardless of 
whether a CI is requested. Therefore, while skipping CI computation eliminates 
the resampling overhead, it does not directly improve estimation accuracy. For 
applications constrained by total latency rather than 
monetary cost, the computational time saved by omitting resampling could be 
leveraged to increase the Oracle budget, thereby indirectly improving estimation 
accuracy.


\minihead{Handling \aavg} 
We estimate \aavg with a ratio estimator (Eq.~\ref{eq:avg} and 
\ref{eq:avg-correct}), which is asymptotically unbiased. Given the independence 
between \asum and \acount, we estimate the MSE of \aavg as follows:
\begin{equation*}
    MSE_{\aavg} = \frac{1}{n}\left(\frac{\widehat{\asum}}{\widehat{\acount}}\right)^2
    \left(\frac{MSE_{\acount}}{\widehat{\acount}^2} + 
    \frac{MSE_{\asum}}{\widehat{\asum}^2}\right)
\end{equation*}
where $n=b - \sum_{i\in \beta} |D_i|$ is the Oracle budget for the sampling 
regime. This MSE estimator is based on widely used variance estimator for \aavg 
\cite{ripplejoin}, which is accurate to the order of $\mathcal{O}(1/b^2)$ 
\cite{sampling-tech}. To handle \aavg, we just replace estimators in Algorithm 
\ref{alg:alloc}.

\minihead{Handling \amin and \amax}
\bas leverages the correlation between embedding similarity and the target 
attribute. As an example, we estimate \amax by 
$\widehat\amax = \max_{t \in D_b \cup S} val(t)$, the maximum observed. Unlike 
linear aggregates where we 
minimize global variance, the allocation goal for \amax is to maximize the
probability that the true maximum is contained within the blocking regime $D_b$. 
Utilizing the pilot sample, we model the distribution of values in each stratum 
using Extreme Value Theory \cite{de2006extreme} and allocate strata to $D_b$ 
based on their probability of containing values exceeding the current sample 
maximum. Since the distribution of the sample maximum is non-normal, bootstrap 
CIs are invalid. Instead, we construct CIs using the observed maximum as a 
lower bound and the global maximum of the dataset (ignoring join 
conditions) as an upper bound.

\minihead{Handling \amedian}
Estimating the median requires constructing the CDF of the target attribute. 
\bas estimates the CDF by combining the exact distribution from blocking 
with the weighted empirical distribution from sampling.
Let $F(t)$ be the CDF. We estimate it as:
\begin{equation*}
\footnotesize
\widehat{F}(t) = \frac{1}{\widehat{N}} \left( \sum_{x \in D_b} \mathbb{I}(x \le t) + \sum_{y \in D_s} \frac{1}{\pi_y} \mathbb{I}(y \le t) \right)
\end{equation*}
where $W_y$ is the sampling probability. The estimated \amedian is then $\widehat{F}^{-1}(0.5)$. The 
variance of the median estimator depends on the variance of the CDF estimate at 
the median value. Therefore, our allocation algorithm focuses on minimizing the 
variance of the CDF in the strata that overlap with the estimated median range 
derived from the pilot sample. As the median is a statistical functional that 
is Hadamard differentiable with respect to the CDF, we apply the same 
bootstrap-t resampling strategy to derive valid CIs.

\minihead{Handling \agroupby}
Group-by queries often suffer from high variance in ``tail'' groups (groups with 
low support) under standard sampling, \bas exploits the semantic clustering of 
embeddings to target specific groups effectively. We 
first estimate the aggregate for each group $g$. Unlike global aggregation where 
we minimize total variance, our adaptive allocation for group-by queries 
minimizes the \textit{maximum relative error} across all discovered groups. By 
using pilot samples to estimate the conditional probability of groups within 
strata, \bas prioritizes blocking in strata containing high densities of 
``hard-to-estimate'' groups. Finally, we use bootstrap-t to provide simultaneous 
confidence intervals, accounting for the multiple hypothesis testing in 
multi-group estimation.

\minihead{Algorithm Complexity}
Consider a chain join of
$k$ tables with size $N_1, \ldots, N_k$ and $b \ll \prod_{i=1}^k N_i$. \bas has 
a time complexity of $\mathcal{O}\left(\prod_{i=1}^k N_i\right)$ and a space 
complexity of $\mathcal{O}\left(\sum_{i=1}^{k-1} N_iN_{n+1} + b\log b\right)$. This complexity is higher than that of 
uniform sampling, which has both time and space complexities of $\mathcal{O}(b)$, 
because \bas incorporates similarity scores. Additionally, BAS has higher 
complexity than \wwj, which has time and space complexities 
$\mathcal{O}\left(b \cdot \sum_{i=2}^k N_i\right)$, primarily due to the 
stratification. However, when compared to threshold-based blocking, which shares 
the same time complexity with \bas and has a space complexity of $\mathcal{O}(b)$, 
\bas only incurs a small overhead to sort the top $\alpha b$ records. 
Furthermore, \bas is more efficient than \abae, which involves sorting all 
similarity scores \cite{abae}. Finally, \bas shares the same time complexity with 
\blazeit, which applies control variates over the cross product \cite{blazeit}.

Although \bas has a time complexity that increases exponentially with the number 
of tables, the leading order of the time complexity is attributed to 
comparing floating-point numbers using CPUs, which is significantly faster than 
the Oracle. Our latency profiling reveals that \bas's CPU computations 
take up to 90 seconds on our datasets, costing \$0.03 on AWS—making it 
$10^6\times$ cheaper than running Oracle with GPT-4o mini.

When the table size or the number of tables is so large that the cross product
cannot fit into memory, threshold-based blocking in \bas becomes prohibitively 
expensive. To mitigate this, we apply the nearest neighbor-based blocking 
\cite{christen2011survey}. It effectively joins each record of the 
left table with the top $b'$ records of the right table for each join, where 
$b'=\left(\alpha n / N_1\right)^{1/(k-1)}$. This method has a time complexity of 
$\mathcal{O}\left(b \cdot \sum_{i=2}^k N_i\right)$ and a space complexity of 
$\mathcal{O}\left(b\right)$, making the complexity of \bas comparable to 
\wwj and significantly lower than \abae and \blazeit. We show that \bas 
with nearest neighbor-based blocking remains more efficient than the baselines 
(\S \ref{subsec:e2e}).

\subsection{\bas for Selection Queries} \label{subsec:selection}
We first review approximate selection queries. Consider a selection 
query $Q$ that, when executed exactly, outputs a set of records $T$. If $Q$ 
is processed approximately, the result is $T'$. As established in prior work 
\cite{supg}, we say that $T'$ achieves the recall target $\gamma$ and confidence 
$p$ if $\mathbb{P}[|T\cap T'|/|T|\ge \gamma]\ge p$.

To process such queries, prior work calculates a score for each 
data and outputs all tuples with scores higher than a threshold 
$\tau_s$ \cite{supg}. The score approximately reflects the probability of a 
tuple satisfying the predicate, while $\tau_s$ is estimated using 
sampling to achieve recall guarantees.

With \bas, we can improve the selection quality by maximize the threshold 
$\tau_s$. This is achieved by adaptive allocating data to minimize the recall target 
$\gamma_s$ for the sampling regime. To formulate the allocation problem, we first
translate the overall recall target $\gamma$ to the recall target of the sampling
regime $\gamma_s$ through the following lemma. We defer the full proof to 
Appendix \ref{sec:app-select}.
\begin{lemma}
With a probability higher than $p$, we can achieve the overall recall target 
$\gamma$ if $\gamma_s$ satisfies
\begin{equation*}
    \gamma_s \ge \gamma - (1-\gamma)\frac{\acount_b}{\mathrm{UB}(\acount_s, Var[\acount_s], b, p)}
\end{equation*}
where
\begin{equation*}
    \mathrm{UB}(\mu, \sigma^2, b, p) = \mu + \frac{\sigma}{\sqrt{b}}\sqrt{2\log \frac{2}{1-p}}
\end{equation*}
\label{lemma:select}
\end{lemma}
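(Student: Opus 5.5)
Let $\acount_b$ denote the number of positive tuples in the blocking regime $D_b$ and $\acount_s$ the number in the sampling regime $D_s$, so that $|T| = \acount_b + \acount_s$. The plan is to split the overall recall into its two regimes and then control the only random quantity with a concentration bound. Every tuple of $D_b$ is evaluated by the Oracle, so all $\acount_b$ positives there are returned. In $D_s$, the threshold $\tau_s$ is chosen (as in \cite{supg}) so that at least a $\gamma_s$ fraction of the $\acount_s$ positives is returned. The overall recall is therefore
\begin{equation*}
    \frac{|T\cap T'|}{|T|} \ge \frac{\acount_b + \gamma_s \acount_s}{\acount_b + \acount_s}.
\end{equation*}

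The first step is to find the condition on $\gamma_s$ that makes this ratio at least $\gamma$. Rearranging $\acount_b + \gamma_s\acount_s \ge \gamma(\acount_b+\acount_s)$ gives
\begin{equation*}
    \gamma_s \ge \gamma - (1-\gamma)\frac{\acount_b}{\acount_s}.
\end{equation*}
This is the stated bound with $\acount_s$ in place of $\mathrm{UB}$. The right-hand side is increasing in $\acount_s$, because $1-\gamma\ge 0$. So if $\acount_s \le U$ for some $U$, any $\gamma_s \ge \gamma - (1-\gamma)\acount_b/U$ also satisfies the exact requirement. It therefore suffices to show that $\acount_s \le \mathrm{UB}(\acount_s, Var[\acount_s], b, p)$ holds with probability at least $p$. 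Here the arguments of $\mathrm{UB}$ are read as the pilot-sample estimate of $\acount_s$ and the estimated per-sample variance.

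The second step, which I expect to be the main obstacle, is this high-probability upper bound. The term $\frac{\sigma}{\sqrt b}\sqrt{2\log\frac{2}{1-p}}$ is the deviation in a sub-Gaussian (Hoeffding-type) tail bound for a mean of $b$ i.i.d. \wwj importance-weighted samples with per-sample variance proxy $\sigma^2$. Its two-sided form, with failure probability $1-p$ split between the two tails, yields exactly the factor $\log\frac{2}{1-p}$.

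I would first try to treat the weighted indicators $O(s)/W(s)$ as sub-Gaussian with proxy $Var[\acount_s]$. If that does not hold exactly, I would fall back to invoking it asymptotically via the CLT, where the Gaussian tail $\mathbb{P}[Z>t]\le e^{-t^2/2}$ gives the same expression. Any statement that holds for the true variance and is then plugged in with the estimated variance would need the usual asymptotic caveat.

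Two further details need handling:
\begin{itemize}
    \item The probability statements must be combined. The threshold-based recall guarantee in $D_s$ and the bound on $\acount_s$ each hold with high probability. I would either assume $\gamma_s$ is attained deterministically given the allocation, or allocate the failure probability between the two events with a union bound, absorbing this into $p$ as the lemma's phrasing suggests.
    \item If $\gamma - (1-\gamma)\acount_b/\mathrm{UB}$ is negative, the condition holds trivially.
\end{itemize}
With the event $\{\acount_s\le \mathrm{UB}\}$ in hand, the monotonicity step finishes the proof.
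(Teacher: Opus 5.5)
Your proposal is correct and follows the paper's route. You rewrite the overall recall requirement as $\gamma_s \ge \gamma - (1-\gamma)\acount_b/\acount_s$, note that this is increasing in $\acount_s$, and substitute a normal-approximation upper confidence bound whose $\sqrt{2\log\frac{2}{1-p}}$ factor is a Gaussian tail, which is exactly your CLT fallback. You are more careful than the paper on the recall decomposition, on needing the lower tail of $\widehat{\acount}_s$ (the paper states the upper tail), and on the union bound with the sampling-regime guarantee; drop the sub-Gaussian-with-variance-proxy attempt, since the variance is not a valid proxy for sparse importance-weighted indicators, so the CLT step is what carries the argument.
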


Given a stratification 
$D_0, \ldots, D_K$ and a budget $b_2$, we find the optimal allocation $\beta^*$ 
to minimize $\gamma_s$:
\begin{equation*}
    \beta^* = \mathop{\arg\max}_{\beta \subset \{1, \ldots, K\}}
    \frac{\acount_i}{\sum_{i \notin \beta} \mathrm{UB}(\acount_i, Var[\acount_i], b_i, \frac{p+k-1}{k})}
    \label{eq:recall-opt}
\end{equation*}
where $\acount_i=\sum_{s\in D_i} O(s)$ is the number of matching tuples in 
stratum $i$ and $b_i$ is the assigned budget for stratum $i$. Similar to 
aggregation queries, we approximately solve the allocation problem with 
estimated $\acount_i$ and $Var[\acount_i]$ using the pilot sampling.

\minihead{Handling \atopk Heavy Hitters Selection}
A heavy-hitter \atopk query identifies the $K$ entities with the most appearance 
\cite{yi2009optimal,cormode2003finding,wang2018efficient}. This 
generalizes the selection problem with a dynamic threshold determined by 
the $K$-th value. We estimate \acount for every candidate value using the 
combined estimator from Equation \ref{eq:sum-count}. Let $\widehat{V}_e$ be the 
estimated value for entity $e$. We return the set of $K$ entities with the 
largest $\widehat{V}_e$. To ensure the correctness of the ranking, we minimize 
the swapping probability between the $K$-th ranked entity and the $(K+1)$-th 
ranked entity. We adapt our allocation objective to minimize the estimation 
variance specifically for candidates whose estimates are close to the 
threshold $\tau$ (the estimated $K$-th value). We provide guarantees on the 
recall: with probability $p$, the returned set contains the true \atopk 
entities. We achieve this by computing simultaneous confidence intervals for the 
candidate entities using bootstrap-t, ensuring that the lower bound of the 
$K$-th item exceeds the upper bound of the $(K+1)$-th item.

\subsection{Practical Guidelines} \label{subsec:practical}
We offer the following deployment guidance. \bas is beneficial when the cost of 
Oracle significantly outweighs the overhead of embedding similarity calculations. 
For LLM-based predicates (e.g., GPT-4), Oracle costs are typically orders of 
magnitude higher than similarity computation, making \bas highly effective. For 
cold-start scenarios, we recommend allocating 15--30\% of the budget to the 
maximum blocking regime ($b_2$) to provide a balanced search space for the 
optimizer without over-constraining the sampling budget. \bas automatically 
tunes $K$ to ensure each stratum receives $>1000$ samples. For small budgets, we 
enforce a minimum $K=5$ to ensure sufficient granularity. Finally, \bas is 
model-agnostic. For semantic joins (e.g., product descriptions), we recommend 
dense embeddings (e.g., CLIP \cite{clip}). For lexical joins, sparse vectors 
(e.g., TF-IDF) often yield sharper boundaries. To achieve better performance 
with \bas, we recommend evaluating the false positive and false negative 
rates of different embeddings on a small data sample.
\color{black}

%% file: tex/figures/fig_wwj_demo.tex
\begin{figure}[t!]
    \centering
    \includegraphics[width=\linewidth]{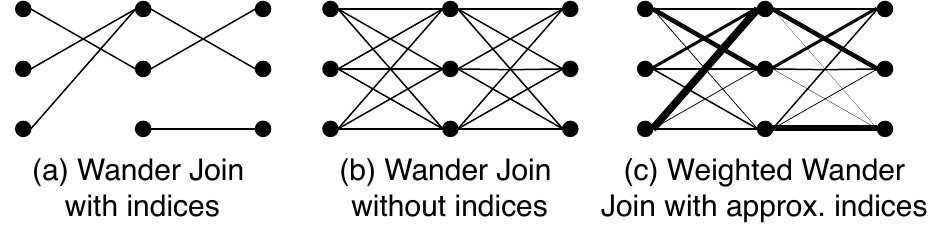}
    \caption{Illustration of Wander Join with indices, without indices, and with
    approximate indices. Black dots are data records while solid connections are
    viable paths during random walks. The widths of edges in (c) show the 
    probability of choosing paths.}
    \label{fig:wwj}
\end{figure}

%% file: tex/figures/alg_wwj.tex
\begin{algorithm}[t]
    \DontPrintSemicolon
    \SetKwInOut{Input}{Input}\SetKwInOut{Output}{Output}
    \SetKwComment{SideComment}{\textcolor{blue}{//~}}{}
    \SetKwComment{CommentStyle}{\textit{// }}{}
    \SetKwComment{LineComment}{\textcolor{blue}{\textit{/* }}}{\textcolor{blue}{  */}}
    \SetFuncSty{text}
    \SetKwFunction{FCI}{StandardCI}
    \SetKwFunction{FWSample}{WeightedSample}
    \SetKwFunction{FSample}{Sample}

    \Input{input tables $T_1, \ldots, T_k$, 
    embedding $E(\cdot)$, attribute function $g(\cdot)$, 
    embedding similarity $\mathrm{sim}(\cdot, \cdot)$,
    oracle budget $b$, probability $p$}

    $S \leftarrow \emptyset$\;
    \For{$i$ in $1, \ldots, b$}{
        $t_{i,1} \leftarrow \FSample(T_1, 1)$\;
        \For{$j$ in $2, \ldots, k$}{
            $W_{i,j} \leftarrow \{\mathrm{sim}\left(E(t_{i,j-1}), E(t)\right):t\in T_j\}$\;
            $t_{i,j} \leftarrow \FWSample(T_j, 1, W_{i,j})$\;
        }
        $S \leftarrow S \cup \{ (t_{i,1}, \ldots, t_{i,k}) \}$\;
        $W(s) \leftarrow \frac{1}{|T_1|} \prod_{j=2}^k \mathrm{sim}\left(E(t_{i,j-1}), E(t_{i,j})\right)$
    }
    $X \leftarrow \left\{O(s) \cdot g(s) \cdot W(s)^{-1}: s \in S\right\}$\;
    \Return{\FCI{$X, p, \prod_{i=1}^k |T_i|$}}

    \caption{Weighted Wander Join algorithm (\asum).}
    \label{alg:wj}
\end{algorithm}

%% file: tex/figures/alg_allocation.tex
\begin{algorithm}[t]
    \DontPrintSemicolon
    \SetKwInOut{Input}{Input}\SetKwInOut{Output}{Output}
    \SetFuncSty{text}
    \SetKwFunction{FBudget}{BudgetAssign}
    \SetKwFunction{FCP}{CrossProduct}
    \SetKwFunction{FTop}{Top}
    \SetKwFunction{FSort}{SortDesc}
    \SetKwFunction{FWSample}{WeightedSample}
    \SetKwFunction{FSample}{Sample}
    \SetKwFunction{FResample}{ResamplingCI}
    \SetKwProg{Fn}{Function}{:}{}
    \Input{input tables $T_1, \ldots, T_k$, 
    similarity score $W(\cdot)$, attribute $g(\cdot)$, 
    number of stratum $K$, maximum blocking ratio $\alpha$, 
    oracle budgets $b_1$ and $b_2$, probability $p$}
    
    \tcc{\textcolor{gray}{Stratification}}
    $D' \leftarrow \FTop(\FCP(T_1, \ldots, T_k), W, \alpha(b_1+b_2))$\;
    $D' \leftarrow \FSort(D', W)$\;
    $D_0 \leftarrow D \backslash D'$\;
    \For{$i$ in $1, \ldots, K$}{ 
            $D_i \leftarrow D'\left[\alpha(i-1)(b_1+b_2)/K : \alpha \cdot i(b_1+b_2)/K  \right]$
        }
    \tcc{\textcolor{gray}{Adaptive Allocation via Pilot Sampling}}
    \For{$i$ in $0, \ldots, K$}{
    $n^{(1)}_i \leftarrow \frac{\sum_{x \in D_i} W(x)}{\sum_{x \in D} W(x)} b_1$\;
    $S^{(1)}_i \leftarrow \FWSample(D_i, n_i, W)$\;
    $X_i^{(1)} \leftarrow \left\{g(s) \cdot O(s) \cdot W(s)^{-1}: s\in S^{(1)}_i\right\}$\;
    $\hat\sigma_i^2 \leftarrow |D_i|^2 \cdot Var\left[X_i^{(1)}\right]$\;
    }
    $\beta^* \leftarrow \mathop{\arg\min}_{\beta}\ 
    \sum_{\substack{i\notin \beta}} \hat\sigma_i^2 \bigl/ \FBudget(b_2, W, \beta, i, D)$
    
    \tcc{\textcolor{gray}{Sampling + Blocking}}
    \For{$i\notin \beta^*$}{
        $n^{(2)}_i \leftarrow \FBudget(b_2, W, \beta^*, i, D)$\;
        $S^{(2)}_i \leftarrow \FWSample(D_i, n_i, W)$\;
        $X_i^{(2)} \leftarrow \left\{g(s) \cdot O(s) \cdot W(s)^{-1}: s\in S^{(2)}_i\right\}$\;
    }
    \For{$i\in \beta^*$}{
        $S^{(2)}_i \leftarrow \FSample(D_i, |D_i|)$\;
        $X_i^{(2)} \leftarrow \left\{g(s) \cdot O(s): s\in S_i^{(2)}\right\}$\;
    }
    \tcc{\textcolor{gray}{CI via Resampling}}
    $X \leftarrow X^{(1)} \cup X^{(2)}$\;
    \Return{$\FResample(X, p, D)$}

    \Fn{\FBudget{$b, W, \beta, i, D$}}{
        \Return{$\left(b-\sum_{j\in \beta} |D_j|\right) \frac{\sum_{s \in D_i} W(s)}{\sum_{j \notin \beta}\sum_{s\in D_j} W(s)})$}\;
    }
    
    \caption{\bas with Adaptive Allocation for a \asum Aggregate}
    \label{alg:alloc}
\end{algorithm}

%% file: tex/theories.tex
\begin{figure*}
    \centering
    \includegraphics[width=\linewidth]{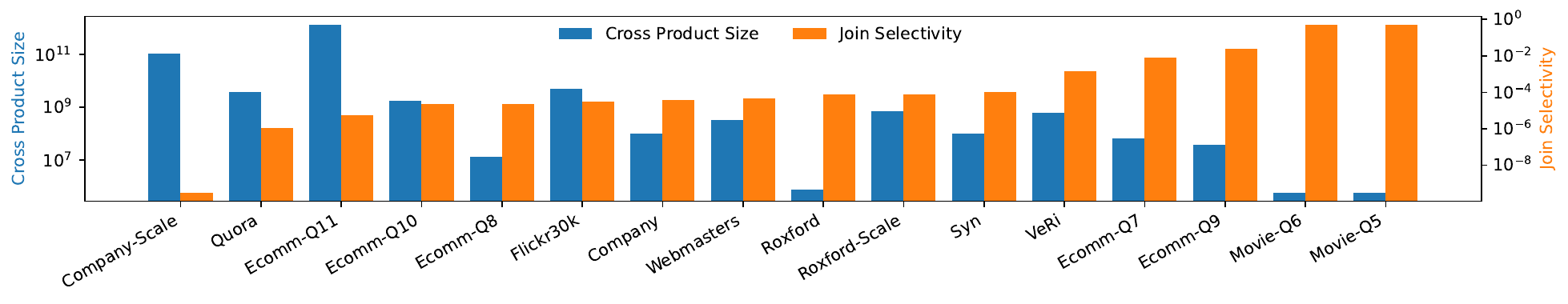}
    \caption{Cross product size and join selectivity of the 16 
    evaluated datasets: real-world, SemBench, and synthetic workloads.}
    \label{fig:datasets}
\end{figure*}

\section{Theoretical Analysis of \bas} \label{sec:theory}
In our analysis, we first show MSE of \bas converges to the MSE 
with optimal allocation asymptotically at the rate 
$\mathcal{O}\left(b_1^{-1/2}\right)$. Furthermore, we compare the MSE of \bas 
with that of \wwj, showing that \bas outperforms or matches \wwj asymptotically. 

\subsection{\bas Converges to the Optimal Allocation} \label{subsec:optimal-alloc}

\bas applies pilot sampling to approximately solve the minimization problem in 
Equation \ref{eq:opt}. Theorem \ref{theom:opt} shows that the MSE with estimated 
minimizer $\widehat{\beta}^*$ converges to the MSE with the true minimizer 
$\beta^*$ at the rate of $\mathcal{O}\left(1/\sqrt{b_1}\right)$. 

\begin{theorem} \label{theom:opt}
The MSE with the estimated minimizer $\hat\beta^*$ converges to that with the true minimizer 
$\beta^*$ at the rate:
\begin{equation}
    \frac{MSE(D, \widehat\beta^*, W, b_2) - 
    MSE(D, \beta^*, W, b_2)}{MSE(D, \beta^*, W, b_2)} 
    = \mathcal{O}\left(\frac{1}{\sqrt{b_1}}\right) \notag
\end{equation}
\end{theorem}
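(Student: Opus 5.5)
The plan is the standard ``plug-in minimizer'' argument. The objective is
\[
f(\beta)=\sum_{i\notin\beta}\sigma_i^2/n_i(\beta),
\]
where $n_i(\beta)$ is the output of \texttt{BudgetAssign}. For a fixed $\beta$, $n_i(\beta)$ is a deterministic function of the similarity scores $W$ and of $b_2$, so it is known exactly. The only unknown quantities are the per-stratum variances $\sigma_i^2=|D_i|^2\,Var[X_i]$, and pilot sampling replaces them by $\hat\sigma_i^2$. The relative MSE gap will be reduced to a uniform bound on the relative errors $|\hat\sigma_i^2-\sigma_i^2|/\sigma_i^2$.

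\textbf{Step 1 (variance estimates).} In stratum $i$ the pilot sample has size $n_i^{(1)}=b_1\sum_{x\in D_i}W(x)/\sum_{x\in D}W(x)$, which is proportional to $b_1$ with a constant depending only on the fixed stratification. The importance-weighted values $X_i^{(1)}$ are i.i.d.\ and bounded, since $D$ is finite and $W>0$, so their fourth moments are finite. By the CLT for the sample variance (or Chebyshev with the fourth moment),
\[
\hat\sigma_i^2-\sigma_i^2=O_p\bigl(1/\sqrt{n_i^{(1)}}\bigr)=O_p\bigl(1/\sqrt{b_1}\bigr).
\]
There are only $K+1$ strata, so a union bound gives
\[
\max_i |\hat\sigma_i^2-\sigma_i^2|/\sigma_i^2=\epsilon=O_p(b_1^{-1/2}).
\]
Strata with $\sigma_i^2=0$ are handled separately: their sample variance is identically $0$, so they contribute nothing to either objective.

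\textbf{Step 2 (uniform objective error).} Let $\hat f(\beta)=\sum_{i\notin\beta}\hat\sigma_i^2/n_i(\beta)$. Each term of $\hat f$ is within a factor $(1\pm\epsilon)$ of the corresponding term of $f$, and all terms are nonnegative, so
\[
(1-\epsilon)f(\beta)\le\hat f(\beta)\le(1+\epsilon)f(\beta)\quad\text{uniformly over the finitely many } \beta\subset\{1,\dots,K\}.
\]
Here $f(\beta)=MSE(D,\beta,W,b_2)$ by Equation~\eqref{eq:sum-var}, since the \asum estimator is unbiased. For \aavg one uses the delta-method MSE expression above, which is a smooth function of the same variance estimates (and of $\widehat{\acount},\widehat{\asum}$, also $O_p(b_1^{-1/2})$-accurate), so the same relative bound holds up to constants.

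\textbf{Step 3 (sandwich).} Because $\hat\beta^*$ minimizes $\hat f$,
\[
f(\hat\beta^*)\le \frac{\hat f(\hat\beta^*)}{1-\epsilon}\le \frac{\hat f(\beta^*)}{1-\epsilon}\le \frac{1+\epsilon}{1-\epsilon}\,f(\beta^*).
\]
Therefore
\[
\frac{f(\hat\beta^*)-f(\beta^*)}{f(\beta^*)}\le \frac{2\epsilon}{1-\epsilon}=O_p(b_1^{-1/2}),
\]
which is the claimed rate, understood in probability or in expectation. A sharper remark: if the minimizer is unique with a positive gap, then $\hat\beta^*=\beta^*$ with probability tending to $1$ exponentially fast, so the bound is conservative.

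\textbf{Main obstacle.} The delicate step is Step 1 for the \aavg case, plus making ``$\mathcal O$'' precise. The strata are fixed but $\hat\sigma_i^2$ enters a ratio, and for \aavg the objective involves $\widehat{\acount}^{-2}$. I would first work on the event $\{\epsilon<1/2,\ \widehat{\acount}>\acount/2\}$. This event has probability $1-O(1/b_1)$ by Chebyshev, and on it the ratios are Lipschitz. Its complement is handled by the boundedness of all quantities on the finite $D$, so that it contributes only lower order. The other subtlety is that the per-stratum budget $b_2-\sum_{j\in\beta}|D_j|$ must be positive for every admissible $\beta$. This holds by construction, since $\sum_{j\ge1}|D_j|=\alpha b$; otherwise $\beta$ is simply excluded from the finite search set.
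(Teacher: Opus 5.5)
Your proposal is correct and takes essentially the same approach as the paper. Both arguments bound the relative error of each pilot variance estimate $\hat\sigma_i^2$ by $\mathcal{O}(b_1^{-1/2})$, turn that into a uniform $(1\pm\epsilon)$ bound on the estimated objective over the finitely many $\beta$, and then use the minimizing property of $\hat\beta^*$ in the same chain $f(\hat\beta^*)\le \hat f(\hat\beta^*)/(1-\epsilon)\le \hat f(\beta^*)/(1-\epsilon)\le \frac{1+\epsilon}{1-\epsilon}f(\beta^*)$. Your bookkeeping is somewhat cleaner than the paper's: you note that $n_i(\beta)$ is deterministic, so only the $\hat\sigma_i^2$ need control where the paper also propagates errors through the means, and you bound the sample variance via bounded fourth moments rather than the Gaussian formula $Var[\hat\sigma_i^2]=2\sigma_i^4/(n_i-1)$ the paper invokes.
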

Intuitively, we observe that statistics for allocation determination, 
specifically the sample mean and sample variance, converge to their true values 
at a rate proportional to $1/\sqrt{b_1}$. Theoretically, we demonstrate the MSE 
of \bas converges to that of the optimal allocation at the same rate. This is 
because the arithmetic operations involved in calculating the MSE (\ie, 
summation, multiplication by constants, and division by constants) do not change 
the convergence rate. We defer a full proof to Appendix \ref{sec:app-optimal}.

\subsection{\bas Outperforms or Matches \wwj} \label{subsec:comp-wwj}

\minihead{Notation} In addition to the notations we introduced in Section 
\ref{sec:alg}, We define that $\tilde{D}$ is the set of all positive tuples, the 
subscript $s$ means the statistic is about the sampling regime of \bas.

We present the comparison of the MSE of \bas and \wwj for the \asum aggregate in 
Theorem \ref{theo:comp} and defer a full proof to Appendix \ref{sec:app-comparison}. 
We can derive the same result for the \acount or \aavg similarly.

\begin{theorem} \label{theo:comp}
    If there exists an allocation $\beta$ such that the following two conditions hold
    \begin{align}
        \mathbb{E}_{s\in \tilde{D}_s}\left[\frac{1/|\tilde{D}_s|}{W_s(s)}\right] 
        &\le \mathbb{E}_{s\in D}\left[\frac{1/|D|}{W(s)}\right]
        \label{eq:cond1-t}\\
        \frac{|\tilde{D}_s|^2}{b_s} &\le \frac{|D|^2}{b} \label{eq:cond2-t}
    \end{align}
    \bas outperforms \wwj asymptotically, \ie,
    \begin{equation}
        MSE^{(\bas)}_{\texttt{SUM}} = C \cdot MSE^{(\wwj)}_{\texttt{SUM}} + \mathcal{O}\left(b^{-1}b_1^{-1/2}\right) \notag
    \end{equation}
    where $W_s$ is the similarity scores normalized for the sampling regime and
     $C$ is a coefficient less than 1:
    \begin{equation}
        C < \frac{|\tilde{D}_s|^2 \big/ b_s}{|\tilde{D}|^2 / b} 
            \frac{\mathbb{E}_{s\in \tilde{D}_s}
                \left[\frac{1/|\tilde{D}_s|}{W_s(s)}\right]}
                { \mathbb{E}_{s\in \tilde{D}}\left[\frac{1/|\tilde{D}|}{W(s)}\right]} 
        \le 1 \notag
    \end{equation}
    Otherwise, \bas matches \wwj asymptotically, \ie,
    \begin{equation}
        MSE^{(\bas)}_{\texttt{SUM}} \le MSE^{(\wwj)}_{\texttt{SUM}} + \mathcal{O}\left(b^{-1}b_1^{-1/2}\right) \notag
    \end{equation}
\end{theorem}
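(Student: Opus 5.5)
We prove Theorem~\ref{theo:comp} by first computing the exact MSEs of the two estimators under a \emph{fixed} allocation, and then using Theorem~\ref{theom:opt} to pass from the fixed optimal allocation to the estimated one $\widehat\beta^*$. Both \wwj and the sampling part of \bas are importance samplers, and the \asum estimators are unbiased (Eq.~\eqref{eq:sum-count}), so in each case the MSE is just a variance.

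\textbf{Step 1: MSE of \wwj.} With $b$ i.i.d.\ draws from $W$, the \asum estimator is $\frac1b\sum_i O(s_i)g(s_i)/W(s_i)$. Its MSE is
\[
MSE^{(\wwj)}_{\asum}=\frac1b\Bigl(\sum_{s\in\tilde D} g(s)^2/W(s)-\asum^2\Bigr).
\]
This is a second moment taken only over positive tuples, since $O(s)=0$ outside $\tilde D$. We write this second moment as $|\tilde D|^2\,\mathbb{E}_{s\in\tilde D}\bigl[\frac{1/|\tilde D|}{W(s)}\bigr]$ times an attribute factor, under the working simplification (as in the statement) that $g$ is bounded and comparable across tuples. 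Then
\[
MSE^{(\wwj)}\approx\frac{|\tilde D|^2}{b}\,\mathbb{E}_{s\in\tilde D}\Bigl[\frac{1/|\tilde D|}{W(s)}\Bigr]\cdot \overline{g^2}.
\]

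\textbf{Step 2: MSE of \bas for a fixed $\beta$.} The blocking strata contribute zero variance. The sampling regime uses $b_s$ draws from the renormalized weights $W_s$. The same computation gives
\[
MSE^{(\bas)}(\beta)\approx\frac{|\tilde D_s|^2}{b_s}\,\mathbb{E}_{s\in\tilde D_s}\Bigl[\frac{1/|\tilde D_s|}{W_s(s)}\Bigr]\,\overline{g^2}.
\]
Its ratio to the \wwj MSE is exactly the displayed bound on $C$. Conditions~\eqref{eq:cond1-t} and~\eqref{eq:cond2-t} bound the two factors of this ratio by $1$, so $C\le 1$. Note that \eqref{eq:cond1-t} is stated over $D$, not $\tilde D$; we need a short argument that this suffices, for instance by comparing normalizations or by absorbing the discrepancy into $C$. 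The centering terms $-\asum_s^2$ and $-\asum^2$ only help, because $\asum_s^2/b_s$ is compared against $\asum^2/b$.

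\textbf{Step 3: passing to $\widehat\beta^*$.} The optimal $\beta^*$ minimizes the MSE over all allocations, so $MSE(\beta^*)\le MSE(\beta)=C\cdot MSE^{(\wwj)}$. Theorem~\ref{theom:opt} gives $MSE(\widehat\beta^*)=MSE(\beta^*)(1+\mathcal O(b_1^{-1/2}))$. Since $MSE(\beta^*)=\mathcal O(1/b)$, the additive error is $\mathcal O(b^{-1}b_1^{-1/2})$.

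For the ``otherwise'' case, take $\beta=\emptyset$. \bas then reduces to \wwj with budget $b$, up to the splitting into pilot and second-stage samples, which leaves the weighted-sampling variance unchanged. Hence $MSE(\beta^*)\le MSE^{(\wwj)}$, and the same perturbation step gives the claimed inequality.

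\textbf{Main obstacle.} The delicate step is Step 2: making the factorization into a ``count'' factor times a ``weight'' factor rigorous. The cross terms involving $g$ and the centering terms must not break the inequality, and the stratum-level budget split $n_i\propto\sum W$ must be shown equivalent to drawing $b_s$ samples from $W_s$. I would first handle \acount ($g\equiv1$), where the factorization is exact. I would then treat general $g$ by bounding $g^2$ uniformly and absorbing the slack into the strict inequality $C<\cdot$.
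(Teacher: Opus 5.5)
Your skeleton (fixed-$\beta$ MSEs, their ratio, then Theorem~\ref{theom:opt} to pass to $\widehat\beta^*$) matches the paper. Step~2, however, has a genuine gap, and it sits exactly where the paper's proof does its real work: the centering terms.

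\textbf{The centering terms.} Your claim that $-\asum_s^2$ and $-\asum^2$ ``only help because $\asum_s^2/b_s$ is compared against $\asum^2/b$'' is not valid. Write the two MSEs as $N-u$ and $M-v$, second moments minus centering terms. Then $\frac{N-u}{M-v}\le\frac{N}{M}$ holds iff $u/N\ge v/M$. Comparing the absolute sizes $u$ and $v$ says nothing about this; a smaller $u$ with $N\approx M$ actually makes the ratio \emph{larger}. When the weights are good, $v$ nearly cancels $M$, and in general the true ratio can sit far above your leading-term ratio.

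The paper closes this in two steps that you are missing.
\begin{enumerate}
\item The sampling regime of \bas is stratified with $n_i\propto\sum_{s\in D_i}W(s)$. Its centering term is therefore $\sum_{i\notin\beta}\asum_i^2/r_i$ with $\sum_{i\notin\beta}r_i=1$, not $\asum_s^2$. By Cauchy--Schwarz (the paper's H\"older step), $\sum_{i\notin\beta}|\tilde D_{s,i}|^2/r_i\ge|\tilde D_s|^2$, where $\tilde D_{s,i}$ is the set of matching tuples in stratum $i$. So the stratum split is not ``equivalent'' to drawing $b_s$ samples from $W_s$, as you planned to show. It is only an inequality, and it runs in the favorable direction.
\item The paper assumes $g$ is independent of the Oracle outcome and has the same first two moments in both regimes. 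Under that assumption, both normalized centering terms become the same constant $c=\mathbb{E}[g]^2/\mathbb{E}[g^2]$. The ratio is then at most $\frac{|\tilde D_s|^2/b_s}{|\tilde D|^2/b}\cdot\frac{x-c}{y-c}$, where $x$ and $y$ are the two normalized expectations of $1/W$ over matching tuples. For $0<c<y$, the bound $\frac{x-c}{y-c}\le\frac{x}{y}$ holds exactly when $x\le y$.
\end{enumerate}
So condition \eqref{eq:cond1-t} is what controls the centering terms, not merely a bound on one factor of $C$. Your fallback of bounding $g^2$ uniformly and absorbing the slack into the strict inequality can only produce constants. It cannot yield the specific displayed bound on $C$; you need a factorization assumption of this kind.

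\textbf{The conditions.} The paper's proof uses both conditions over matching tuples: $\mathbb{E}_{s\in\tilde D_s}\bigl[\frac{1/|\tilde D_s|}{W_s(s)}\bigr]\le\mathbb{E}_{s\in\tilde D}\bigl[\frac{1/|\tilde D|}{W(s)}\bigr]$ and $|\tilde D_s|^2/b_s\le|\tilde D|^2/b$. The versions with $D$ and $|D|$ in the statement are a mismatch, and it cannot be ``absorbed into $C$.'' In particular, since $|\tilde D|\ll|D|$, condition \eqref{eq:cond2-t} as literally written does not bound the first factor of $C$ by $1$. Your Step~2 asserts that it does, without flagging this.

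\textbf{The case $\beta=\emptyset$.} Here \bas does not reduce to \wwj. It is stratified \wwj, with the pilot samples merged so the total budget is $b$. Its MSE is at most, not equal to, that of \wwj, and this follows from the same Cauchy--Schwarz inequality. That inequality is what actually delivers the ``matches'' case.
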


Theorem \ref{alg:alloc} suggests two cases when we compare the MSE of \bas and 
that of \wwj. We discuss both cases in detail.

\minihead{Case 1: \bas outperforms \wwj asymptotically} 
Theorem \ref{theo:comp} provides sufficient conditions when \name achieves 
asymptotically better MSE than \wwj. We observe that both conditions can match 
characteristics of the embedding similarity. First, Condition \eqref{eq:cond1-t} 
compares the similarity between the proposed sampling distribution $W$
and the ideal distribution (\ie, $1/|\tilde{D}|$) for the matching tuples in the 
sampling regime or the entire sampling space. When we reduce false positives via 
blocking, the proposed sampling distribution may become closer to the idea 
distribution, satisfying Condition \ref{eq:cond1-t}. Second, Condition 
\ref{eq:cond2-t} compares the density of matching tuples for tuples in the 
sampling regime or the entire sampling space. When the similarity scores have a 
high recall, the blocking regime includes most of the positive tuples, while the 
positive tuples in the sampling region are sparse, satisfying Condition 
\ref{eq:cond2-t}.

\minihead{Case 2: \bas matches \wwj asymptotically} 
When sufficient conditions are not fully satisfied, the difference between \bas 
and \wwj converges to 0 at the rate $\mathcal{O}(b^{-1}b_1^{-1/2})$, 
which is faster than $\mathcal{O}(b^{-1})$, the rate at which the MSE converges to 0.

%% file: tex/eval.tex
\section{Evaluation} \label{sec:eval}
In this section, we present evaluation results of our algorithms.
We first introduce our experiment setup (\S \ref{subsec:setup}). Then, we 
demonstrate that \bas achieves statistical guarantees, while \block fails 
(\S \ref{subsec:stats-analysis}). Moreover, we demonstrate that \bas outperforms 
baselines in terms of MSE (\S \ref{subsec:e2e}). Finally, we perform the 
ablation study (\S \ref{subsec:optimality}) and sensitivity study (\S \ref{subsec:sensitivity}).

\subsection{Experiment Settings} \label{subsec:setup}

\minihead{Datasets and queries}
We curated a comprehensive suite of 16 datasets originating from realistic 
integration tasks \cite{dataset-quora,dataset-quora-work,dataset-quora,
dataset-quora-work,dataset-stackoverflow,dataset-stackoverflow-work1,
dataset-company-work1,dataset-company-work2,radenovic2018revisiting,
dataset-flickr30k,veri1,veri2}, the SemBench benchmark 
\cite{lao2025sembench}, and synthetic stress-tests. To assess robustness across 
data distributions, our workload spans a wide range of join selectivities 
(from $5 \times 10^{-10}$ to $0.5$) and scales (up to 1.3 trillion pair 
cross-products). We summarize the dataset statistics in Figure \ref{fig:datasets} 
and defer detailed construction to Appendix \ref{sec:app-dataset}. For each 
dataset, we use the embedding model that achieves or matches the state-of-the-art 
performance. We evaluated all three aggregates for each dataset and present the 
results of the most natural aggregate for each dataset. We categorize our queries 
as follows:
\begin{enumerate}[leftmargin=*]
    \item \textit{Real-World Analytical Joins}: Entity resolution,
    duplicate detection, and video analysis tasks (Company \cite{dataset-company-work2}, 
    Quora \cite{dataset-quora}, Webmasters \cite{dataset-stackoverflow-work1}, 
    Roxford \cite{radenovic2018revisiting}, Flickr30K \cite{dataset-flickr30k}, 
    VeRi \cite{veri1}).
    \item \textit{SemBench}: Queries requiring joint semantic 
    understanding across tables (e.g., matching reviews by sentiment, linking 
    products by images/descriptions). We include all such join queries with a 
    result set of more than 100 records, covering E-commerce and Movie domains.
    \item \textit{Synthetic Stress-Tests}: Company-Scale (6-way join) and Roxford-Scale (10M rows) to test extreme scalability; and $Syn(p_{fp}, p_{fn})$ to systematically vary embedding quality.
\end{enumerate}

\minihead{Baselines}
We compare our algorithms, \wwj and \bas, with the following baselines:
\begin{denseitemize}
    \item \uniform: We uniformly sample tuples from the entire cross product of the tables.
    \item \block: A proxy for the state-of-the-art deep EM system, Ditto \cite{dataset-company-work2}. It samples tuples above a similarity threshold 
    calibrated on a validation set (10\% of positive tuples).
    \item \abae and \blazeit: We adapt the state-of-the-art AQP algorithm for 
    aggregation queries with ML predicates by treating the join condition as a 
    selection predicate \cite{abae,blazeit}.
    \item SUPG and LOTUS: We use the state-of-the-art AQP algorithms for 
    selection queries with ML predicates \cite{supg,patel2024lotus} as baselines 
    for the extension of our algorithms for selection queries.
\end{denseitemize}
We do not directly compare against full EM pipelines (e.g., 
Magellan \cite{usecase-em}) because they are designed as offline processes that 
materialize the full result, whereas our algorithms are for an online AQP with a 
different goal. However, we compare against the core phases of EM (blocking and 
matching) via our \block baseline, which effectively represents the performance 
of a full EM pipeline on AQP tasks. In addition, we do not consider non-ML 
text-based EM systems or additional data augmentation used in Ditto 
\cite{dataset-company-work2} because these techniques are specific to the text 
modality and do not generalize to the multimodal datasets we consider.

\subsection{Statistical Guarantees} \label{subsec:stats-analysis}

\input{tex/figures/fig_guarantees.tex}

We analyze whether the baseline method (\block) and our proposed method (\bas)
can achieve statistical guarantees. Specifically, we focus on whether these 
methods can provide valid CIs, a standard approach in statistically guaranteed
AQP \cite{supg,abae,blazeit,ripplejoin,structured-join-index}.
\newcommand\te{\mathrm{True\ value}}
\newcommand\cib{\mathrm{CI\ bound}}
\newcommand\est{\mathrm{Estimate}}

\minihead{Metric}
We use the ratio between the true error and the CI bounds as our evaluation 
metric. Given a query with a groundtruth result $\mu$, an estimated 
result $\hat\mu$, and a confidence interval $[l,u]$, the error ratio is 
calculated as: $\frac{|\hat\mu - \mu|}{|u-l|/2}$.

According to the definition of a CI, a valid 95\% CI must include the true value 
with a probability of 95\%.  Consequently, the probability that the true error 
falls within the CI bounds should be at least 95\%. Thus, for a method to 
provide valid CIs, the 95th percentile of the ratio between the true error and 
the CI bounds must not exceed 1.

\minihead{\bas Achieves Statistical Guarantees}
We present the results for \block and \bas across various Oracle budgets in 
Figure \ref{fig:guarantees}. 
For each 
Oracle budget, we repeated the experiments 500 times. As shown, \bas consistently 
achieves 95th percentile error ratios of less than 1, indicating the validity of 
provided CIs. In contrast, \block fails to achieve valid CIs for every query, 
leading to true errors that are up to 79\% higher than the CI errors.

We observe that the error ratio of \block increases as we increase the Oracle 
budget. This trend is expected because of the inherent bias of \block: it 
determines the data to execute Oracle through a similarity score threshold. 
Even when the similarity threshold is calibrated on a validation 
dataset, \block can still ignore positive pairs in the evaluation dataset, 
resulting in inherently biased estimation results. As the Oracle budget 
increases, the true error of \block decreases and converges to an unknown 
inherent bias, while the width of the CI approaches to zero. Consequently, when the 
Oracle budget is sufficiently large, true errors can be smaller than CI errors, 
causing invalid CIs. \bas resolves this problem by incorporating sampling 
algorithms to provide unbiased or asymptotically unbiased estimations, leading 
to valid CIs.
\input{tex/figures/fig_low_guarantees.tex}
\input{tex/fig_rrmse.tex}

\minihead{\bas Achieves Statistical Guarantees Under Limited Budgets}
We show that our asymptotic statistical guarantees hold under limited budgets.
In the left part of Figure \ref{fig:guarantee-budget}, we report the maximum 
error ratio across all datasets. We show that \bas maintains valid CIs even when 
the oracle budget is as low as 1,000. These results empirically validate the 
robustness of the bootstrap-t method used in \bas, ensuring reliability even in 
small-budget scenarios. In right part of Figure \ref{fig:guarantee-budget}, we 
show this validity holds regardless of the pilot sample size (varying from 0.1\% 
to 10\% of the budget). Intuitively, this is because the size of the  pilot 
sample does not affect the derivation of confidence intervals, thus not 
affecting our statistical guarantees. 

\subsection{End-to-end Performance} \label{subsec:e2e}
We analyze end-to-end performance on a diverse workload including real-world analytical joins, SemBench semantic queries, and multi-way joins. We report the average relative RMSE over 100 repetitions with a maximum blocking ratio $\alpha=20\%$.

\minihead{\bas Improves Overall Estimation Errors} As shown in Figure 
\ref{fig:rrmse}, \bas reduces estimation errors by 1.04--19.5$\times$ compared 
to the best performing baseline among \uniform, \abae, and \blazeit. Compared to 
\wwj, \bas achieves similar or lower estimation errors by up to 17.2$\times$. 
Except for \dquora, \bas reduces errors by up to 2.2$\times$ compared to Oracle 
\block. This improvement is due to the relatively low Oracle threshold to achieve 
valid confidence intervals. While Oracle \block can be more efficient if positive 
and negative tuples are well distinguished by similarity scores (as in the case 
of \dquora), determining the Oracle threshold requires exhaustive execution of 
the Oracle, making it prohibitively expensive.

\minihead{\bas Reduces Estimation Errors Across Selectivities} 
We categorize results by join selectivity to demonstrate \bas's adaptability:
\begin{enumerate}[leftmargin=*]
    \item \textit{Low-Selectivity} ($< 0.001$): On datasets with rare matches 
    (Fig.~\ref{fig:rrmse-company}--\ref{fig:rrmse-flickr30k}, 
    \ref{fig:rrmse-eq8}, and \ref{fig:rrmse-eq10a}--\ref{fig:rrmse-roxfordl}), 
    \bas significantly outperforms all baselines by up to 19.5$\times$. For the 
    6-way Company-Scale join (selectivity $5\times 10^{-10}$), Uniform sampling failed to retrieve sufficient positive samples to form an estimate, while \bas maintained low error.
    \item \textit{High-Selectivity} ($0.001-0.5$): In high-density 
    scenarios (Fig.~\ref{fig:rrmse-veri}--\ref{fig:rrmse-mq6}. and 
    \ref{fig:rrmse-eq9}) where blocking offers little theoretical advantage, 
    \bas demonstrates robustness. It adaptively allocates budget towards sampling, 
    preventing the performance regression observed in \block. BaS achieves errors 1.05–2.5$\times$ lower than baselines and matches \wwj.
\end{enumerate}

\minihead{\bas Improves Real-World Multi-Way Joins}
We evaluated scalability using real-world 3-way (Ecomm-Q10), 4-way (Ecomm-Q11),
and synthetic 6-way (Company-Scale) joins. As shown in 
Figure \ref{fig:rrmse-eq10}--\ref{fig:rrmse-companyL}, \bas reduces error by 
1.4–2.1$\times$ compared to \uniform and \wwj, while \abae and \blazeit do not 
natively support multi-way joins. This performance gap highlights the critical 
role of blocking in large search spaces. As the number of joined tables 
increases, the density of valid join paths decreases, causing inefficient sampling.

\minihead{\bas Improves \amax, \amin, \amedian, and \agroupby Queries}
To demonstrate general applicability of \bas, we evaluated non-linear aggregators 
and \agroupby queries. We show that \bas reduces estimation error by up to 
1.4$\times$ for \amedian (Fig.~\ref{fig:rrmse-eq9}), 2.0$\times$ for \amax 
(Fig.~\ref{fig:rrmse-mq6}), and 3.0$\times$ for \amin (Fig.~\ref{fig:rrmse-eq8}). 
For \agroupby queries (Fig.~\ref{fig:rrmse-eq10a}), \bas reduces the 
mean estimation error across groups by up to 2.1$\times$ compared to baselines, 
demonstrating its utility for OLAP-style analytics.

\minihead{\bas Improves Selection Queries} 
We compared \bas against LOTUS \cite{patel2024lotus} on selection queries 
(Fig.~\ref{fig:recall}). On a selection query based on the Company dataset, 
while both methods satisfy the recall target with 95\% confidence, \bas achieves 
significantly higher precision, improving by 2.6–43.8\% over LOTUS and 3.0–68.6\% 
over Uniform. In addition, we show that \wwj achieves performance similar to
SUPG (dotted purple line), empirically validating that \wwj shares the same 
statistical characteristics as SUPG for selection queries. Furthermore, for a \atopk 
heavy-hitter query based on Ecomm-Q11 with recall guarantees (95\% confidence), 
where LOTUS lacks native support, \bas improves precision by 9.6--10.7\% over baselines. 
This indicates that \bas's variance-optimal allocation is effective not only for 
aggregation, but for high-precision retrieval.

\subsection{Application of Approximate \acount: Join Order Optimization} \label{sec:eval-app}
We applied \bas to the problem of join order optimization for exactly executing a 
multi-way semantic join query (Ecomm-Q11 of SemBench), where cardinality 
estimation errors can lead to execution costs 80$\times$ higher than optimal. By 
integrating \bas with the DPccp enumeration algorithm 
\cite{moerkotte2006analysis}, we generate more accurate cardinality estimates 
for sub-joins. As shown in Figure \ref{fig:join_opt}, the improved \acount 
estimation from \bas allows the optimizer to identify efficient execution plans, 
reducing total execution cost by up to 12.7\% compared to plans generated using 
\uniform or \wwj estimates.

\subsection{Ablation Study} \label{subsec:optimality}
We empirically validate the theoretical result in Section \ref{sec:theory} that
\bas converges to the optimal allocation. To achieve this, we compare the adaptive 
allocation with the optimal allocation and the worst allocation. To approximate 
the optimal and worst allocations, we executed \bas with fixed blocking ratios 
ranging from 10\% to 50\%. For example, a fixed blocking ratio of 10\% means we 
directly execute Oracle on pairs with top 10\% similarity scores while applying
\wwj on the rest. For each ratio, we repeated experiments 100 
times.

\input{tex/figures/fig_extension.tex}
\input{tex/figures/fig_join_opt.tex}
\input{tex/figures/fig_ablation.tex}

In Figure \ref{fig:optimal}, we show the improvements of \bas using optimal, 
adaptive, and worst allocations over \wwj. We calculate the improvement as the 
relative reduction of RMSE.
As shown, \bas with adaptive allocations achieves improvements that are only 
1.61-18.9\% (7.04\% on average) less compared to \bas with optimal allocations. 
In contrast, \bas with the worst allocations underperforms (\ie, negative 
improvements) by up to 98.4\% compared to \wwj and up to 99.9\% compared to \bas
with optimal allocations.\footnote[4]{We remove the majority of the negative 
improvements of the worst allocations in the figure for simplicity.} 

\subsection{Sensitivity Analysis} \label{subsec:sensitivity}

We analyze the sensitivity of \bas in terms of the quality of embedding models,
strategies to use the similarity scores as sampling weights, and the maximum 
blocking ratio $\alpha$.

\minihead{Impact of False Negatives and False Positives}
We examined the impact of embedding quality using synthetic datasets with 
controlled False Negative Rates (FNR) and False Positive Rates (FPR). As shown
in Figure \ref{fig:synthetic}, when FNR and FPR reach 50\%, BaS outperforms 
baselines by more than 20\%. Specifically, BaS dominates Blocking when FNR is high 
(by correcting for missed matches via sampling) and dominates WWJ when FPR is 
high (by avoiding the overweighting of false positives).

\minihead{Impact of the Embedding Quality} 
We executed algorithms on the Company dataset with 9 more embeddings, including 
traditional textual similarity \cite{cosine,tfidf}, encoder-only embedding 
models \cite{crossencoder-2}, and large-scale embedding models 
\cite{nv-embed,bge_embedding,stella-embedding,SFR-embedding-2,qwen-embedding,
openai-embedding}. For each embedding method, we executed \wwj and \bas 
100 times with an Oracle budget of 1,000,000.
In Figure \ref{fig:sensi-embedding}, we demonstrate the performance of \bas, 
\wwj, and \uniform with various embeddings on the \dcompany dataset with an 
Oracle budget of 1,000,000 and a maximum blocking ratio of 20\%. As shown, \bas 
consistently outperforms \uniform and \wwj across all embeddings. 
We find that \bas benefits from improved performance with 
better embeddings.

\input{tex/figures/fig_sensitivity_embedding.tex}
Furthermore, we find that traditional TF-IDF sometimes outperforms 
modern dense embeddings (e.g., OpenAI). This counter-intuitive 
result is caused by the specific join semantics of the Company dataset, which
requires high \textit{lexical precision} (e.g., distinguishing a radio company 
from a television company), a task where traditional methods excel by strictly 
penalizing token mismatches. In contrast, dense embeddings prioritize 
\textit{semantic relatedness}, often clustering non-matching companies from the 
same industry, creating extra false positives (e.g., 99\% FPR for OpenAI vs. 
27\% for TF-IDF).\footnote{We measured false positive rates at a fixed recall 
of 50\%.} Therefore, traditional methods or embedding models with instructional 
prompts (e.g., \texttt{gte-Qwen2} \cite{qwen-embedding} and \texttt{stella\_en} 
\cite{stella-embedding}) work better on this dataset. Nevertheless, \bas provides 
statistical guarantees regardless of the embedding's quality.

\input{tex/figure_sensi-B.tex}

\minihead{Impact of the Maximum Blocking Ratio ($\alpha$)} 
We executed \bas with maximum blocking ratios ($\alpha$) ranging from 10\% to 
30\% on the \dflickr datasets. For each value of $\alpha$, we repeated 
experiments 100 times and measured the relative RMSE. As shown in Figure 
\ref{fig:sensi-alpha}, the RMSE of \bas fluctuates by 9.5-14.2\% compared with the 
optimal one. Despite the fluctuation, \bas consistently outperforms baselines, 
demonstrating that \bas is insensitive to $\alpha$.

\minihead{Impact of the Weight Exponents}
We executed \wwj and \bas on the \dcompany dataset with different weight 
exponents. For example, the exponent=0.5 means that the sampling weight and 
budget allocation is proportional to the square root of the similarity score. 
For each exponent, we repeated experiments 100 times and measure the relative RMSE. As shown in 
Figure~\ref{fig:sensi-exp}, \bas achieves relatively better performance with
exponent equal to 1 and consistently outperforms baselines.

%% file: tex/figures/fig_guarantees.tex
\begin{figure}[t!]
    \begin{subfigure}{0.24\textwidth}
        \includegraphics[width=\linewidth]{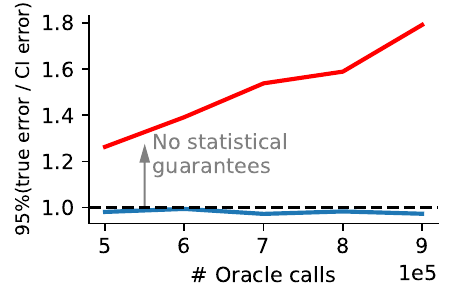}
        \caption{\acount (Roxford)} \label{fig:guarantee-count}
    \end{subfigure}\hspace*{\fill}
    \begin{subfigure}{0.24\textwidth}
        \includegraphics[width=\linewidth]{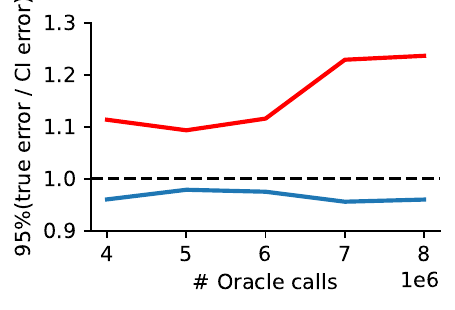}
        \caption{\aavg (Quora)} \label{fig:guarantee-avg}
    \end{subfigure}
    \medskip
    \begin{subfigure}{0.24\textwidth}
        \includegraphics[width=\linewidth]{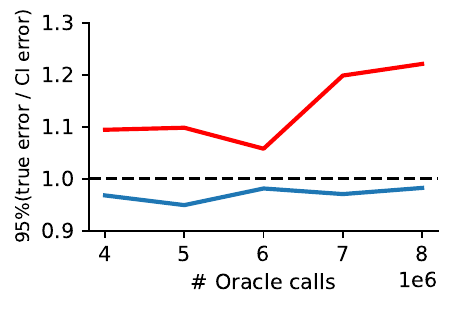}
        \caption{\asum (\dwebmasters)} \label{fig:guarantee-sum}
    \end{subfigure}\hspace*{\fill}
    \begin{subfigure}{0.24\textwidth}
    \includegraphics[width=\linewidth]{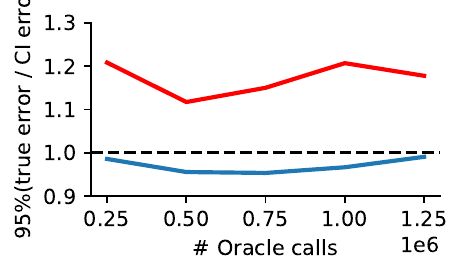}
    \caption{\amedian (Ecomm-Q9)} \label{fig:guarantee-median}
    \end{subfigure}

    \caption{\bas achieves valid statistical guarantees (Error Ratio $\le 1$) across all evaluated aggregators, whereas standard \block consistently produces invalid CI.}
    \label{fig:guarantees}
\end{figure}

%% file: tex/figures/fig_low_guarantees.tex
\begin{figure}[t!]
    \begin{subfigure}{0.24\textwidth}
        \includegraphics[width=\linewidth]{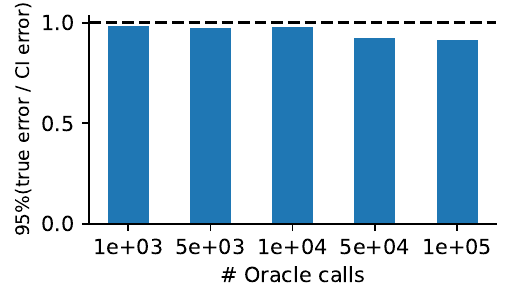}
    \end{subfigure}\hspace*{\fill}
    \begin{subfigure}{0.24\textwidth}
        \includegraphics[width=\linewidth]{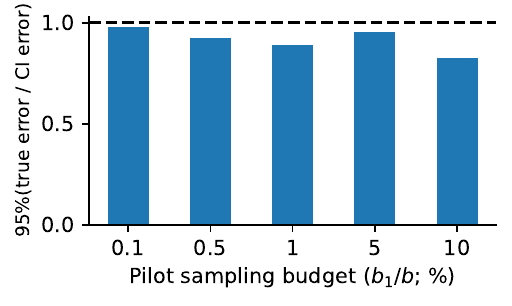}
    \end{subfigure}
    \caption{\bas maintains valid CIs even under 
    extreme conditions, including low Oracle budgets (e.g., 1,000) and small pilot sample sizes (e.g., 0.1\%).}
    \label{fig:guarantee-budget}
\end{figure}

%% file: tex/fig_rrmse.tex
\begin{figure*}[t]
    \begin{subfigure}{0.24\textwidth}
        \includegraphics[width=\linewidth]{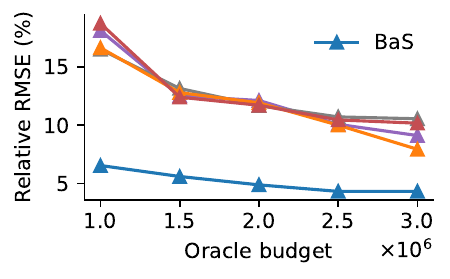}
        \caption{\dcompany-\acount} \label{fig:rrmse-company}
    \end{subfigure}\hspace*{\fill}
    \begin{subfigure}{0.24\textwidth}
        \includegraphics[width=\linewidth]{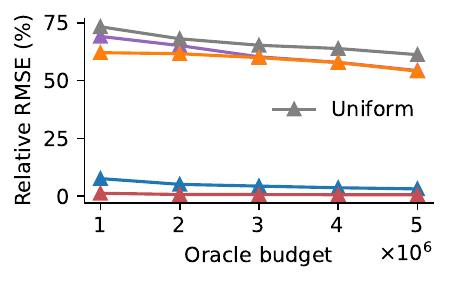}
        \caption{\dquora-\aavg} \label{fig:rrmse-quora}
    \end{subfigure}\hspace*{\fill}
    \begin{subfigure}{0.24\textwidth}
        \includegraphics[width=\linewidth]{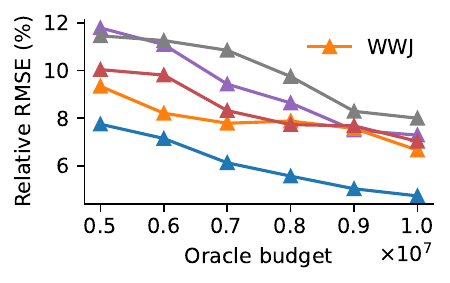}
        \caption{\dwebmasters-\asum} \label{fig:rrmse-webmaster}
    \end{subfigure}\hspace*{\fill}
    \begin{subfigure}{0.24\textwidth}
    \includegraphics[width=\linewidth]{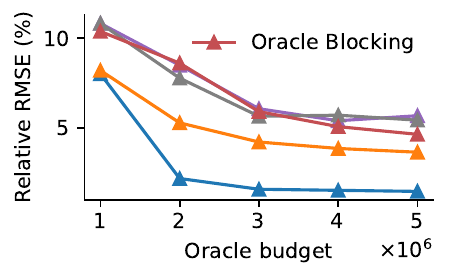}
    \caption{\droxford-\acount} \label{fig:rrmse-roxford}
    \end{subfigure}
    \medskip

    \begin{subfigure}{0.24\textwidth}
        \includegraphics[width=\linewidth]{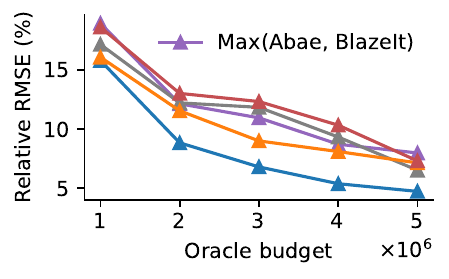}
        \caption{\dflickr-\acount} \label{fig:rrmse-flickr30k}
    \end{subfigure}\hspace*{\fill}
    \begin{subfigure}{0.24\textwidth}
        \includegraphics[width=\linewidth]{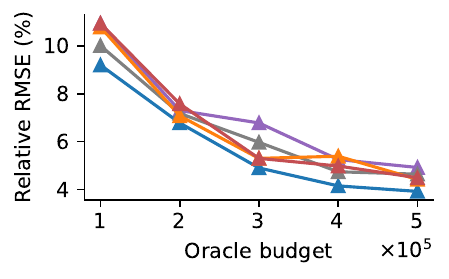}
        \caption{\dveri-\aavg} \label{fig:rrmse-veri}
    \end{subfigure}\hspace*{\fill}
    \begin{subfigure}{0.24\textwidth}
        \includegraphics[width=\linewidth]{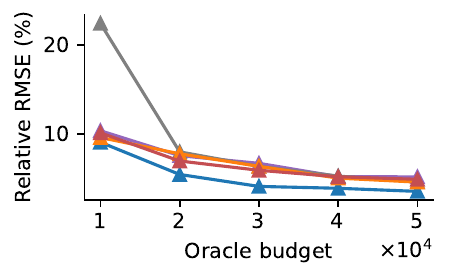}
        \caption{Ecomm-Q7-\acount} \label{fig:rrmse-eq7}
    \end{subfigure}\hspace*{\fill}
    \begin{subfigure}{0.24\textwidth}
        \includegraphics[width=\linewidth]{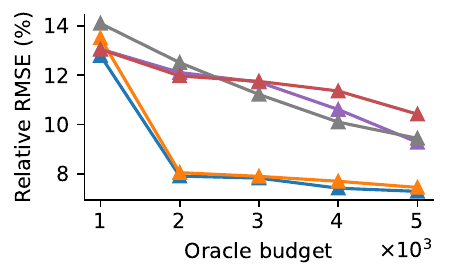}
        \caption{Movie-Q5-\aavg} \label{fig:rrmse-mq5}
    \end{subfigure}
    \medskip

    \begin{subfigure}{0.24\textwidth}
        \includegraphics[width=\linewidth]{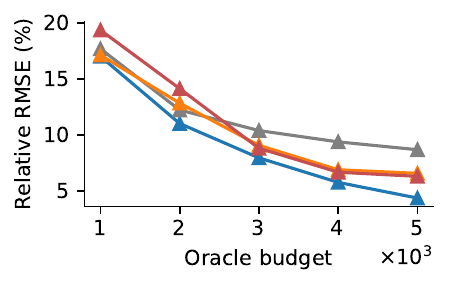}
        \caption{Movie-Q6-\amax} \label{fig:rrmse-mq6}
    \end{subfigure}\hspace*{\fill}
    \begin{subfigure}{0.24\textwidth}
        \includegraphics[width=\linewidth]{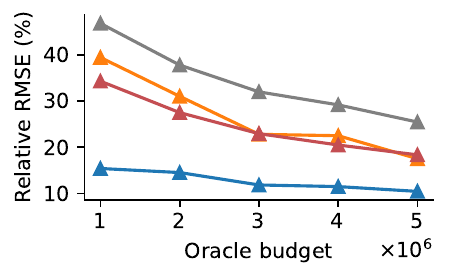}
        \caption{Ecomm-Q8-\amin} \label{fig:rrmse-eq8}
    \end{subfigure}\hspace*{\fill}
    \begin{subfigure}{0.24\textwidth}
        \includegraphics[width=\linewidth]{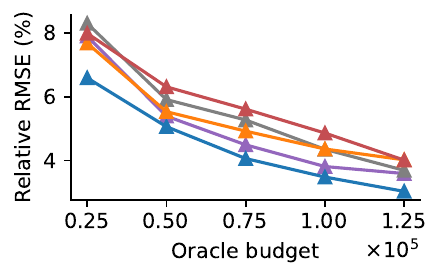}
        \caption{Ecomm-Q9-\amedian} \label{fig:rrmse-eq9}
    \end{subfigure}\hspace*{\fill}
    \begin{subfigure}{0.24\textwidth}
        \includegraphics[width=\linewidth]{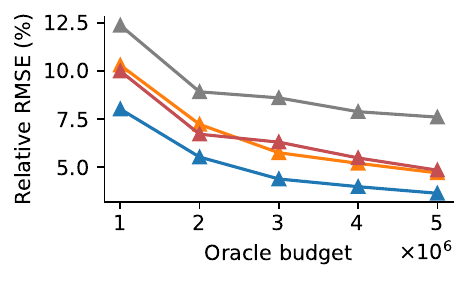}
        \caption{Ecomm-Q10a-\agroupby\ (3 tables)} \label{fig:rrmse-eq10a}
    \end{subfigure}
    \medskip

    \begin{subfigure}{0.24\textwidth}
        \includegraphics[width=\linewidth]{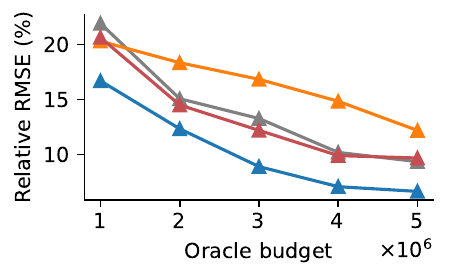}
        \caption{Ecomm-Q10-\acount (3 tables)} \label{fig:rrmse-eq10}
    \end{subfigure}\hspace*{\fill}
    \begin{subfigure}{0.24\textwidth}
        \includegraphics[width=\linewidth]{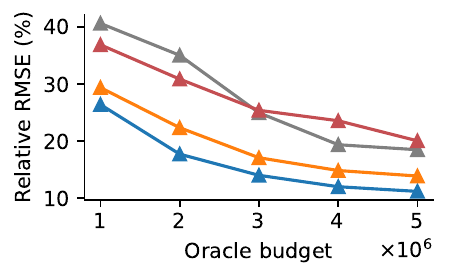}
        \caption{Ecomm-Q11-\acount (4 tables)} \label{fig:rrmse-eq11}
    \end{subfigure}\hspace*{\fill}
    \begin{subfigure}{0.24\textwidth}
        \includegraphics[width=\linewidth]{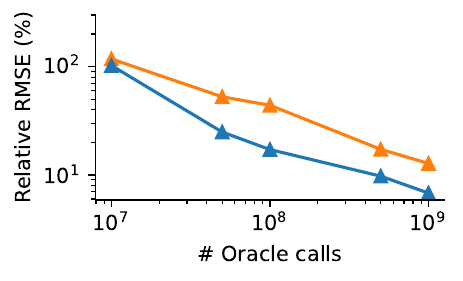}
        \caption{\dcompany-Scale-\acount (6 tables)} \label{fig:rrmse-companyL}
    \end{subfigure}\hspace*{\fill}
    \begin{subfigure}{0.24\textwidth}
        \includegraphics[width=\linewidth]{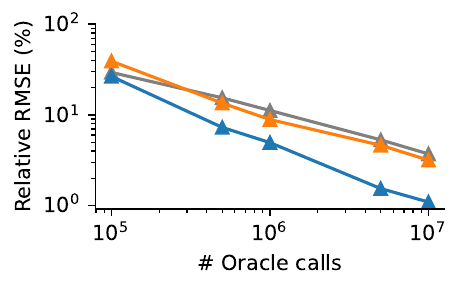}
        \caption{\droxford-Scale-\acount} \label{fig:rrmse-roxfordl}
    \end{subfigure}

    \caption{BaS consistently reduces RMSE across a diverse workload. It outperforms 
    baselines in linear aggregators (\ref{fig:rrmse-company}--\ref{fig:rrmse-mq5}), extreme aggregators (\ref{fig:rrmse-mq6} and 
    \ref{fig:rrmse-eq8}), \amedian (\ref{fig:rrmse-eq9}), \agroupby (\ref{fig:rrmse-eq10a}), multi-table join (\ref{fig:rrmse-eq10}--\ref{fig:rrmse-companyL}), 
    and large-scale join queries (\ref{fig:rrmse-companyL} and \ref{fig:rrmse-roxfordl}). BaS remains robust in both rare-event (Low Selectivity) and dense (High Selectivity) scenarios.}
    \label{fig:rrmse}
\end{figure*}

%% file: tex/figures/fig_extension.tex
\begin{figure}
\centering
\begin{subfigure}{0.23\textwidth}
        \centering
        \includegraphics[width=\linewidth]{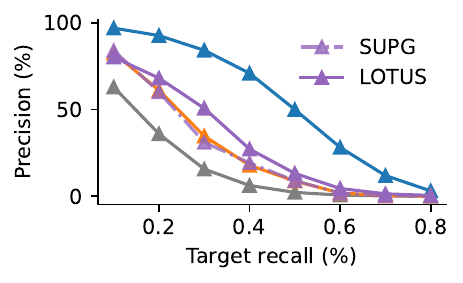}
        \caption{\bas achieves up to 43\% higher precision than 
        LOTUS while maintaining the same recalls guarantees.} \label{fig:recall}
    \end{subfigure}\hspace*{\fill}
    \begin{subfigure}{0.23\textwidth}
        \centering
        \includegraphics[width=\linewidth]{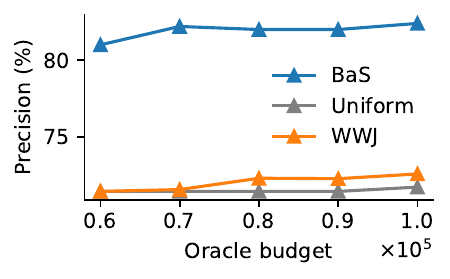}
        \caption{\bas improves precision on \atopk queries by up to 10\% over baselines, while LOTUS does not support \atopk with guarantees} \label{fig:topk}
    \end{subfigure}
    \caption{\bas extends to improve selection queries.}
\end{figure}


%% file: tex/figures/fig_join_opt.tex
\begin{figure}[t]
    \centering
    \includegraphics[width=0.8\linewidth]{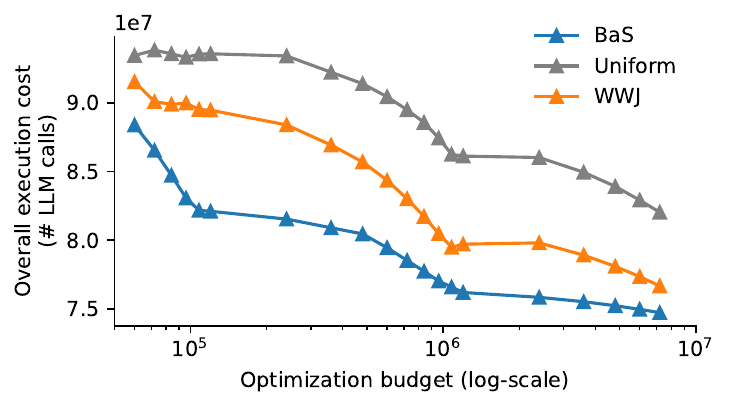}
    \caption{Application of approximate \acount to join optimization: By 
    providing accurate cardinality estimates, BaS enables the query optimizer to 
    select better join orders, reducing the execution cost of Ecomm-Q11 by 
    12.7\%.}
    \label{fig:join_opt}
\end{figure}

%% file: tex/figures/fig_ablation.tex
\begin{figure}[t!]
    \centering
    \includegraphics[width=0.8\linewidth]{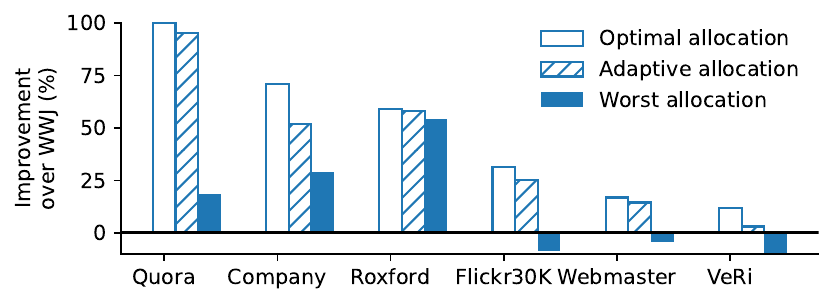}
    \caption{\bas determines near-optimal allocations.} 
    \label{fig:optimal}
\end{figure}

%% file: tex/figures/fig_sensitivity_embedding.tex
\begin{figure}[t!]
    \centering
    \includegraphics[width=0.82\linewidth]{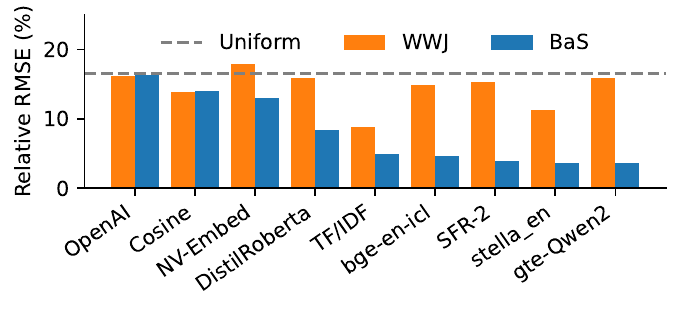}
    \caption{\bas consistently outperforms \uniform and \wwj across various 
    types of embeddings.} 
    \label{fig:sensi-embedding}
\end{figure}

\begin{figure}
    \centering
    \includegraphics[width=0.82\linewidth]{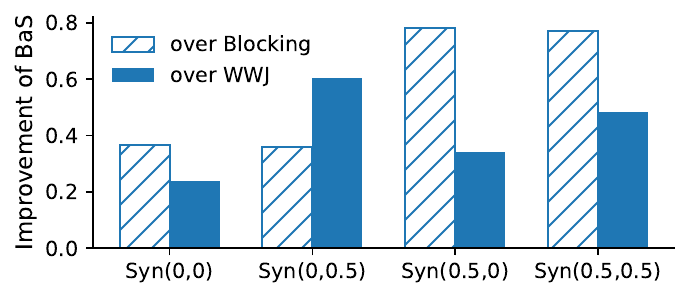}
    \caption{On the synthstic dataset \texttt{Syn}(FN, FP), \bas outperforms 
    baselines even as embedding quality degrades. \bas is particularly effective 
    compared to \block when False Negatives (FN) are high, and compared to \wwj 
    when False Positives (FP) are high.}
    \label{fig:synthetic}
\end{figure}

%% file: tex/figure_sensi-B.tex
\begin{figure}[t!]
    \centering
    \begin{subfigure}{0.23\textwidth}
        \includegraphics[width=\linewidth]{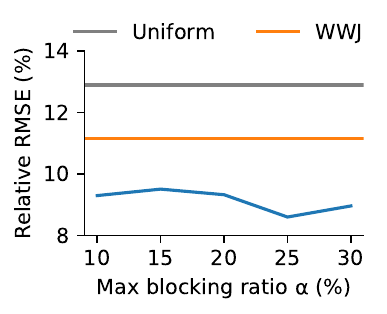}
        \caption{$\alpha$ vs relative RMSE for the \dwebmasters dataset.} 
        \label{fig:sensi-alpha}
    \end{subfigure}
    \hfill
    \begin{subfigure}{0.227\textwidth}
        \includegraphics[width=\linewidth]{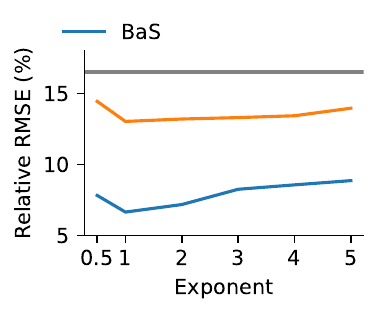}
        \caption{Weight exponent vs relative RMSE for the \dcompany dataset.} 
        \label{fig:sensi-exp}
    \end{subfigure}
    \caption{Sensitivity of \bas in terms of $\alpha$ and the weight exponents.
    \bas consistently outperforms baselines.}
\end{figure}

%% file: tex/related_work.tex
\section{Related Work}\label{sec:rel-work}
We draw inspiration from classic join optimization algorithms and leverages 
recent advancements in EM and AQP. We summarize the relevant literature on these areas.

\minihead{Optimization of approximate join queries} Joins are fundamental but 
notoriously expensive operations in relational database management systems 
\cite{db-join-general,rdb-join-general}. To optimize and accelerate join queries 
with structured data, various algorithms and systems are developed in both 
offline and online scenarios. 

Prior work developed offline sampling algorithms to address the challenge that 
joining uniform samples does not constitute an independent sample of the join 
\cite{join-sample1,join-sample2,structured-join-sampling}. Specifically, 
researchers proposed the Stratified-Universe-Bernoulli sampling that achieves 
the theoretical optimal sampling performance within a constant factor 
\cite{structured-join-sampling}. For joins with specific types, prior work 
proposed precomputed synopses and histogram method 
\cite{join-histogram,structured-join-synopsis}. To accelerate join queries in 
the online scenario, prior work proposed ripple join \cite{ripplejoin}. Wander 
Join improved the sampling efficiency using indices and random walks 
\cite{structured-join-index}. 

Recently, for unstructured data, researchers have focused on efficient 
implementations of the semantic join operator \cite{patel2024lotus,
trummer2025implementing}. Notably, \citet{trummer2025implementing} proposes a 
batch-oriented block nested loop algorithm that optimizes prompt construction to 
evaluate multiple row pairs in a single LLM invocation. This line of work 
addresses the \textit{physical execution efficiency} of the join operator. In 
contrast, \bas addresses the logical approximate execution of such queries via 
reducing variance. These approaches are highly complementary. \bas optimizes 
which pairs to check, while \cite{trummer2025implementing} reduces the cost of 
checking those pairs, particularly in the dense blocking regime.

\minihead{Entity Matching} EM seeks to identify common objects across 
datasets \cite{em-survey1,blocking1}. Earlier approaches have modeled EM 
problems as similarity joins \cite{similarity-join,em-similarity}. However, EM 
is prohibitively expensive due to pair-wise comparisons \cite{em-unstructured}. 
The blocking method is developed to reduce the number of pair-wise comparisons. Traditional 
blocking methods relied on heuristics and expert knowledge to develop rule-based 
block constructions, such as suffix arrays \cite{em-expert1,em-expert2}, sorted 
neighborhood \cite{em-expert3,em-expert4}, and sorted block \cite{em-expert5}.

Recent advances in ML models, especially LLMs, have facilitated learning-based 
methods in EM systems 
\cite{deep-em,dataset-company-work2,dataset-company-work1,blocking2}. Existing 
work proposed to use language models for textual data embedding, classification 
\cite{deep-em,dataset-company-work2,dataset-company-work1}, and blocking 
\cite{blocking2}. Additionally, recent work improved the robustness of 
learning-based methods using data augmentation 
\cite{em-learning-robust}. 

\minihead{AQP for unstructured data} Recent work addressed AQP for unstructured 
data using machine learning models. To reduce the cost of executing expensive 
machine learning models, prior work leveraged approximate indexes \cite{focus}, 
binary classifiers \cite{probabilistic-predicates}, and specialized models 
\cite{noscope} for AQP without statistical guarantees. Furthermore, 
advancements including query-driven object tracking algorithms and 
segmentation-based tracking queries, further reduced the query cost. To provide guarantees on accuracy, existing work used proxy models with 
control variate to process aggregation and limit queries on video frames 
\cite{blazeit}, importance sampling to process selection queries \cite{supg}, 
and stratified sampling to process aggregation queries \cite{abae}. 

Join operations have been commonly used for unstructured data in prior AQP 
systems for unstructured data \cite{jo2024thalamusdb,liu2024optimizing,
patel2024lotus}. However, prior work either focused on join with structured join 
keys \cite{jo2024thalamusdb,liu2024optimizing}, which is orthogonal to our work, 
or applies calibration techniques without statistical guarantees 
\cite{patel2024lotus}.

%% file: tex/conclusion.tex
\section{Conclusion}
In this work, we propose Weighted Wander Join and Blocking-augmented Sampling to 
approximately process analytical join queries over unstructured data with 
statistical guarantees. We address the inaccuracy of embedding models by 
integrating sampling and blocking, and propose an adaptive allocation algorithm 
to minimize the errors. Theoretically, we prove the optimality and 
superiority of \bas. Our evaluation results demonstrate that \bas outperforms 
baselines by up to 21$\times$, saving costs by up to 188$\times$ on 
real-world datasets. These results highlight the potential of embeddings and 
cost-effective algorithms in analyzing unstructured data with joins. 

\section{Acknowledgements}
We are grateful to the CloudLab for providing computing resources
for experiments \cite{Duplyakin}. We thank Kaimeng Zhu and Siheng Pan for their
help during the initial implementation and experimentation of this work. This 
research was supported in part by Google.

%% file: tex/appendix/cost_analysis.tex
\section{Cost and Latency Analysis} \label{subsec:runtime}
\begin{figure}[t]
        \centering
        \includegraphics[width=0.6\linewidth]{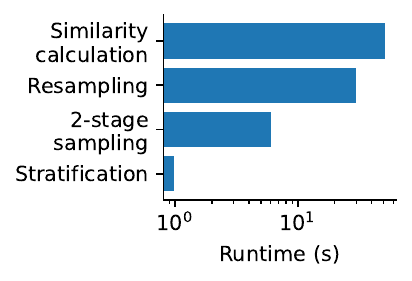}
        \caption{Decomposition of the average (log-scale) latency for CPU computation of \bas.}
        \label{fig:latency}
    \end{figure}
In this section, we analyze the cost of executing ML models and the latency of
our algorithm. We use \dquora as an example to show that the cost of executing
embedding models is minimal relative to running the Oracle. We observe 
similar results on other datasets. Consider OpenAI's text-embedding-3-large 
model, the most expensive among the embedding models evaluated in Section 
\ref{subsec:sensitivity}. It costs \$0.05 to generate embeddings for all data records 
using the batch API. In contrast, labeling all data pairs using the 
state-of-the-art prompt template \cite{em-llm1} would cost \$78,848 with GPT-4o 
or \$4,730 with GPT-4o mini using the batch API. This is about six orders of 
magnitude more expensive than generating embeddings.

We then analyze the latency of the CPU computations by measuring the 
single-thread runtime of \bas on Intel 8562Y+ with 200 GB memory. As shown in 
Figure~\ref{fig:latency}, similarity calculation and resampling contribute to 
95\% of the latency. The overall latency on CPU is lower than 90 seconds, which is 
minimal compared to executing the Oracle on GPUs or relying on human labelers.

%% file: tex/appendix/notations.tex
\begin{table*}[h]
    \centering
    \normalsize
    \caption{Summary of Notation}
    \begin{tabular}{ll}
        \toprule
        Symbol & Description \\
        \midrule
        $D$                                     
        & stratified dataset of data tuples in the cross product of join tables \\
        
        $W$                                     
        & normalized similarity scores of tuples in $D$ \\
        
        $O$                                     
        & Oracle \\
        
        $g$                                     
        & attribute where the aggregate is computed \\
        
        $b, b_1, b_2$                           
        & Oracle budget of the overall procedure, pilot sampling, and sampling+blocking execution \\
        
        $p$                                     
        & confidence \\
        
        $\alpha$                                
        & maximum blocking ratio \\
        
        $K$                                     
        & number of strata \\
        
        $\widehat\acount, \widehat\asum, \widehat\aavg, \widehat{\texttt{AGG}}$   
        & \makecell[l]{estimated aggregate for the entire dataset, sampling regime (with a subscript $s$), \\ and blocking regime (with a subscript $b$)}\\
        
        $l, u$                                  
        & lower, upper bound of the confidence interval \\
        
        $\mathcal{O}(\cdot)$                    
        & Bachmann-Landau big-O notation \\
        
        $\beta, \beta^*, \hat\beta^*$                        
        & the allocated strata to blocking that is given, optimal, and estimated \\
        
        $n^{(1)}_i, n_i$                        
        & assigned Oracle budget for stratum $i$ during pilot sampling and blocking+sampling execution \\
        
        $\sigma_i^2, \hat\sigma_i^2$                        
        & (estimated) sampling variance of stratum $i$ \\
        
        $\mu, \hat\mu$                        
        & (estimated) expectation of stratum $i$ \\
        
        $S^{(1)}_i, S_i$                        
        & sampled data tuples from stratum $i$ during pilot sampling and blocking+sampling \\
        
        $t_j$                                   
        & t-statistic of the $j$-th resampling iteration \\
        
        $\tilde{D}, \tilde{D}^{(s)}$
        & the set of matching tuples of the entire dataset or the sampling regime \\

        \bottomrule
    \end{tabular}
\end{table*}

%% file: tex/appendix/setup.tex
\subsection{Setup}
In this section, we describe \bas in detail as a setup for the theoretical 
analysis. We will go over the entire procedure of \bas, including stratification,
pilot sampling, blocking, sampling, and resampling.

\minihead{Inputs and Outputs} 
\bas takes as inputs the cross product of tables $D$, the similarity scores $W$, 
the Oracle $O$, the overall Oracle budget $b$ divided for pilot sampling ($b_1$) 
and bloking+sampling execution ($b_2$), the confidence $p$, the maximum blocking 
ratio $\alpha$, and the number of strata $K$. \bas outputs an estimated aggregate 
$\hat\mu$ and its confidence interval $[l, u]$.

\minihead{Stratification}
We divide $D$ into a maximum blocking regime and a minimum sampling regime 
($D_0$). The maximum blocking regime contains tuples with the top 
$\alpha \cdot b$ similarity scores, where $\alpha$ is a parameter between 0 and 
1 to control the size of the maximum blocking regime. Next, we stratify the 
maximum blocking regime into $K$ strata ($D_1, \ldots, D_K$) with equal sizes. 
The number of strata $K$ is automatically determined to ensure that each stratum 
has an Oracle budget of at least 1,000. 

\minihead{Pilot Sampling}
For each stratum, we execute \wwj to obtain a pilot sample $S_i^{(1)}$. The 
Oracle budget for the stratum $i$ in the pilot sampling is calculated as
\begin{equation}
    n^{(1)}_i = b_1 \cdot \frac{ \sum_{s \in D_i} W(s)}{\sum_{s \in D} W(s)} \label{eq:sample-size}
\end{equation}
We can estimate the sampling variance as
\begin{equation*}
    \hat\sigma^2_i = \frac{1}{n^{(1)}-1} \sum_{s \in S^{(1)}_i} 
    \left(\frac{g(s)O(s)}{W(s)|D_i|}-\left(\frac{1}{n_i^{(1)}}\sum_{s' \in S^{(1)}_i} \frac{g(s')O(s')}{W(s')|D_i|}\right)\right)^2 W(s)
\end{equation*}
Given an allocation $\beta$, we can then estimate the MSE of \bas for a \asum 
aggregate as follows
\begin{equation*}
    \widehat{MSE}_\asum(D, \beta, W, b_2) = \sum_{\substack{0 \le i\le k, i\notin \beta}} \frac{|D_i|^2}{n_i} \cdot \hat\sigma_i^2
\end{equation*}
where $n_i$ is the assigned Oracle budget for stratum $i$ in the stage of 
blocking+sampling execution, calculated as
\begin{equation*}
    n_i = 
     \begin{cases}
        (b_2 - \sum_{j \in \beta} |D_i|) \cdot \frac{\sum_{s \in D_i} W(s)}{\sum_{1\le j\le K, j\notin \beta}\sum_{s \in D_j} W(s)}         & i \notin \beta \\
        |D_i|    & i \in \beta \\
     \end{cases}
\end{equation*}
Next, we obtain the estimated optimal allocation by solving the following 
optimization problem using iterative methods.
\begin{equation*}
    \hat\beta^* = \mathop{\arg\min}_{\beta \subset \{1, \ldots, K\}} \widehat{MSE}_{\asum}(D, \beta, W, b_2)
\end{equation*}

\minihead{Blocking+Sampling}
Given the optimal allocation $\hat\beta^*$, we use the Oracle budget $b_1$ to 
execute the Oracle on the strata that are allocated to the blocking regime. 
On the remaining strata, we execute \wwj to obtain a sample using the remaining 
Oracle budget. We can obtain the sample $S_i$ on the stratum $i$ that is 
allocated to sampling. Then, we merge the result of all strata and the results 
of pilot sampling to estimate the aggregate. Specifically, we estimate the 
aggregate as follows:
\begin{align*}
    & \widehat\acount_s = \frac{1}{\sum_{i \notin \hat\beta^*} n_i} \sum_{i \notin \hat\beta^*}\sum_{s \in S_i} \frac{O(s)}{W(s) |D_i|},\ 
    \acount_b = \sum_{i\in \hat\beta^*} \sum_{s \in D_i} O(s) \\
    & \widehat\acount_s = \frac{1}{\sum_{i \notin \hat\beta^*} n_i} \sum_{i \notin \hat\beta^*}\sum_{s \in S_i} \frac{g(s) O(s)}{W(s) |D_i|},\ 
    \acount_b = \sum_{i\in \hat\beta^*} \sum_{s \in D_i} g(s) O(s) \\
    & \widehat\acount = \widehat\acount_s + \acount_b,\  \widehat\asum = \widehat\asum_s + \asum_b \\
    & \widehat\aavg = \left(\widehat\asum_s + \asum_b\right)\Big/\left(\widehat\acount_s + \acount_b\right)
\end{align*}

\minihead{Resampling}
We apply the bootstrap-t resampling scheme to calculate the CI 
\cite{boostrap-comparison}. The bootstrap-t scheme estimates the standard error 
$\frac{\hat\mu - \mu}{\sigma}$ (\ie, t-statistic) of the underlying distribution 
by resampling existing samples. To process aggregation queries, we first 
calculate the mean and standard deviation of the estimator. Next, we use 
sampling with replacement to resample from all existing samples 
($S^{(1)} \cup S$). We calculate the t-statistic of the $j$-th iteration as 
follows.
\begin{equation*}
    t_j = \frac{\widehat{\texttt{AGG}}_j - \widehat{\texttt{AGG}}}{\hat\sigma_j}
\end{equation*}
where $\widehat{\texttt{AGG}}_j$ and $\hat\sigma_j$ are the estimated aggreagte
and standard deviation on the $j$-th resample. To achieve statistical guarantees, 
we repeat the resampling a sufficient number of times (\eg, 1000). Finally, we 
use the percentiles of resampled t-statistics to construct the CI, that is
\begin{equation*}
    l = \widehat{\texttt{AGG}} - \mathrm{Percentile}\left(t, \frac{1-p}{2}\right), \quad
    u = \widehat{\texttt{AGG}} - \mathrm{Percentile}\left(t, 1 - \frac{1-p}{2}\right)
\end{equation*}

%% file: tex/appendix/optimal.tex
\subsection{\bas Converges to the Optimal Allocation} \label{sec:app-optimal}
\begin{theorem} \label{theom:opt}
The MSE with the estimated minimizer $\hat\beta^*$ converges to that with the 
true minimizer $\beta^*$ with the rate $\mathcal{O}\left(1/\sqrt{b_1}\right)$:
\begin{equation}
    \frac{MSE(D, \widehat\beta^*, W, b_2) - 
    MSE(D, \beta^*, W, b_2)}{MSE(D, \beta^*, W, b_2)} 
    = \mathcal{O}\left(\frac{1}{\sqrt{b_1}}\right) \notag
\end{equation}
\end{theorem}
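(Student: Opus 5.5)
The plan is the standard ``plug-in argmin'' argument. $\widehat{MSE}$ approximates $MSE$ uniformly over the finitely many allocations $\beta\subset\{1,\ldots,K\}$, with relative error $\mathcal{O}_p(1/\sqrt{b_1})$. The minimizer of the estimated objective is therefore near-optimal for the true objective, up to twice that uniform error.

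\textbf{Step 1: structure of the objective.} For a fixed allocation $\beta$, I would write
\begin{equation*}
MSE(D,\beta,W,b_2)=\sum_{i\notin\beta}\frac{|D_i|^2}{n_i(\beta)}\sigma_i^2 ,
\qquad
\widehat{MSE}(D,\beta,W,b_2)=\sum_{i\notin\beta}\frac{|D_i|^2}{n_i(\beta)}\hat\sigma_i^2 .
\end{equation*}
The budgets $n_i(\beta)$ depend only on $W$, $|D_j|$ and $b_2$, so they are deterministic and not estimated. Each $MSE(\cdot,\beta,\cdot,\cdot)$ is therefore a fixed nonnegative linear combination of the true stratum variances $\sigma_0^2,\ldots,\sigma_K^2$, and $\widehat{MSE}$ is the same combination of the pilot estimates $\hat\sigma_i^2$.

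\textbf{Step 2: convergence of the pilot variances.} The pilot sample in stratum $i$ consists of $n_i^{(1)}=b_1\sum_{D_i}W/\sum_D W$ i.i.d.\ importance-weighted draws, so $n_i^{(1)}=\Theta(b_1)$ with $K$ fixed. The sample variance of i.i.d.\ bounded (or finite fourth moment) variables satisfies
\begin{equation*}
\hat\sigma_i^2-\sigma_i^2=\mathcal{O}_p\bigl(1/\sqrt{n_i^{(1)}}\bigr)=\mathcal{O}_p\bigl(1/\sqrt{b_1}\bigr),
\end{equation*}
by the CLT or Chebyshev applied to $\mathrm{Var}[\hat\sigma_i^2]=(\mu_4-\sigma_i^4)/n+o(1/n)$. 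Boundedness holds because $D$ is finite and $W>0$. Since there are finitely many strata, a union bound gives
\begin{equation*}
\max_i\,|\hat\sigma_i^2-\sigma_i^2|=\mathcal{O}_p\bigl(b_1^{-1/2}\bigr).
\end{equation*}
For strata with $\sigma_i^2=0$, the estimate $\hat\sigma_i^2$ is identically $0$, which only helps.

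\textbf{Step 3: uniform relative error.} Using $|\sum_i c_i(\hat\sigma_i^2-\sigma_i^2)|\le \max_i\bigl(|\hat\sigma_i^2-\sigma_i^2|/\sigma_i^2\bigr)\sum_i c_i\sigma_i^2$ for $c_i\ge 0$, I get
\begin{equation*}
\sup_\beta \frac{|\widehat{MSE}(\beta)-MSE(\beta)|}{MSE(\beta)}\le \varepsilon:=\max_{i:\sigma_i>0}\frac{|\hat\sigma_i^2-\sigma_i^2|}{\sigma_i^2}=\mathcal{O}_p\bigl(b_1^{-1/2}\bigr).
\end{equation*}
This is the sense in which summation and multiplication by constants preserve the rate.

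\textbf{Step 4: the argmin sandwich.} By optimality of $\hat\beta^*$ for $\widehat{MSE}$,
\begin{equation*}
(1-\varepsilon)MSE(\hat\beta^*)\le\widehat{MSE}(\hat\beta^*)\le\widehat{MSE}(\beta^*)\le(1+\varepsilon)MSE(\beta^*).
\end{equation*}
Rearranging gives
\begin{equation*}
\frac{MSE(\hat\beta^*)-MSE(\beta^*)}{MSE(\beta^*)}\le\frac{2\varepsilon}{1-\varepsilon}=\mathcal{O}_p\bigl(b_1^{-1/2}\bigr).
\end{equation*}
The difference is nonnegative by definition of $\beta^*$, so this bound gives the claimed rate.

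\textbf{Main obstacle.} The delicate part is Step 2, namely making ``$\mathcal{O}$'' precise. The rate is stochastic, so it should be read as $\mathcal{O}_p$, or equivalently in expectation via $\mathbb{E}|\hat\sigma^2-\sigma^2|\le\sqrt{\mathrm{Var}\,\hat\sigma^2}$. Step 2 also needs finite fourth moments of $g(s)O(s)/W(s)$. I would first try to get these by arguing that $D$ is finite and $\min W>0$, so all importance weights are bounded. I would also note that the pilot strata may be sampled with WWJ weights restricted to $D_i$; independence across strata is not needed because only a union bound over the $K+1$ strata is used.
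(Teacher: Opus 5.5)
Your proposal is correct and follows the same route as the paper. Both arguments show that the pilot variance estimates concentrate at rate $b_1^{-1/2}$, transfer this to a relative-error bound on $\widehat{MSE}(\beta)$, and close with the sandwich $MSE(\hat\beta^*)\le\widehat{MSE}(\hat\beta^*)/(1-\varepsilon)\le\widehat{MSE}(\beta^*)/(1-\varepsilon)\le\frac{1+\varepsilon}{1-\varepsilon}MSE(\beta^*)$.

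Your version is tighter in three places:
\begin{itemize}
\item You observe that $\widehat{MSE}$ is a deterministic nonnegative combination of the $\hat\sigma_i^2$. This makes the paper's concentration of $\hat\mu_i$ and its sum/product/division propagation rules unnecessary.
\item You use the general fourth-moment variance of $\hat\sigma_i^2$, rather than the paper's formula $2\sigma_i^4/(n_i-1)$, which is valid only for Gaussian data.
\item You take the maximum over strata, so your bound is uniform in $\beta$. This is what legitimately lets you apply it at the random $\hat\beta^*$.
\end{itemize}
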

\begin{proof}
We first show the estimated MSE converges to the true MSE. To achieve that, we
establish the convergence rate of the sample mean $\hat\mu_i$ and variance 
$\hat\sigma_i^2$ for stratum $i$ because MSE estimations are calculated with 
summations, multiplications, and divisions of basic estimators $\hat\mu_i, 
\hat\sigma_i^2$.

We construct the concentration inequality for the sample mean $\hat\mu_i$ and 
sample variance $\hat\sigma_i^2$ of each stratum $D_i, i=0, \ldots, K$. Given 
the i.i.d. sample, we can bound the relative errors of $\hat\mu_i$ and 
$\hat\sigma_i^2$ with Chebeyshev's Inequality, where the variance of an i.i.d. 
sample variance is calculated as 
$Var\left[\hat\sigma_i^2\right] = \frac{2\sigma_i^4}{n_i-1}$
\cite{stats-infer}.
\begin{align}
    \mathbb{P}\left[
        \left|
            \frac{\mu_i - \hat\mu_i}{\mu_i}
        \right| \le \frac{\sigma_i}{\mu_i\sqrt{\delta n_i^{(1)}}}
    \right] & \ge 1 - \frac{\delta}2 \\
    \mathbb{P}\left[
        \left|\frac{\sigma_i^2 - \hat\sigma_i^2}{\sigma_i^2}\right| 
        \le \sqrt{\frac{2}{\delta(n_i^{(1)}-1)}}
    \right] & \ge 1 - \frac{\delta}2
\end{align}
where $\delta$ is a small probability.

Since $n_i^{(1)}$ is linearly related to the Oracle budget $b_1$ (Eq. 
\ref{eq:sample-size}), we have the following big-$\mathcal{O}$ notations showing 
the convergence rate that holds with high probability.
\begin{equation}
    \frac{\mu_i - \hat\mu_i}{\mu_i} = \mathcal{O}\left(b_1^{-1/2}\right), \quad
    \frac{\sigma_i^2 - \hat\sigma_i^2}{\sigma_i^2} 
        = \mathcal{O}\left(b_1^{-1/2}\right) \label{eq:rerror}
\end{equation}

Next, we show the convergence rate of relative error is preserved after 
summations, multiplications, and divisions. Namely, for unbiased estimators 
$\mu_1$ and $\mu_2$ with relative error converging to 0 with a rate of 
$\mathcal{O}\left(b_1^{-1}\right)$ ($t > 0$), we have Equations 
\ref{eq:error-sum} - \ref{eq:error-div}.
\begin{align}
    (\mu_1+\mu_2) - (\hat\mu_1+\hat\mu_2) 
    &= \mu_1 \cdot \mathcal{O}\left(b_1^{-1}\right) 
        + \mu_2 \cdot \mathcal{O}\left(b_1^{-1}\right) \notag \\
    &= (\mu_1+\mu_2)\cdot \mathcal{O}\left(b_1^{-1}\right) \label{eq:error-sum} \\
    \mu_1\mu_2 - \hat\mu_1\hat\mu_2 
    &= \mu_1\mu_2 - \mu_1\left(1+\mathcal{O}\left(b_1^{-1}\right)\right)
        \mu_2\left(1+\mathcal{O}\left(b_1^{-1}\right)\right)\notag \\
    &= \mu_1\mu_2\left(\mathcal{O}\left(b_1^{-1}\right)
        + \mathcal{O}\left(b_1^{-2t}\right)\right) \notag\\
    &= \mu_1\mu_2\mathcal{O}\left(b_1^{-1}\right) \label{eq:error-multi}\\
    \mu^{-1} - \hat\mu^{-1} &= \mu^{-1}\hat\mu^{-1}(\hat\mu - \mu) \notag\\
    &= \mu^{-2}\left(1+\mathcal{O}\left(b_1^{-1}\right)\right)^{-1}
        \mu\mathcal{O}\left(b_1^{-1}\right) \notag \\
    &= \mu^{-1}\mathcal{O}\left(b_1^{-1}\right) \label{eq:error-div}
\end{align}

Therefore, given the convergence rate of basic estimators (Eq. \ref{eq:rerror}) 
and propagation rules (Eq. \ref{eq:error-sum}-\ref{eq:error-div}), the relative 
error of our optimization objective converges to 0 with the same rate 
$\mathcal{O}\left(b_1^{-1/2}\right)$, with high probability. Namely, the following 
probabilistic bound holds for any $\beta \subset \{1, \ldots, K\}$ with a high 
probability of $1-\delta/2$
\begin{equation}
    \mathbb{P} \left[
        \left|
            \frac{MSE_{\texttt{AGG}}(D, \beta, W, b_2) 
                - \widehat{MSE}_{\texttt{AGG}}(D, \beta, W, b_2)}
            {MSE_{\texttt{AGG}}(D, \beta, W, b_2)}
        \right| \le \frac{C}{\sqrt{b_1}}
    \right] \ge 1 - \frac{\delta}{2} 
    \label{eq:mse-prob-bound}
\end{equation}
where $C$ is a constant independent of $n$. Based on Equation 
\ref{eq:mse-prob-bound}, we can derive the following bound for MSE that holds 
with a high probability of $1-\delta/2$.
\begin{equation*}
    \frac{1}{1 + \frac{C}{\sqrt{n}}} \widehat{MSE}_{\texttt{AGG}}(\beta) \le 
        MSE_{\texttt{AGG}}(\beta) \le 
        \frac{1}{1 - \frac{C}{\sqrt{n}}} \widehat{MSE}_{\texttt{AGG}}(\beta)
\end{equation*}
where we omit the parameters $D, W, b_2$ for simplicity.

We then derive the upper bound of the difference between the MSE of \bas and the 
optimal MSE. Given the estimated minimizer $\hat\beta^*$ and the true minimizer 
$\beta^*$, we can establish the following upper bound of the difference between 
$MSE_{\texttt{AGG}}(\hat\beta^*)$ and $MSE_{\texttt{AGG}}(\beta^*)$.
\begin{align}
    & MSE_{\texttt{AGG}}(\hat\beta^*) - MSE_{\texttt{AGG}}(\beta^*) 
    \le \frac{1}{1 - \frac{C_1}{\sqrt{n}}} 
        \widehat{MSE}_{\texttt{AGG}}(\hat\beta^*) 
        - MSE_{\texttt{AGG}}(\beta^*) \label{step:bound1}\\
    &\le \frac{1}{1 - \frac{C_1}{\sqrt{n}}} 
        \widehat{MSE}_{\texttt{AGG}}(\beta^*) 
        - MSE_{\texttt{AGG}}(\beta^*) \label{step:def}\\
    &\le \frac{1}{1 - \frac{C_1}{\sqrt{n}}}
        \left( 1 + \frac{C_2}{\sqrt{n}}\right) 
        MSE_{\texttt{AGG}}(\beta^*) 
        - MSE_{\texttt{AGG}}(\beta^*) \label{step:bound2}\\
    & =  \frac{C_1+C_2}{\sqrt{n} - C_1} MSE_{\texttt{AGG}}(\beta^*) 
\end{align}
where inequalities \ref{step:bound1} and \ref{step:bound2} apply the probabilistic inequalities of $MSE_{\texttt{AGG}}(\hat\beta^*)$ and $MSE_{\texttt{AGG}}(\beta^*)$
respectively, inequality \ref{step:def} is due to the definition of estimated minimizer $\hat\beta^*$, and $C_1, C_2$ are constants from the probabilistic
inequalities.

Finally, we derive the upper bound of the relative error of the optimization 
objective that holds with high probability.
\begin{equation*}
    \frac{MSE_{\texttt{AGG}}(\hat\beta^*) 
        - MSE_{\texttt{AGG}}(\beta^*)}{MSE_{\texttt{AGG}}(\beta^*)} 
    \le \frac{C_1 + C_2}{\sqrt{n}-C_1}
\end{equation*}
This upper bound shows that the relative error between the MSE with estimated 
minimizer $\hat\beta^*$ and the MSE with actual minimizer $\beta^*$ converges to 
0 at the rate $\mathcal{O}\left(b_1^{-1/2}\right)$ with high probability.
\end{proof}

%% file: tex/appendix/comparison.tex
\subsection{\bas Outperforms or Matches \wwj} \label{sec:app-comparison}
\begin{theorem} \label{theo:comp}
    If there exists an allocation $\beta$ such that the following two conditions hold
    \begin{align}
        \mathbb{E}_{s\in \tilde{D}^{(s)}}\left[\frac{1/|\tilde{D}^{(s)}|}{W(s)}\right] 
        &\le \mathbb{E}_{s\in D}\left[\frac{1/|D|}{W(s)}\right]
        \label{eq:cond1}\\
        \frac{|\tilde{D}^{(s)}|^2}{b_2^{(s)}} &\le \frac{|D|^2}{b} \label{eq:cond2}
    \end{align}
    \bas outperforms \wwj asymptotically, \ie,
    \begin{equation*}
        MSE_{\texttt{SUM}} = C \cdot MSE^{(w)}_{\texttt{SUM}} 
            + \mathcal{O}\left(b_1^{-1}b_2^{-1/2}\right) 
    \end{equation*}
    where $C$ is a coefficient less than 1:
    \begin{equation}
        C < \frac{|\tilde{D}^{(s)}|^2 \big/ b_2^{(s)}}{|\tilde{D}|^2 / b} 
            \frac{\mathbb{E}_{s\in \tilde{D}^{(s)}}
                \left[\frac{1/|\tilde{D}^{(s)}|}{W(s)}\right]}
                { \mathbb{E}_{s\in \tilde{D}}\left[\frac{1/|\tilde{D}|}{W(s)}\right]} 
        \le 1 \notag
    \end{equation}
    Otherwise, \bas matches \wwj asymptotically, \ie,
    \begin{equation*}
        MSE_{\texttt{SUM}} \le MSE^{(w)}_{\texttt{SUM}} 
            + \mathcal{O}\left(b_1^{-1}b_2^{-1/2}\right)
    \end{equation*}
\end{theorem}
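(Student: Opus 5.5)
We compare the two procedures by writing their MSEs in closed form. Then we transfer the comparison from a fixed allocation $\beta$ to the adaptively chosen $\hat\beta^*$ using Theorem~\ref{theom:opt}.

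\textbf{Step 1: variance of \wwj.} For \wwj with budget $b$ and importance weights $W$ normalized over $D$, the estimator of \asum is an average of $b$ i.i.d.\ terms $g(s)O(s)/W(s)$. Its variance is
\begin{equation*}
MSE^{(w)}_{\texttt{SUM}} = \frac{1}{b}\Bigl(\sum_{s\in\tilde D}\frac{g(s)^2}{W(s)} - \asum^2\Bigr).
\end{equation*}
Taking $g$ bounded and treating the matching tuples as the only nonzero contributors, the leading term equals $\frac{|\tilde D|^2}{b}\,\mathbb{E}_{s\in\tilde D}\bigl[\frac{1/|\tilde D|}{W(s)}\bigr]$ up to a factor set by $g$.

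\textbf{Step 2: variance of \bas for a fixed $\beta$.} The blocking regime contributes zero variance. The sampling regime is \wwj run on $D\setminus\cup_{i\in\beta}D_i$ with renormalized weights $W_s$ and budget $b_2^{(s)}$. The same computation gives $\frac{|\tilde D^{(s)}|^2}{b_2^{(s)}}\,\mathbb{E}_{s\in\tilde D^{(s)}}\bigl[\frac{1/|\tilde D^{(s)}|}{W_s(s)}\bigr]$ as the leading term.

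\textbf{Step 3: the ratio $C$.} Dividing the expression of Step 2 by that of Step 1 yields exactly the ratio defining $C$ in the statement. Condition~\eqref{eq:cond1} bounds the expectation factor by one. To apply it, I will use that $\mathbb{E}_{s\in D}[\frac{1/|D|}{W(s)}]$ relates to the $\tilde D$ version, since $W$ is a normalized distribution. Condition~\eqref{eq:cond2} bounds the density factor by one, after replacing $|D|$ by $|\tilde D|$ in the denominator. I expect this replacement to be the delicate bookkeeping point: the two conditions are stated over $D$ while $C$ is stated over $\tilde D$. I would handle it by noting $|\tilde D|\le|D|$ and absorbing the subtracted $\asum^2$ terms, which can only make the \bas side smaller because $\asum_s\le\asum$ in magnitude for nonnegative $g$. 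Together these give $MSE_{\texttt{SUM}}(\beta)\le C\cdot MSE^{(w)}_{\texttt{SUM}}$ with $C\le 1$.

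\textbf{Step 4: passing to $\hat\beta^*$.} Since $\beta^*$ minimizes the MSE, $MSE(\beta^*)\le MSE(\beta)$. Theorem~\ref{theom:opt} gives $MSE(\hat\beta^*) = MSE(\beta^*)\bigl(1+\mathcal{O}(b_1^{-1/2})\bigr)$ with high probability. Because $MSE(\beta^*)=\mathcal{O}(1/b)$, the correction term is of order $b^{-1}b_1^{-1/2}$, which produces the remainder term in the statement.

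\textbf{Step 5: the "otherwise" case.} If no $\beta$ satisfies the conditions, take $\beta=\emptyset$. Then \bas reduces to \wwj on all of $D$ (pilot samples merged), so $MSE(\beta^*)\le MSE^{(w)}_{\texttt{SUM}}$ up to the negligible budget split. The same Step 4 argument then gives the matching bound.

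The main obstacle is Step 3, namely reconciling the normalizations of $W$ versus $W_s$ and of $D$ versus $\tilde D$ so that the two conditions indeed multiply into $C\le1$. Step 4 is routine given Theorem~\ref{theom:opt}.
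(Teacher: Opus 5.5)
Your outline matches the paper's structure: closed-form variances, a ratio controlled by the two conditions, transfer to $\hat\beta^*$ via Theorem~\ref{theom:opt} using $MSE=\mathcal{O}(1/b)$, and $\beta=\emptyset$ for the fallback. But Step~3, which carries all the content, does not work as proposed.

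\textbf{The subtracted terms.} Write each MSE as a leading term minus a squared-mean term, $MSE^{(\bas)}=L_s-S_s$ and $MSE^{(w)}=L_w-S_w$. Then $\frac{L_s-S_s}{L_w-S_w}\le\frac{L_s}{L_w}$ holds if and only if $S_s/L_s\ge S_w/L_w$. Your observation $\asum_s\le\asum$ (a small $S_s$) therefore works against you: a smaller subtracted term makes the \bas MSE larger relative to its leading term. The paper resolves this in two steps.
\begin{enumerate}
\item It assumes $g$ is independent of $O$ and has the same first two moments in both regimes. This is also what your ``factor set by $g$'' needs in order to cancel. After normalizing by $|\tilde D^{(s)}|^2$ and $|\tilde D|^2$, both subtracted terms then become the same constant $k=\frac{1}{1+CV_g}$.
\item The expectation factor becomes $\frac{a-k}{c-k}$, with $a=\mathbb{E}_{s\in\tilde D^{(s)}}[\frac{1/|\tilde D^{(s)}|}{W(s)}]$ and $c=\mathbb{E}_{s\in\tilde D}[\frac{1/|\tilde D|}{W(s)}]$. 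Since $\frac{a-k}{c-k}\le\frac{a}{c}$ exactly when $a\le c$, the first condition is used twice: once to discard the subtracted terms and once to bound the expectation ratio by one.
\end{enumerate}
Before this, the paper also needs the Cauchy--Schwarz bound $\sum_{i\notin\beta}|\tilde D^{(s)}_i|^2/r_i\ge|\tilde D^{(s)}|^2$. It passes from the stratified, proportionally allocated sampling regime that \bas actually runs to your single-\wwj model. You should state it, since it is what makes your Step~2 a valid upper bound on the \bas MSE.

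\textbf{Reconciling $D$ with $\tilde D$.} You correctly spotted this mismatch, but it cannot be resolved the way you suggest.
\begin{itemize}
\item For the second condition, $|\tilde D|\le|D|$ points the wrong way. From $|\tilde D^{(s)}|^2/b_2^{(s)}\le|D|^2/b$ you cannot get $|\tilde D^{(s)}|^2/b_2^{(s)}\le|\tilde D|^2/b$, whose right-hand side is smaller.
\item For the first condition, normalization of $W$ gives no ordering between $\mathbb{E}_{s\in D}[\frac{1/|D|}{W(s)}]$ and $\mathbb{E}_{s\in\tilde D}[\frac{1/|\tilde D|}{W(s)}]$. For uniform $W$ they equal $1$ and $|D|/|\tilde D|$. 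If instead $W$ spreads mass $1-\epsilon$ evenly over $\tilde D$, the second is $1/(1-\epsilon)$ while the first diverges as $\epsilon\to0$.
\end{itemize}
The paper makes no such conversion. In its proof both conditions are invoked with $\tilde D$ on the right-hand side, namely $\mathbb{E}_{s\in\tilde D}[\frac{1/|\tilde D|}{W(s)}]$ and $|\tilde D|^2/b$, which is exactly what the expression for $C$ requires. You should assume the conditions in that form.

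With those conditions and the two ingredients above, your Steps~4 and~5 go through as you describe. Your remainder $b^{-1}b_1^{-1/2}$ is what the paper derives, and it is dominated by the stated $b_1^{-1}b_2^{-1/2}$ because $b_1b_2\le b^2$.
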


\begin{proof}
We first derive the ratio between the MSE of \bas and importance sampling, 
given a deterministic allocation $\beta$. The MSe of \bas for a \asum aggregate 
can be calculated as follows.
\begin{align*}
    & MSE_{\asum} \\ \notag 
    & = \sum_{i\notin \beta} \frac{1}{n_i} \left( 
            \mathbb{E}_{s\sim W, s\in D_i}
                \left[ \left( \frac{g(s)O(s)}{W(s)} \right)^2 \right] 
            - \left(
                \mathbb{E}_{s\sim W, s\in D_i}
                \left[ \frac{g(s)O(s)}{W(s)} \right] 
            \right)^2 
        \right)\\
    & = \frac{1}{b_2^{(s)}} \sum_{i\notin \beta} 
        \left( 
            |D_i| \mathbb{E}\left[ \frac{g(s)^2 O(s)}{r_i W(s)} \right] 
            - \frac{1}{r_i} 
                \left( |D_i| \mathbb{E}\left[ g(s)O(s) \right] \right)^2 
        \right)
\end{align*}
where 
\begin{equation*}
    b_2^{(s)} = b_2 - \sum_{i \in \beta} |D_i|, 
    \quad r_i = \frac{\sum_{s \in D_i} W(s)}{\sum_{j\notin \beta} \sum_{s \in D_j} W(s)}
\end{equation*}

Assuming the independence of the \asum column $g(\cdot)$  and the oracle results 
$O(\cdot)$, we can further simplify the expression of MSE as follows.
\begin{align*}
& MSE_{\asum} = \frac{1}{b_2^{(s)}} \cdot \\
    & \left( 
    \mathbb{E}\left[g(s)^2\right]\sum_{i \notin \beta} 
    \left(|D_i| \mathbb{E}\left[\frac{O(s)}{r_i W(s)}\right]\right)
    - \frac{\mathbb{E}[g(s)]^2}{r_i} \sum_{i \notin \beta} 
    \left( |D_i| \mathbb{E}\left[O(s)\right] \right)^2 \right)
\end{align*}
We notice that $r_i$ unweights the proxy scores over a stratum into the proxy 
scores over the whole sampling regions. In this case, we merge $r_iW(s)$ as 
$W(s)$. Furthermore, we can simplify the sum of the expectations of strata into 
the expectation of the whole sampling region.
\begin{align*}
&MSE_{\asum}
= \frac{1}{b_2^{(s)}} \cdot \\
& \left( |D| \cdot \mathbb{E}\left[g(s)^2\right] \cdot 
    \mathbb{E}\left[\frac{O(s)}{W(s)}\right] - 
    \frac{\mathbb{E}[g(s)]^2}{r_i} \sum_{i \notin \beta} 
    \left(|D_i|\mathbb{E}\left[O(s)\right] \right)^2 \right)
\end{align*}
Next, we rewrite the expectations over all tuples with expectations over 
matching tuples by evaluating the oracle results $O(\cdot)$.
\begin{align}
& MSE_{\asum} \notag \\
&= \frac{1}{b_2^{(s)}} \left( 
    \mathbb{E}\left[g(s)^2\right] \cdot \left|\tilde{D}^{(s)}\right| \cdot 
    \mathbb{E}_{s\in \tilde{D}^{(s)}}\left[\frac1{W(t)}\right] 
        - \mathbb{E}[g(s)]^2 \sum_{i \notin \beta}  \frac{|\tilde{D}_i|^2}{r_i} \right) \notag \\
    & = \frac{\left|\tilde{D}^{(s)}\right|}{b_2^{(s)}} \left( 
        \mathbb{E}\left[g(s)^2\right]
        \mathbb{E}_{s\in \tilde{D}^{(s)}}\left[\frac1{W(t)}\right] 
        - \mathbb{E}[g(s)]^2 \sum_{i \notin \beta} q_i \left|\tilde{D}^{(s)}_i\right| \right) \notag
\end{align}
where
\begin{equation*}
    q_i = \frac{\left|\tilde{D}^{(s)}_i\right|}
        {\sum_{j \notin \beta} \left|\tilde{D}^{(s)}_j\right|} \bigg/ r_i
\end{equation*}

We rewrite the $MSE$ of \wwj for a \asum aggregate.
\begin{align*}
MSE_{\asum}^{(w)} & = \frac1b \left(
    \mathbb{E}_{s\sim W}\left[ \left( \frac{g(s)O(s)}{W(s)} \right)^2 \right] 
        - \left(\mathbb{E}_{s\sim W}\left[\frac{g(s)O(s)}{W(s)} \right] \right)^2 
        \right) \\
    & = \frac1b \left( 
        |D| \cdot \mathbb{E}\left[ \frac{g(s)^2 O(s)}{W(s)} \right] 
        - \left(|D| \cdot \mathbb{E}\left[ g(s)O(s) \right] \right)^2
    \right) \\
    & = \frac1b \left( 
        \mathbb{E}\left[ g(s)^2 \right] \cdot |D| \cdot 
        \mathbb{E}\left[\frac{O(s)}{W(s)}\right] 
        - |\tilde{D}|^2 \cdot \mathbb{E}[g(s)]^2 \right) \notag \\
    & = \frac{|\tilde{D}|}b \left( 
        \mathbb{E}\left[ g(s)^2 \right] 
        \mathbb{E}_{s\in \tilde{D}}\left[\frac1{W(t)}\right] 
        - \mathbb{E}[g(s)]^2|\tilde{D}| \right)
    \end{align*}
    
We take the ratio between the $MSE$ of \bas and that of \wwj. We assume the 
variance of \asum-column in the sampling region is the same as that for all 
tuples.
\begin{align*}
    & \frac{MSE_{\asum}}{MSE_{\asum}^{(w)}} \\
    & = \frac{|\tilde{D}^{(s)}|\big/ b_2^{(s)}}{|\tilde{D}|/b}
        \frac{ \mathbb{E}\left[g(s)^2\right]
            \mathbb{E}_{s\in \tilde{D}^{(s)}}\left[\frac1{W(s)}\right] 
            - \mathbb{E}[g(s)]^2 \sum_{i\notin \beta} q_i \left|\tilde{D}^{(s)}_i\right|}
        {\mathbb{E}\left[ g(s)^2 \right] 
            \mathbb{E}_{s\in \tilde{D}}\left[\frac1{W(s)}\right] 
            - \mathbb{E}[g(s)]^2|\tilde{D}|} \\
    & = \frac{|\tilde{D}^{(s)}|\big/ b_2^{(s)}}{|\tilde{D}|/b}
        \frac{ \mathbb{E}_{s\in \tilde{D}^{(s)}}\left[\frac1{W(s)}\right] 
            - \frac{\mathbb{E}[g(s)]^2}{\mathbb{E}[g(s)^2]} 
            \sum_{i \notin\beta} q_i \left|\tilde{D}^{(s)}_i\right|}
        {\mathbb{E}_{s\in \tilde{D}}\left[\frac1{P(s)}\right] 
            - \frac{\mathbb{E}[g(s)]^2}{\mathbb{E}[g(s)^2]}|\tilde{D}|} \\
    & = \frac{|\tilde{D}^{(s)}|\big/ b_2^{(s)}}{|\tilde{D}|/b}
        \frac{ \mathbb{E}_{s\in \tilde{D}^{(s)}}\left[\frac1{W(t)}\right] 
            - \frac{1}{1+CV_g} \sum_{i \notin \beta} q_i 
                \left|\tilde{D}^{(s)}_i\right|}
        {\mathbb{E}_{s\in \tilde{D}}\left[\frac1{W(s)}\right] 
            - \frac{1}{1+CV_g}|\tilde{D}|} \\
\end{align*}
where $CV_g$ is the coefficient of variation of the \asum-column $g(\cdot)$,
\begin{equation*}
CV_g = \frac{Var[g(s)]}{E[g(s)]^2}
\end{equation*}

We apply Holder's Inequality to obtain an upper bound to the effect of 
stratification on the MSE as follows.
\begin{align*}
&\sum_{i \notin \beta} q_i \left|\tilde{D}^{(s)}_i\right| 
= \frac{1}{|\tilde{D}^{(s)}|} \sum_{i \notin \beta} 
    \frac{\left|\tilde{D}^{(s)}_i\right|^2}{r_i}
= \frac{1}{|\tilde{D}^{(s)}|} \left(
    \sum_{i \notin \beta} \frac{\left|\tilde{D}^{(s)}_i\right|^2}{r_i}
    \right)\left(\sum_{i \notin \beta}r_i\right) \\
&\ge \frac{1}{|\tilde{D}^{(s)}|} \left(
    \sum_{i \notin \beta} \frac{\left|\tilde{D}^{(s)}_i\right|}{\sqrt{r_i}}\sqrt{r_i}
\right)^2 = \left|\tilde{D}^{(s)}\right| 
\end{align*}
Therefore, the ratio has the following upper bound.
\begin{align*}
\frac{MSE_{\asum}}{MSE_{\asum}^{(w)}} 
& \le \frac{|\tilde{D}^{(s)}|\big/ b_2^{(s)}}{|\tilde{D}|/b}
    \frac{ \mathbb{E}_{s\in \tilde{D}^{(s)}}\left[\frac1{W(s)}\right] 
        - \frac{1}{1+CV_g} |\tilde{D}^{(s)}|} 
    {\mathbb{E}_{s\in \tilde{D}}\left[\frac1{W(s)}\right] - \frac{1}{1+CV_g}|\tilde{D}|} \\
&= \frac{|\tilde{D}^{(s)}|^2\big/ b_2^{(s)}}{|\tilde{D}|^2/b} 
    \frac{ \mathbb{E}_{s\in \tilde{D}^{(s)}}
        \left[\frac{1/|\tilde{D}^{(s)}|}{W(s)}\right] - \frac{1}{1+CV_g}}
    {\mathbb{E}_{s\in \tilde{D}}\left[\frac{1/|\tilde{D}|}{W(s)}\right] 
    - \frac{1}{1+CV_g}} 
\end{align*}
If the following condition holds
\begin{equation}
\mathbb{E}_{s\in \tilde{D}^{(s)}}\left[\frac{1/|\tilde{D}^{(s)}|}{W(s)}\right] 
\le \mathbb{E}_{s\in \tilde{D}}\left[\frac{1/|\tilde{D}|}{W(s)}\right], 
\label{eq:supp-cond1}
\end{equation}
we have the following upper bound for the ratio
\begin{align*}
\frac{MSE_{\asum}}{MSE_{\asum}^{(IS)}} 
< \frac{|\tilde{D}^{(s)}|^2\big/ b_2^{(s)}}{|\tilde{D}|^2/b} 
    \frac{ \mathbb{E}_{s\in \tilde{D}^{(s)}}
        \left[\frac{1/|\tilde{D}^{(s)}|}{W(s)}\right]}
    {\mathbb{E}_{s\in \tilde{D}}\left[\frac{1/|\tilde{D}|}{W(s)}\right]}
\end{align*}
Furthermore, if the matching tuples are sparse in the sampling region, \ie,
\begin{equation}
    |\tilde{D}^{(s)}|^2\big/ b_2^{(s)} < |\tilde{D}|^2/b \label{eq:supp-cond2}
\end{equation}
Then,
\begin{equation*}
    MSE_{\asum} = C \cdot {MSE_{\asum}^{(IS)}}
\end{equation*}
where $C$ is a coefficient less than 1,
\begin{equation*}
    C < \frac{ \mathbb{E}_{s\in \tilde{D}^{(s)}}\left[\frac{1/|\tilde{D}^{(s)}|}{W(s)}\right]}{
        \mathbb{E}_{s\in \tilde{D}}\left[\frac{1/|\tilde{D}|}{W(s)}\right]} \le 1
\end{equation*}

We then apply the Theorem \ref{theom:opt} to replace the MSE of \bas with 
deterministic allocation with the MSE of \bas with pilot sampling. Namely, we 
have the following approximation:
\begin{align*}
    MSE_{\asum}(\hat\beta^*) &= MSE_{\asum}(\beta^*) \cdot 
        \left(1+\mathcal{O}\left(b_1^{-1/2}\right)\right) \\
        &= C \cdot {MSE_{\asum}^{(w)}}\left(1+\mathcal{O}\left(b_1^{-1/2}\right)\right) \notag
\end{align*}
Since $MSE_{\asum}^{(w)}$ converges to 0 at the rate 
$\mathcal{O}\left(b^{-1}\right)$. Therefore, we conclude that if the conditions 
\ref{eq:supp-cond1} and \ref{eq:supp-cond2} hold, \name outperforms \wwj 
asymptotically. Namely,
\begin{equation}
    MSE_{\asum}(\hat\beta^*) = C \cdot {MSE_{\asum}^{(IS)}} + 
    \mathcal{O}\left(b_1^{-1/2}b^{-1}\right) \label{eq:outperform-supp}
\end{equation}
On the other hand, if the conditions \ref{eq:supp-cond1} and \ref{eq:supp-cond2} 
do not hold, we can set $\beta=\emptyset$, which will make the sampling region 
become the entire data tuples. In this case, the upper bound of the ratio will 
be 1. Namely,
\begin{equation*}
    MSE_{\asum} \le MSE^{(IS)}_{\asum}
\end{equation*}
Taking Theorem \ref{theom:opt} into account, we will have the following upper 
bound for MSE of \bas.
\begin{align}
    MSE_{\asum} &\le MSE_{\asum} \left(1+\mathcal{O}\left(b_1^{-1/2}\right)\right) 
    \le MSE^{(w)}_{\asum}\left(1+\mathcal{O}\left(b_1^{-1/2}\right)\right) \notag \\
    &= MSE^{(w)}_{\asum} + \mathcal{O}\left(b_1^{-1/2}b^{-1}\right) \label{eq:match-supp}
\end{align}

To conclude, we have shown that the MSE of \bas either outperforms (Eq. 
\ref{eq:outperform-supp}) or matches (Eq. \ref{eq:match-supp}) that of 
\wwj asymptotically. 
\end{proof}

%% file: tex/appendix/selection.tex
\subsection{\bas for Selection Join Queries} \label{sec:app-select}
\begin{lemma}
    With a probability higher than $p$, we can achieve the overall recall target 
    $\gamma$ if $\gamma_s$ satisfies
    \begin{equation*}
        \gamma_s \ge \gamma - (1-\gamma)\frac{\acount_b}{\mathrm{UB}(\acount_s, Var[\acount_s], b, p)}
    \end{equation*}
    where
    \begin{equation*}
        \mathrm{UB}(\mu, \sigma^2, b, p) = \mu + \frac{\sigma}{\sqrt{b}}\sqrt{2\log \frac{2}{1-p}}
    \end{equation*}
    \label{lemma:select}
    \end{lemma}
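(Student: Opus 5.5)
The plan is to reduce the overall recall requirement to a deterministic inequality in which the only unknown is $\acount_s$, and then replace that unknown by a high-probability upper bound. I will adopt the natural semantics of the selection procedure. Every matching tuple in the blocking regime is returned, because the Oracle is executed exhaustively there. In the sampling regime, the procedure returns a set whose recall relative to the $\acount_s$ matching tuples of $D_s$ is at least $\gamma_s$. Here $\acount_b=\sum_{s\in D_b}O(s)$ is known exactly, and $\acount_s=\sum_{s\in D_s}O(s)$ is unknown. The overall recall is then
\begin{equation*}
R \;=\; \frac{\acount_b + \gamma_s'\,\acount_s}{\acount_b+\acount_s}, \qquad \gamma_s' \ge \gamma_s ,
\end{equation*}
so the target becomes $R\ge\gamma$.

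\textbf{Step 1 (deterministic reduction).} The condition $R\ge\gamma$ rearranges to $\acount_b(1-\gamma)\ge(\gamma-\gamma_s')\,\acount_s$. Equivalently,
\begin{equation*}
\gamma_s' \ge \gamma - (1-\gamma)\frac{\acount_b}{\acount_s}.
\end{equation*}
The right-hand side is increasing in $\acount_s$, since $1-\gamma\ge 0$ and $\acount_b\ge 0$. Hence if $U\ge\acount_s$ and $\gamma_s'\ge \gamma-(1-\gamma)\acount_b/U$, the displayed inequality holds, and therefore $R\ge\gamma$. The degenerate case $\acount_s=0$ is trivial because then $R=1$.

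\textbf{Step 2 (probabilistic upper bound on $\acount_s$).} I will show that $U=\mathrm{UB}(\acount_s,Var[\acount_s],b,p)$, evaluated with the estimate from the sampling regime, dominates the true $\acount_s$ with probability at least $p$. The estimator $\widehat\acount_s$ is an average of $b$ i.i.d.\ importance-weighted terms from \wwj, and it is unbiased. Applying a sub-Gaussian (Hoeffding/Bernstein-type) tail bound with per-sample variance $\sigma^2=Var[\acount_s]$ gives
\begin{equation*}
\mathbb{P}\bigl[|\widehat\acount_s-\acount_s|\ge t\bigr]\le 2\exp\!\left(-\frac{b t^2}{2\sigma^2}\right).
\end{equation*}
Setting the right-hand side equal to $1-p$ yields $t=\frac{\sigma}{\sqrt b}\sqrt{2\log\frac{2}{1-p}}$, which is exactly the slack in $\mathrm{UB}$. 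Consequently $\acount_s\le\widehat\acount_s+t$ with probability at least $p$.

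\textbf{Step 3 (combine).} On the event from Step 2, the monotonicity argument of Step 1 applies, so $R\ge\gamma$ whenever $\gamma_s$ satisfies the stated inequality.

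The main obstacle is Step 2. The importance-weighted summands $O(s)/W(s)$ are not obviously sub-Gaussian, so the $\sqrt{2\log(2/(1-p))}$ form needs justification. My first approach would be to invoke boundedness of the weights over the finite space $D_s$ and accept the sub-Gaussian proxy. Failing that, I would appeal to the CLT asymptotically, under which the Gaussian tail bound is conservative. A smaller issue is that the guarantee for the sampling regime's recall, $\gamma_s$, is itself probabilistic. I would handle this with a union bound, splitting the failure probability between that guarantee and Step 2, or else treat it as given at the stated level.
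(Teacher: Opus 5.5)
Your proposal follows the paper's proof. Like the paper, you rewrite the recall requirement as $\gamma_s \ge \gamma - (1-\gamma)C_b/C_s$, where $C_b, C_s$ are the blocking- and sampling-regime match counts, note that the right-hand side increases in $C_s$, and substitute a high-probability upper bound with slack $\frac{\sigma}{\sqrt{b}}\sqrt{2\log\frac{2}{1-p}}$. The paper justifies that slack only through the normal approximation, which is your CLT fallback and the route you should actually rely on: Hoeffding and Bernstein bounds depend on the range of the importance-weighted terms, so they do not give a sub-Gaussian tail with the variance alone as proxy. (Your choice to bound the true $C_s$ by the estimate plus slack is, if anything, cleaner than the paper's phrasing, which bounds the estimate by the true mean.)
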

\begin{proof}
We first show the upper bound of the number of matching tuples in the sampling 
region $\widehat{\acount}_s$. Then, the required recall target $\gamma_s$ of the 
sampling region follows automatically.

Given an i.i.d. sample of size $n$ drawn from a population with mean $\mu$ and 
finite and non-zero variance $\sigma^2$, the upper bound of the sample mean can 
be estimated with normal approximation \cite{supg,aqp-normal-approx}. Namely, 
\begin{equation*}
    \mathbb{P}\left[\hat\mu \ge  \mu + \frac{\sigma}{\sqrt{n}}\sqrt{2\log \frac{2}{1-p}}\right] \le \frac{1-p}{2}
\end{equation*}
where $p$ is the confidence.

We rewrite the recall target of the sampling region $\gamma'_s$ with the overall 
recall target $\gamma$ specified by the user.
\begin{equation*}
    \gamma_s = 
    \frac{\gamma\left(\widehat{\acount}_s+\widehat{\acount}_b\right) 
        -\widehat{\acount}_b}
    {\widehat{\acount}_s} = \gamma - (1-\gamma)\frac{\widehat{\acount}_b}{\widehat{\acount}_s}
\end{equation*}
We observe that $\gamma_s$ is monotonically increasing with respect to 
$\widehat{\acount}_s$. Therefore, we use the upper bound of $\widehat{\acount}_s$
to estimate the required $\gamma_s$ such that the user-specified overall recall 
target can be guaranteed with high probability.
\end{proof}

%% file: tex/appendix/bootstrap.tex
\subsection{Bootstrap Validity} \label{sec:app-bootstrap}
In this section, we show that bootstrap resampling is asymptotically valid for
\bas, which is inspired by \cite{abae}.

\minihead{Overview} We summarize our approach as follows. Suppose that the sampling
procedure with CDF $F_0$ leads to a sequence of random variables 
$X_1, \ldots, X_N$, and $\theta$ is a statistical functional of $F_0$. If 
$\theta$ is Hadamard differentiable, the delta method of \cite{asymptotic} will
imply the validity of using bootstrap resampling for the CIs of $\theta$. In the
rest of this section, we first define $\theta$ for \bas and derive the Hadamard
differentiability of $\theta$.

\minihead{Notation} We first define the notations for the derivation. Suppose 
$X_i = (P_i, Y_i)$ is a tuple drawn from the cross product of tables, 
where $P_i \in \{0, 1\}$ indicates whether the tuple satisfy the join condition,
$Y_i$ is the value to be \asum-ed or \aavg-ed. We can define the 
statistical function $\theta$ for \acount, \asum, and \aavg as follows:
\begin{align*}
    &\theta_{f,\acount}  = \sum_{i} \frac{P_i}{N \cdot f(X_i)}, \quad
    \theta_{f,\asum} = \sum_{i} \frac{P_i \cdot Y_i}{N \cdot f(X_i)}, \\
    &\theta_{f,\aavg} = \frac{\sum_{i} \frac{P_i \cdot Y_i}{N \cdot f(X_i)}}{\sum_{i} \frac{P_i}{N \cdot f(X_i)}} 
\end{align*}
where $f(\cdot)$ is a deterministic distribution denoting the sampling probability.

\begin{theorem}
Let $\dot{\theta}_{f,F_0,\texttt{AGG}}\left(\mathbb{B}\right)$ be the converged 
distribution of the difference between the empirical and statistical distribution:
\begin{equation*}
    \sqrt{b}\left(\theta_{f, \texttt{AGG}}\left(\hat F_b\right) - 
    \theta_{f, \texttt{AGG}}\left(F_0\right)\right)
    \xrightarrow{\mathcal{D}}
    \dot{\theta}_{f,F_0,\texttt{AGG}}\left(\mathbb{B}\right)
\end{equation*}
where $b$ is the sample size, $F_0$ is the CDF of the population, and $\hat F_b$
is the empirical CDF of the observed sample.
For any fixed sampling probability $f$ and aggregates 
$\texttt{AGG} \in \{\acount, \asum, \aavg\}$, we have the following convergence 
in distribution:
\begin{equation*}
    \sqrt{b} \left(\theta_{f, \texttt{AGG}}\left(\hat F_b^*\right) - 
    \theta_{f, \texttt{AGG}}\left(\hat F_b\right)\right) \xrightarrow{\mathcal{D}}
    \dot{\theta}_{f,F_0,\texttt{AGG}}\left(\mathbb{B}\right)
\end{equation*}
where $\hat F_b^*$ is the resampled CDF. Namely, the bootstrap is valid.
\end{theorem}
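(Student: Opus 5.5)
We follow the route sketched in the overview: write each aggregate as a functional of the sampling distribution and invoke the functional delta method together with bootstrap consistency, i.e.\ Theorem 23.9 of \cite{asymptotic}. That theorem says the following. If $\sqrt{b}(\hat F_b - F_0) \rightsquigarrow \mathbb{G}_{F_0}$ in a suitable normed space, the empirical bootstrap process $\sqrt{b}(\hat F_b^* - \hat F_b)$ converges conditionally to the same limit, and $\theta$ is Hadamard differentiable at $F_0$ tangentially to the support of $\mathbb{G}_{F_0}$, then the bootstrap of $\theta$ is consistent. The limit is then $\dot\theta_{F_0}(\mathbb{G}_{F_0})$, which is the $\dot{\theta}_{f,F_0,\texttt{AGG}}(\mathbb{B})$ of the statement.

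\textbf{Step 1: rewrite the estimators as functionals of the empirical measure.} Fix the deterministic sampling probability $f$ and define $h_1(x)=p/f(x)$ and $h_2(x)=p\,y/f(x)$ for $x=(p,y)$. Then
\[
\theta_{f,\acount}(F)=\int h_1\,dF,\qquad \theta_{f,\asum}(F)=\int h_2\,dF,\qquad \theta_{f,\aavg}(F)=\frac{\int h_2\,dF}{\int h_1\,dF}.
\]
Evaluating these at $\hat F_b$ gives exactly the normalized importance-sampling sums, up to the constant factor of the population size. I will assume $f$ is bounded below on the support, which holds because similarity weights are positive and the strata are finite, and that $Y$ has a finite second moment. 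Under these assumptions $h_1$ and $h_2$ are square-integrable, and the class $\{h_1,h_2\}$, being finite, is $F_0$-Donsker. Hence $\sqrt{b}(\hat F_b-F_0)$ converges in $\ell^\infty(\{h_1,h_2\})$ to a Brownian bridge $\mathbb{B}$. By the bootstrap empirical-process theorem (Theorem 23.7 of \cite{asymptotic}), the bootstrap process converges conditionally to the same $\mathbb{B}$ in probability.

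\textbf{Step 2: Hadamard differentiability.} The maps $F\mapsto \int h_j\,dF$ are continuous and linear on $\ell^\infty(\{h_1,h_2\})$, so each is its own Hadamard derivative. For \aavg, $\theta$ is the composition of this linear map into $\mathbb{R}^2$ with $\phi(u,v)=v/u$. The map $\phi$ is Fr\'echet differentiable at $(\int h_1dF_0,\int h_2dF_0)$ provided $\int h_1\,dF_0=\acount>0$, which is the only place the hypothesis of a nonempty join is needed. By the chain rule for Hadamard derivatives, the derivative is
\[
\dot\theta(H)=\frac{H(h_2)}{\acount}-\frac{\asum\,H(h_1)}{\acount^2}.
\]
The delta method (Theorem 20.8) then gives the first displayed convergence, and Theorem 23.9 gives the bootstrap convergence with the same limit.

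\textbf{Step 3: extension to the actual \bas estimator.} \bas is a stratified combination. The blocking part is deterministic, so it adds a constant and contributes no variance. The sampling part is a finite sum of independent stratum functionals, each as above with its own $f$ restricted to the stratum and its sample size $n_i$ proportional to $b$. For a fixed allocation $\beta$ and fixed budget fractions, the joint process over strata converges to independent bridges, and the sum or ratio is again Hadamard differentiable, so the argument carries over.

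I expect the main obstacle to be the data dependence of the allocation $\hat\beta^*$ and the pooling of the pilot and second-stage samples, which is exactly why i.i.d.\ theory does not apply directly. I would handle it as follows. The candidate set of allocations is finite, and Theorem~\ref{theom:opt} shows that $\hat\beta^*$ achieves the optimal MSE with probability tending to one. I would strengthen this to $\mathbb{P}[\hat\beta^*\in\mathcal{B}^*]\to 1$ for the set $\mathcal{B}^*$ of minimizers, assuming the minimizer is unique or that ties are broken consistently. On that event the allocation is deterministic, and Steps 1--3 apply conditionally. The pilot sample is a $b_1/b$ fraction drawn with the same $f$ within each stratum, so pooling it only changes the stratum sample sizes by a fixed proportion and leaves the limit unchanged. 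An event of vanishing probability does not affect convergence in distribution, which completes the argument.
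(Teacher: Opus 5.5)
Your Steps 1--2 are the paper's proof: reduce to Theorem 23.9 of \cite{asymptotic}, then verify Hadamard differentiability of the two linear functionals and, by the chain rule with a nonzero \texttt{COUNT} for a nonempty join, of their ratio. You also check the conditional bootstrap-process hypothesis of that theorem (finite Donsker class plus Theorem 23.7), which the paper leaves implicit. Step 3 and the data-dependent-allocation discussion go beyond what this fixed-$f$, i.i.d.\ statement requires and beyond what the paper proves; if you keep them, state that resampling is done independently within each stratum, since resampling the pooled sample does not in general reproduce the stratified limit.
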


\noindent\begin{proof}
According to Theorem 23.9 of \cite{asymptotic}, it is sufficient to show that
$\theta_{f, \texttt{AGG}}$ is Hadamard differentiable.

For any $f$ with $\sum_{i}\frac{P_i}{f(X_i)} \ge 0$, $\theta_{f, \acount}$ is 
Hadamard differentiable. For any $f$ with 
$\sum_{i}\frac{P_i \cdot Y_i}{f(X_i)} \ge 0$, $\theta_{f, \asum}$ is also 
Hadamard differentiable. For any join with non-empty results, 
$\sum_{i} \frac{P_i}{N \cdot f(X_i)}$ is non-zero. Then, we apply the chain rule 
(Theorem 20.9, \cite{asymptotic}) and show that $\theta_{f, \acount}^{-1}$ is 
Hadamard differentiable. We then apply the product rule, implying that 
$\theta_{f, \aavg}$ is Hadamard differentiable.

\end{proof}

%% file: tex/appendix/query_semantics.tex
\begin{table*}[t!]
    \centering
    \begin{tabular}{lllllll}
        \toprule
        Dataset       & Modality    & Table size                           & Cross product size   & Positive rate        & Oracle      & Embedding model \\ %
        \midrule
        \dcompany     & text        & $1.0 \times 10^4$, $1.0 \times 10^4$ & $1.0 \times 10^8$    & $3.6 \times 10^{-5}$ & Human label & MiniLM \cite{model-minilm}      \\ %
        \dquora       & text        & $6.0 \times 10^4$                    & $3.6 \times 10^9$    & $1.1 \times 10^{-6}$ & Human label & MiniLM \cite{model-minilm}      \\ %
        \dwebmasters  & text        & $1.8 \times 10^4$                    & $3.2 \times 10^8$    & $4.6 \times 10^{-5}$ & Human label & MiniLM \cite{model-minilm}      \\ %
        \droxford     & image       & $70$, $1.0 \times 10^6$ & $7.0 \times 10^7$    & $7.8 \times 10^{-5}$ & Human label & SuperGlobal \cite{shao2023global}      \\ %
        \dveri        & image       & $2.1 \times 10^4$, $2.8 \times 10^4$ & $6.0 \times 10^8$    & $1.4 \times 10^{-3}$ & Human label & TransReID \cite{model-transreid}  \\ %
        \dflickr      & text, image & $3.2 \times 10^4$, $1.6 \times 10^5$ & $5.1 \times 10^9$    & $3.1 \times 10^{-5}$ & Human label & BLIP \cite{model-blip}       \\ %
        Ecomm-Q7      & text        & $8.1 \times 10^3$                    & $6.5 \times 10^7$    & $8.1 \times 10^{-3}$ & Human label & MiniLM \cite{model-minilm} \\
        Ecomm-Q8      & text, image & $3.0 \times 10^2$, $4.4 \times 10^4$ & $1.3 \times 10^7$    & $2.3 \times 10^{-5}$ & Human label & CLIP \cite{clip} \\
        Ecomm-Q9      & image       & $6.1 \times 10^3$                    & $3.7 \times 10^7$    & $2.4 \times 10^{-2}$ & Human label & CLIP \cite{clip} \\
        Ecomm-Q10     & text, image & $4.6 \times 10^2$, $9.0 \times 10^2$, $4.2 \times 10^3$ & $1.7 \times 10^9$ & $2.1 \times 10^{-5}$ & Human label & CLIP \cite{clip} \\
        Ecomm-Q11     & text, image & $370$, $600$, $2.0 \times 10^3$, $2.9 \times 10^3$ & $1.3 \times 10^{12}$ & $5.6 \times 10^{-6}$ & Human label & CLIP \cite{clip} \\
        Movie-Q5,Q6   & text        & $7.6 \times 10^2$                    & $5.7 \times 10^5$    & $5.0 \times 10^{-1}$   & Human label & Flair \cite{akbik2019flair} \\
        \dcompany-Scale & text & $1.0 \times 10^4, 3 \times 10^3, 300, 25, 25, 20$ & $1.1 \times 10^{14}$ & $3.0 \times 10^{-10}$ & Human label & MiniLM \cite{model-minilm} \\
        \droxford-Scale & image & $70, 1.0 \times 10^{7}$ & $7 \times 10^{8}$ & $7.8 \times 10^{-5}$ & Human label & SuperGlobal \cite{shao2023global} \\
        Syn           & N/A         & $1.0 \times 10^4$, $1.0 \times 10^4$ & $1.0 \times 10^8$    & $1 \times 10^{-4}$ & Synthetic & Synthetic \cite{supg}      \\ %
        \bottomrule
    \end{tabular}
    \caption{Summary of datasets, Oracle, and embedding models.}
    \label{tab:datasets}
\end{table*}

\section{Datasets and Queries}

In this section, we introduce the construction details of the datasets and 
queries used in evaluation.

\subsection{Dataset Collection and Construction}
\label{sec:app-dataset}

\minihead{Quora} 
Widely used to evaluate the sentence paraphrasing task in the NLP community, the 
Quora dataset contains single-sentence questions from the forum quora.com, 
labeled by human \cite{dataset-quora,dataset-quora-work}. We repurpose it to a 
paraphrasing detection task with the self-join operation.

\minihead{WebMaster} 
Collected for knowledge mining tasks and studied for duplication detection tasks, 
the WebMaster dataset contains long posts from the network question partition of 
the QA forum stackexchange.com for developers, labeled by human 
\cite{dataset-stackoverflow,dataset-stackoverflow-work1}. We repurpose it to a 
duplication detection task.

\minihead{Company} 
As widely used to evaluate the EM systems, the Company dataset has two lists of 
long descriptions about companies, where the matched companies between the lists 
are labeled by human \cite{dataset-company-work1,dataset-company-work2}. We use 
it as a two-table join task. Additionally, we split the two tables in six tables
with sizes: 20, 25, 25, 300, 3,000, and 10,000, simulating the join ``barebones'' 
of TPC-H Q7---the largest join evaluated by Wander Join \cite{structured-join-index}.

\minihead{VeRi}
Proposed to evaluate the vehicle re-identification algorithms, the VeRi dataset 
contains images of vehicles captured by multiple cameras on different crossroads, 
where the identical vehicles are labeled by human \cite{veri1,veri2}. We 
repurpose it to a two-table join task.

\minihead{Flickr30K} 
Proposed to evaluate the image-to-sentence models, the Flickr30K dataset 
contains the image entities and sentence mentions from the Flickr.com, where 
the matched images and sentences are labeled by human \cite{dataset-flickr30k}. 
We repurpose it to a two-table join task.

\minihead{Roxford}
Proposed to evaluate object retrieval and visual search algorithms, the Roxford 
(Oxford Buildings) dataset consists of images depicting famous architectural 
landmarks in Oxford collected from Flickr, where images representing the same 
building are labeled by human \cite{radenovic2018revisiting}. We repurpose it to 
a two-table join task for visual landmark identification. Additionally, we 
uniformly scale up the dataset to construct a query involving 10 million rows.

\minihead{SemBench}
As a benchmark for semantic query processing engines, SemBench contains semantic
queries from six application domains. From the queries involving join operations,
we collected ones that have at least one join predicate about the joint semantics
across tables. When there are semantic operators other than join (e.g., filtering),
we preprocess these in order to isolate the evaluation of joins. Additionally, 
we introduced two modified versions of queries to cover \agroupby and \atopk 
queries.

\minihead{Syn}
Designed to stress-test join algorithms under varying noise levels, this dataset 
consists of two tables of size 10,000 with synthetic similarity scores. We 
assign ground-truth matches randomly based on selectivity and sample scores from 
distinct Beta distributions to distinguish matches ($Beta(5, 0.5)$) from 
non-matches ($Beta(0.5, 5)$), following prior work \cite{supg}. We subsequently 
inject noise by inverting the score distributions for subsets of pairs to 
simulate precise False Positive and False Negative rates.

\subsection{Query Semantics}
\label{sec:app-query}

We define the semantics of the evaluated queries as follows, utilizing domain-specific semantic predicates.

\minihead{Company (Entity Resolution)} Estimating the number of overlapping entities between two distinct corporate data sources (e.g., Wikipedia vs. DBPedia) based on textual descriptions. 
\begin{lstlisting}[keywords={SELECT,AVG,FROM,JOIN,ON,NL,WHERE}]
SELECT COUNT(*) 
FROM wiki_companies AS c1 
JOIN dbpedia_companies AS c2 
ON NL(`Company 1: {c1.description} and Company 2: {c2.description} are the same company.') 
\end{lstlisting}

\minihead{Quora (Paraphrase Detection)} Analyzing data quality by computing the average difference in character length between questions marked as semantic duplicates. 
\begin{lstlisting}[keywords={SELECT,AVG,FROM,JOIN,ON,NL,WHERE}]
SELECT AVG(ABS(q1.char_len - q2.char_len)) 
FROM quora_questions AS q1 
JOIN quora_questions AS q2 
ON NL('Question 1: {q1.text} and Question 2: {q2.text} are paraphrased from each other.') 
WHERE q1.id < q2.id 
\end{lstlisting}

\minihead{Webmasters (Redundancy Impact)} 
Calculating the total community effort (number of answers) wasted on duplicate forum posts.
\begin{lstlisting}[keywords={SELECT,AVG,FROM,JOIN,ON,NL,WHERE}]
SELECT SUM(p1.answer_count) 
FROM forum_posts AS p1 
JOIN forum_posts AS p2 
ON NL('Post 1: {p1.post} and Post 2: {p2.post} are duplicated.')
WHERE p1.id != p2.id 
\end{lstlisting}

\minihead{Roxford (Visual Search)} 
Counting the number of images in a gallery that depict the same building/landmark as the query images. 
\begin{lstlisting}[keywords={SELECT,AVG,FROM,JOIN,ON,NL}]
SELECT COUNT(*) 
FROM oxbuild_gallery AS g 
JOIN oxbuild_queries AS q 
ON NL('Two images show the same landmark: {g.image} {q.image}') 
\end{lstlisting}

\minihead{Flickr-30K (Cross-Modal Retrieval)} 
Measuring the alignment between an image dataset and a caption dataset by counting valid image-text pairs. 
\begin{lstlisting}[keywords={SELECT,AVG,FROM,JOIN,ON,NL}]
SELECT COUNT(*) 
FROM flickr_images AS img 
JOIN flickr_captions AS cap 
ON NL('Caption {cap.text} describes the image {img.image}') 
\end{lstlisting}

\minihead{VeRi (Traffic Analytics)} 
Calculating the average transit time for vehicles traveling between two different camera locations (Re-ID). 
\begin{lstlisting}[keywords={SELECT,AVG,FROM,JOIN,ON,NL}]
SELECT AVG(cam2.timestamp - cam1.timestamp) 
FROM surveillance_cam_A AS cam1 
JOIN surveillance_cam_B AS cam2 
ON NL('Two images show the same vehicle: {cam1.image} {cam2.image}') 
\end{lstlisting}

\minihead{Ecomm-Q7}
Counting matching pairs for budget-friendly fashion items based on strict brand and category alignment. 
\begin{lstlisting}[keywords={SELECT,AVG,FROM,JOIN,ON,NL,WHERE}] 
SELECT COUNT(*) 
FROM fashion_styles AS t1 
JOIN fashion_styles AS t2 
ON NL('Product 1 {t1.text} and Product 2 {t2.text} have the same brand and type.')
WHERE t1.price <= 500 AND t2.price <= 500 
\end{lstlisting}

\minihead{Ecomm-Q8}
Calculating the minimum price of the product which has an image and have extensive
textual descriptions (exceeding 3,000 characters).
\begin{lstlisting}[keywords={SELECT,AVG,FROM,JOIN,ON,NL,WHERE}] 
SELECT t.id, i.id 
FROM fashion_text AS t 
JOIN fashion_images AS i 
ON NL('The product description: {t.text} describes the image {i.image}')
WHERE LENGTH(t.description) >= 3000 
\end{lstlisting}

\minihead{Ecomm-Q9}
Calculating the median price difference of two budget-friendly (< 800), 
single-colored fashion items that share identical primary color and category. 
\begin{lstlisting}[keywords={SELECT,AVG,FROM,JOIN,ON,NL,WHERE,AND,IS,NULL,IN}] 
SELECT MEDIAN(ABC(t1.price - t2.price))
FROM fashion_items AS t1 
JOIN fashion_items AS t2 
ON NL('Two images show the products with the same color and category: {t1.image} {t2.image}')
WHERE t1.price < 800 AND t1.baseColour IN ('Black', 'Blue', 'Red', 'White', 'Orange', 'Green')
\end{lstlisting}

\minihead{Ecomm-Q10}
Counting the number of coherent three-piece outfits (Footwear, Bottomwear, 
Topwear) by linking budget-friendly items (< 1000) that share the same brand and 
base color. 
\begin{lstlisting}[keywords={SELECT,FROM,JOIN,ON,WHERE,AND,IN}]
SELECT COUNT(*)
FROM footwear AS s 
JOIN bottomwear AS b 
ON NL('Two images show the products with the same brand and color: {s.image} {b.image}')
JOIN topwear AS t 
ON NL('Two images show the products with the same brand and color: {b.image} {t.image}')
WHERE s.price <= 1000 AND s.baseColour IN ('Black', 'Blue', 'Red', 'White') 
\end{lstlisting}

\minihead{Ecomm-Q10a}
Counting the number of coherent three-piece outfits (Footwear, Bottomwear, 
Topwear) by linking budget-friendly items (< 1000) that share the same brand and 
base color and grouping the results by color.
\begin{lstlisting}[keywords={SELECT,FROM,JOIN,ON,WHERE,AND,IN,GROUP,BY}]
SELECT COUNT(*)
FROM footwear AS s 
JOIN bottomwear AS b 
ON NL('Two images show the products with the same brand and color: {s.image} {b.image}')
JOIN topwear AS t 
ON NL('Two images show the products with the same brand and color: {b.image} {t.image}')
WHERE s.price <= 1000 AND s.baseColour IN ('Black', 'Blue', 'Red', 'White') 
GROUP BY s.baseColour
\end{lstlisting}

\minihead{Ecomm-Q11}
Counting complete four-piece outfits (Footwear, Bottomwear, Topwear, Accessories) 
by aligning brands, requiring a monochromatic black aesthetic for the clothing 
and a budget constraint ($< 500$) for the accessories.
\begin{lstlisting}[keywords={SELECT,FROM,JOIN,ON,WHERE,AND,IN}]
SELECT COUNT(*)
FROM footwear AS s
JOIN bottomwear AS b 
ON NL('Two images show the products with the same brand: {s.image} {b.image}')
JOIN topwear AS t 
ON NL('Two images show the products with the same brand: {b.image} {t.image}')
JOIN accessories AS a 
ON NL('Two images show the products with the same brand: {t.image} {a.image}')
WHERE s.baseColour = 'Black'
AND b.baseColour = 'Black'
AND t.baseColour = 'Black'
AND a.price <= 500
AND a.subCategory IN ('Bags', 'Jewellery', 'Watches')
\end{lstlisting}

\minihead{Ecomm-Q11a}
Find the Top-5 most frequent fabrics used in $<\mathdollar500$ topwear cloth 
that can pair with footwear, bottomwear, and accessories (Bags, Jewellerys, or
Watches) from the same brand.
\begin{lstlisting}[keywords={SELECT,FROM,JOIN,ON,WHERE,AND,IN,GROUP,BY,ORDER,LIMIT}]
SELECT t.fabric
FROM footwear AS s
JOIN bottomwear AS b 
ON NL('Two images show the products with the same brand: {s.image} {b.image}')
JOIN topwear AS t 
ON NL('Two images show the products with the same brand: {b.image} {t.image}')
JOIN accessories AS a 
ON NL('Two images show the products with the same brand: {t.image} {a.image}')
WHERE s.baseColour = 'Black'
AND b.baseColour = 'Black'
AND t.baseColour = 'Black'
AND a.price <= 500
AND a.subCategory IN ('Bags', 'Jewellery', 'Watches')
GROUP BY t.fabric
ORDER BY COUNT(*) 
LIMIT 5
\end{lstlisting}

\minihead{Movie-Q5}
Computing the average score sum of each pair of reviews for the movie ``Ant-Man 
and the Wasp: Quantumania'' that share the same sentiment classification. 
\begin{lstlisting}[keywords={SELECT,FROM,JOIN,ON,WHERE,AND}]
SELECT AVG(r1.score + r2.score) 
FROM movie_reviews AS r1 JOIN movie_reviews AS r2 
ON NL('Review 1 {r1.text} and Review 2 {r2.text} share the same semtiment')
WHERE r1.movie_id = 'ant_man_and_the_wasp_quantumania' 
\end{lstlisting}

\minihead{Movie-Q6}
Computing the maximum score sum of each pair of reviews for the movie ``Ant-Man 
and the Wasp: Quantumania'' that have the opposite sentiment classification. 
\begin{lstlisting}[keywords={SELECT,FROM,JOIN,ON,WHERE,AND}]
SELECT MAX(r1.score + r2.score) 
FROM movie_reviews AS r1 JOIN movie_reviews AS r2 
ON NL('Review 1 {r1.text} and Review 2 {r2.text} share the opposite semtiment')
WHERE r1.movie_id = 'ant_man_and_the_wasp_quantumania' 
\end{lstlisting}